\let\origcap\cap
\let\origcup\cup
\let\origvarnothing\varnothing
\let\cap\origcap
\let\cup\origcup
\let\varnothing\origvarnothing
\definecolor{darkblue}{rgb}{0.1,0.1,0.8}
\definecolor{brickred}{rgb}{0.8, 0.25, 0.33}
\definecolor{DarkGreen}{rgb}{0,0.6,0}
\newtheorem{theorem}{Theorem}
\newtheorem{lemma}[theorem]{Lemma}
\newtheorem{proposition}[theorem]{Proposition}
\newtheorem{corollary}[theorem]{Corollary}
\newtheorem{definition}[theorem]{Definition}
\newtheorem{remark}[theorem]{Remark}
\newtheorem{example}[theorem]{Example}
\newenvironment{subproof}[1][Proof of claim]{%
  \par\addvspace{\parskip}
  \noindent\textit{#1.}\ 
}{%
  \hfill$\square$\par\addvspace{\parskip}
}
\def\old@comma{,}
    \old@comma\discretionary{}{}{}%
\newcommand{\supp}{\mathrm{supp}}
\DeclareMathOperator*{\argmax}{argmax}
\newcommand{\tP}{\tilde{P}}
\newcommand{\tQ}{\tilde{Q}}
\newcommand{\bQ}{\widebar{Q}}
\newcommand{\tPx}{\tP_{X^n|\mC}}
\newcommand{\tPy}{\tP_{Y^n|\mC}}
\newcommand{\tPyQ}{\tP_{Y^n|\mC\sim q}}
\newcommand{\bW}{\widebar{W}}
\newcommand{\bV}{\widebar{V}}
\newcommand{\by}{\widebar{y}}
\newcommand{\hy}{\hat{y}}
\newcommand{\hY}{\hat{Y}}
\newcommand{\carX}{|\mX|}
\newcommand{\carY}{|\mY|}
\newcommand{\mT}{\mathcal{T}}
\newcommand{\mP}{\mathcal{P}}
\newcommand{\mQ}{\mathcal{Q}}
\newcommand{\mX}{\mathcal{X}}
\newcommand{\mY}{\mathcal{Y}}
\newcommand{\mM}{\mathcal{M}}
\newcommand{\mA}{\mathcal{A}}
\newcommand{\mB}{\mathcal{B}}
\newcommand{\mG}{\mathcal{G}}
\newcommand{\mC}{\mathcal{C}}
\newcommand{\mV}{\mathcal{V}}
\newcommand{\mS}{\mathcal{S}}
\newcommand{\mE}{\mathcal{E}}
\newcommand{\mD}{\mathcal{D}}
\newcommand{\mJ}{\mathcal{J}}
\newcommand{\hmY}{\hat{\mY}}
\newcommand{\iQ}{\iota(Q_{XY})}
\newcommand{\Eu}{\overline{E}}
\newcommand{\El}{\underline{E}}
\newcommand{\Ga}{\mathit{\Gamma}}
\newcommand{\Gu}{\overline{\Ga}}
\newcommand{\Gl}{\underline{\Ga}}
\newcommand{\Gll}{\underline{\underline{\Ga}}}
\newcommand{\su}{\mathrm{uni}}
\newcommand{\sn}{\mathrm{non}}
\newcommand{\sr}{{-\infty}}
\newcommand{\rc}{\mathrm{rc}}
\newcommand{\nl}{\mathrm{nl}}
\newcommand{\lcm}{\mathrm{lcm}}
\newcommand{\Pe}{\mathrm{Pr_e}}
\newcommand{\minI}{\min_{P_X\in\mS}I(P_X;W)}
\newcommand{\maxI}{\max_{P_X\in\mS}I(P_X;W)}
\newcommand{\TVinEq}{\frac12 \left\|\tPy - P_Y^n \right\|_1}
\newcommand{\TVinText}{\frac12 \big\|\tPy - P_Y^n \big\|_1}
\newcommand{\TVQinText}{\frac12 \big\|\tPyQ - P_Y^n \big\|_1}
\newcommand{\EcTVinEq}{\frac12 \mathbb{E}_{\mC} \left\|\tP_{Y^n|\mC} - P_Y^n\right\|_1}
\newcommand{\equiva}{\ref{lem-equiv}\hyperlink{lem-equiv-a}{(a)}}
\newcommand{\equivb}{\ref{lem-equiv}\hyperlink{lem-equiv-b}{(b)}}
\newcommand{\addtvt}{\ref{lem-Jab}\hyperlink{lem-Jab-add}{(c)}}
\newcommand{\postvt}{\ref{lem-Jab}\hyperlink{lem-Jab-pos}{(b)}}
\begin{document}
\title{On the Strong Converse Exponent and Error Exponent of the Classical Soft Covering\vspace{2ex}}

\author{%
 \IEEEauthorblockN{Xingyi He and S. Sandeep Pradhan}\\
 \IEEEauthorblockA{Department of Electrical Engineering and Computer Science, University of Michigan, USA\\
 Email: \{xingyihe, pradhanv\}@umich.edu} \\
 \vspace{1.5em}
 \IEEEauthorblockN{Andreas Winter} \\
 \IEEEauthorblockA{Department Mathematik/Informatik-Abteilung Informatik, Universit\"at zu K\"oln, K\"oln, Germany \\
 ICREA \& Universitat Aut\`onoma de Barcelona, Barcelona, Spain\\ 
 Institute for Advanced Study, Technische Universit\"at M\"unchen, Garching, Germany\\
    Email: andreas.winter@uni-koeln.de}
\thanks{A preliminary version of this work was presented in part at the 2025 IEEE International Symposium on Information Theory (ISIT), 22-27 June 2025, Ann Arbor MI, doi:10.1109/ISIT63088.2025.11195309 \cite{HePradhanWinter:ISIT}.}
\thanks{XH's and SSP's work was supported in part by NSF grant CCF-2132815. AW's work was supported by the European Commission QuantERA project ExTRaQT (Spanish MICIN grant no.~PCI2022-132965); by the Spanish MICIN (project PID2022-141283NB-I00) with the support of FEDER funds; by the Spanish MICIN with funding from European Union NextGenerationEU (PRTR-C17.I1) and the Generalitat de Catalunya; by the Spanish MTDFP through the QUANTUM ENIA project: Quantum Spain, funded by the European Union NextGenerationEU within the framework of the ``Digital Spain 2026 Agenda''; by the Alexander von Humboldt Foundation; and by the Institute for Advanced Study of the Technical University Munich.}}

\thispagestyle{empty} 

\maketitle

\vspace{-1.5\baselineskip}

\begin{abstract}
    This paper establishes the exact strong converse exponent of the soft covering problem in the classical setting. This exponent characterizes the slowest achievable convergence speed of the total variation to one when a code of rate below mutual information is applied to a discrete memoryless channel for synthesizing a product output distribution. The proposed exponent is expressed through a new two-parameter information quantity, differing from the more commonly studied R\'enyi divergence or R\'enyi mutual information. In addition, we demonstrate the non-tightness of random coding for rates both below and above mutual information. Discussions on the latter start with noiseless channels, where we develop a deterministic code construction that outperforms random codes in error exponents. We further observe that the conventional formulation, which assumes a uniform distribution over messages, inherently introduces a discrepancy in error exponents depending on whether the components of the target distribution are rational or irrational numbers. To eliminate this discrepancy, we propose a new formulation in which messages are allowed to be distributed non-uniformly, and the rate is 
    given by the logarithm of the smallest nonzero message probability (corresponding to R\'enyi entropy $H_{-\infty}$ of order $-\infty$). The exact error exponent is characterized in this formulation for noiseless channels. Furthermore, for noisy channels, we provide a high-rate improvement in achievability and derive a converse bound on the error exponent.
\end{abstract}

\section{Introduction}
\label{sec:intro}

The soft covering lemma is a fundamental lemma used in various information-theoretic problems, such as channel resolvability, channel simulation, and lossy source coding. It first appeared in \cite[Thm.~6.3]{wyner1975common} where Wyner used the normalized (with a factor $1/n$) relative entropy to quantify probabilistic closeness and derived the achievability proof of the common information as the optimal rate of shared randomness between two agents. The concept of soft covering has been consequently applied in wiretap channels \cite{wyner1975wire,hayashi2006general,parizi2016exact,yu2018renyi} and in other problems including channel synthesis \cite{cuff2010coordination,cuff2013distributed,hsieh2016channel,yu2019exact} and channel resolvability \cite{han1993approximation,bloch2013strong,watanabe2014strong,liu2016e_}\cite[Ch.~6]{koga2013information} with tighter measures of probabilistic closeness such as total variation, relative entropy, and R\'enyi divergence \cite{yu2018renyi}\cite{liu2016e_}\cite{winter2005secret,hou2013informational,hou2014effective,cuff2015stronger,cuff2016soft,yu2025renyi,li2025two}. 
It is worth noting that channel resolvability is closely related to soft covering in the sense that the former deals with simulating all output distributions including non-product ones, while the latter focuses solely on the product ones. More recently, developments in quantum information theory have prompted extensive research into the soft covering \cite{ahlswede2002strong,hayashi2015quantum,cheng2023error,atif2023lossy,atif2024quantum,shen2024optimal,he2024quantum,hayashi2025resolvability}\cite[Ch.~17]{wilde2011classical}\cite[Sec.~9.4]{hayashi2016quantum} and channel simulation \cite{massar2000amount,winter2001compression,winter2004extrinsic,luo2009channel,wilde2012information,bennett2014quantum,radhakrishnan2017one,anshu2019convex} for quantum channels.

The general idea of soft covering is to simulate a given output distribution using a specific channel $n$ times. Suppose we are given a discrete memoryless channel (DMC) with transition probability distribution $W_{Y|X}$ and a desired output distribution $P_Y$. Consider any code $\mC = \{X^n(1),X^n(2),\dots,X^n(M)\}$, with $|\mC|=M=:2^{nR}$, and with each encoded message being drawn from the uniform distribution on $\{1,2,\ldots,M\}$. The output distribution induced by the code $\mC$ is
$$ \tP_{Y^n|\mC}(\cdot) = \frac{1}{M} \sum_{i=1}^M W_{Y|X}^n(\cdot|X^n(i)). $$
We aim for $\tP_{Y^n|\mC}$ to effectively cover the space of $Y^n \sim P_Y^n$. In other words, we want the non-product distribution $\tP_{Y^n|\mC}$ to asymptotically approximate the product distribution $P_Y^n$ under the criterion of the total variation $\TVinText$.
Let $\mS:=\{P_X: P_X W = P_Y\}$ be the set of all 1-shot input distributions whose outputs are $P_Y$ under the DMC $W_{Y|X}$. According to the soft covering lemma, e.g.\cite[Ch.~19]{moser2019advanced}, when $R> \minI$, we can achieve a good covering: there exists a code that ensures $\TVinText$ to vanish exponentially fast. The achievable error exponent has been studied extensively in the literature both in the classical and the classical-quantum settings \cite{hayashi2006general,parizi2016exact,yu2018renyi}\cite{yu2025renyi}\cite{li2025two}\cite{cheng2023error}\cite{yassaee2019almost}\cite{yagli2019exact}. All of these studies employ a random coding strategy when investigating the decaying behavior of the total variation and relative entropy at high rates. However, it is not guaranteed that random coding yields a tight exponent; even if it does, it only establishes an achievability result. An important open problem is to develop a converse bound showing that no code can make the total variation decay too fast. In this work, we demonstrate the non-tightness of random coding and provide a converse bound that holds for all codes.

On the other hand, when $R<\minI$, we expect that the covering error $\TVinText$ should approach $1$ exponentially fast: $\TVinText = 1 - 2^{-n\Ga(\mC,n)}$.
We are particularly interested in the minimal achievable exponent:
$$ \Ga(R) := \liminf_{n\to\infty} \min_{\mC:|\mC|= 2^{nR}} \Ga(\mC,n), $$
which characterizes the slowest convergence of $\TVinText$ to $1$. It can be understood as an optimal performance among all the soft covering codes: given a low rate, how well a code can possibly behave to avoid poor covering. Therefore, for an arbitrary code $\mC$, we can deduce that $\TVinText \geq 1 - 2^{-n\Ga(R)}$. In this sense, we refer to $\Ga(R)$ as the strong converse exponent, as it holds for all codes, not just for random codes.

The dual problem of the soft covering lemma is the packing lemma in channel coding, where a similar exponent, called the reliability  function \cite{shannon1959probability}, has been thoroughly studied at rates both below \cite{gallager1965simple,haroutunian1968estimates,sibson1969information,blahut1974hypothesis,csiszar1972class,arimoto1977information,augustin1978noisy,verdu2021error}\cite[Ch.~5]{gallager1968information}\cite[Ch.~10]{csiszar2011information} and above \cite{arimoto1973converse,dueck1979reliability,oohama2015two,mosonyi2017strong} the capacity. The latter case is known as the strong converse of channel coding \cite{wolfowitz1957coding,wolfowitz1960note,winter1999coding}, as the corresponding exponent holds for all codes when the probability of error approaches one. However, in the soft covering, the strong converse exponent $\Ga(R)$ has not been explored in the literature in terms of either lower or upper bounds. It may be noted that \cite{cheng2023error} provides a lower bound on the performance of random code ensemble at rates below mutual information. \cite{watanabe2014strong} and \cite{hayashi2025resolvability} prove the asymptotic strong converse for classical and classical-quantum channels but do not characterize the exponent.

In this work, we characterize the exact strong converse exponent $\Ga(R)$ of the classical soft covering problem. It is stated in Theorem \ref{thm-sc}, and a novel two-parameter R\'enyi-type information quantity $J_{\alpha,\beta}(W_{Y|X}\|P_Y)$ is introduced in the process. The converse is derived using a hypothesis testing perspective, and the achievability is established through a deterministic code construction technique combined with the type covering lemma. In addition, we provide a random coding achievability bound for the strong converse exponent in Theorem \ref{thm-rc} that is expressed using the R\'enyi mutual information. By comparing the exact $\Ga(R)$, the proposed random coding achievability, and the random coding converse in \cite{cheng2023error}, we conclude that random coding fails to produce the tight strong converse exponent. 

Building on the observation of non-tightness of random coding in the strong converse exponents, we provide a new characterization of the error exponents for $R>\minI$. In particular, we establish a new lower bound on the error exponents (see Theorem \ref{thm-nl-El}) in terms of R\'enyi entropy for noiseless channels using a new deterministic code construction. We also derive a new upper bound on the error exponents (see Theorem \ref{thm-nl-Eu}) that match the lower bound until they both reach a slope of unity. The lower bound strictly outperforms random codes for noiseless channels. Furthermore, this lower bound can be extended to noisy channels (see Theorem \ref{thm-ee-achi}), achieving a high-rate improvement over the random coding exponent given in \cite[Thm.~1]{yagli2019exact}. We also provide a new sphere-packing style upper bound on the error exponents of the noisy channels, given in Theorem \ref{thm-ee-con}. 

While investigating the behavior of error exponents for noiseless channels, we uncover an interesting phenomenon: the conventional formulation that assumes a uniform distribution $1/M$ on messages yields different exponents depending on whether the components of the target distribution $P_Y$ are rational or irrational numbers. These discrepancies are characterized in Theorems \ref{thm-irr} and \ref{thm-ra}. The reason for this is that the code-induced distribution $\tPy$ takes values in multiples of $1/M$, so soft covering essentially approximates $P_Y^n$ within a quantization grid of step size $1/M$. If $P_Y^n$ is rational, a perfect covering with zero error is achievable at sufficiently high rates. In contrast, if $P_Y^n$ is irrational, a nonzero covering error may persist due to limitations imposed by Diophantine approximation \cite[Ch.~3]{queffelec2013diophantine}. To address this discrepancy, we propose a new formulation of the soft covering problem, in which the distribution of the messages is allowed to be non-uniform, and the covering rate is defined not by the number of messages but by the inverse of the smallest nonzero message probability. We call this formulation $H_{-\infty}$-constrained soft covering (see Definition \ref{def-ren}). We provide 
an exact characterization of the error exponents for this formulation without any discrepancy for all rates for noiseless channels (see Theorem \ref{thm-nl-ren}).
It aligns with that of the uniform distribution at low rates.  
In summary, the error exponents of the soft covering problems exhibit a very complex behavior.

This paper is organized as follows. Section \ref{sec-pre} introduces useful definitions. All the main results are presented in Section \ref{sec-result}. Section \ref{sec-sc} provides a detailed proof of the characterization of  strong converse exponent. In Section \ref{sec-ee-nl}, we provide a detailed discussion on error exponents for noiseless channels under different formulations and generalize the corresponding results to noisy channels in Section \ref{sec-ee-ny}.

\section{Preliminaries}
\label{sec-pre}

\subsection{Notation}
This paper applies the method of types, as well as basic notions from discrete probability theory and a range of information quantities. Basic properties of types can be found in many papers and textbooks, e.g.~\cite[Ch.~2]{csiszar2011information} and \cite{csiszar1998method}.
Basic notations we will use are collected below.

Let $\mM$ denote a finite message set, $\mX$ and $\mY$ denote the finite input and output alphabets of a channel, respectively, and $\mP(\mM)$, $\mP(\mX)$, and $\mP(\mX\mY)$ denote the sets of all probability distributions on $\mM$, $\mX$, and $\mX\times\mY$, respectively. 
Likewise, let $\mP(\mY|\mX)$ denote the set of all conditional distributions on $\mY$ given $\mX$. 
For any block length $n$, let $\mP_n(\mX)$ and $\mP_n(\mX\mY)$ denote the sets of all $n$-types and joint $n$-types on $\mX^n$ and $\mX^n\times\mY^n$, respectively, 
and let $\mP_n(\mY|Q_X)$ denote the set of all conditional $n$-types on $\mY^n$ given an input type $Q_X\in\mP_n(\mX)$. 
Denote by $\mT_Q$ the type class corresponding to an $n$-type $Q$, 
and by $\mT_V(x^n)$ the conditional type class (or $V$-shell) given $x^n\in\mT_{Q_X}$ and a conditional type $V\in\mP_n(\mY|Q_X)$. 
Let $H(V|Q_X)$ and $I(Q_X;V)$ denote the conditional entropy $H(Y|X)$ and the mutual information $I(X;Y)$ under the joint distribution $Q_{XY}=Q_X V_{Y|X}$, respectively, 
and define $D(V\|W|Q_X) := D(Q_XV_{Y|X}\|Q_XW_{Y|X}) $.
Let $\mS := \{P_X\in\mP(\mX): P_X W = P_Y\}$ denote the set of all input distributions whose induced output distribution under $W_{Y|X}$ is $P_Y$. 
For a distribution $V$, let $\supp(V)$ denote its support, and write $V\ll W$ if $\supp(V)\subseteq\supp(W)$. 
Finally, define $|x|^+ := \max\{x,0\}$.

Consider a joint distribution $Q_{XY}\in\mP(\mX\mY)$ with marginal distributions $Q_X\in\mP(\mX)$ and $Q_Y\in\mP(\mY)$. In this paper, we follow the notation that $Q_{XY} = Q_X V_{Y|X} = Q_Y \bV_{X|Y}$, where $V_{Y|X}:= Q_{XY}/Q_X\in\mP(\mY|\mX)$ and $\bV_{X|Y} = Q_{XY}/Q_Y\in\mP(\mX|\mY)$ are the forward and backward transition probabilities, respectively. In other words, the letters $Q$ and $V$ are consistently associated throughout this paper's notations: whenever a joint distribution $Q_{XY}$ appears, the corresponding marginals $Q_X,Q_Y$ and conditionals $V_{Y|X},\bV_{X|Y}$ are all implicitly defined and need not be restated. Moreover, we write $Q_Y = Q_X V$. 

In this work, the desired output distribution is denoted by $P_Y\in\mP(\mY)$ and the DMC used for covering is denoted by $W_{Y|X}\in\mP(\mY|\mX)$. Definitions of some useful information-theoretic quantities are listed below.

\begin{definition}[Information density]\label{def-iXY}
    The information density is defined as
    $$ \iota_{X;Y}(x,y) := \log \frac{W_{Y|X}(y|x)}{P_Y(y)}. $$
\end{definition}

\begin{definition}[Expectation of the information density]\label{def-iQ}
    Given $Q_{XY}\in\mP(\mX\mY)$, the expectation of the information density under $Q_{XY}$ is defined as
    $$ \iota(Q_{XY}) := \mathbb{E}_{Q_{XY}} [\iota_{X;Y}] = 
    \begin{cases}
        \displaystyle{ \sum_{x,y} Q_{XY}(x,y) \log\frac{W_{Y|X}(y|x)}{P_Y(y)} } 
            & \text{if } V\ll W \\
        -\infty & \text{otherwise}
    \end{cases}. $$
\end{definition}

\begin{lemma}\label{lem-Deq} One can easily verify the following identity, and hence we claim it without proof.
    $$ D(V\|W|Q_X) + \iQ = D(Q_X V\|P_Y) + I(Q_X;V). $$ 
\end{lemma}

\begin{definition}[$\alpha$-R\'enyi entropy]\label{def-Ha}
    Given $Q_X\in\mP(\mX)$, the $\alpha$-R\'enyi entropy is defined as 
    $$ H_\alpha(Q_X) 
    := \frac{1}{1-\alpha} \log \left(\sum_x Q_X^\alpha(x) \right). $$
\end{definition}

\begin{remark}\label{rmk-Hinf}
    $H_{-\infty}(Q_X) = -\log \displaystyle{\min_{x\in\supp(Q_X)}} Q_X(x)$.
\end{remark}

\begin{definition}[$\alpha$-R\'enyi divergence]\label{def-Da}
    Given $Q_X,P_X\in\mP(\mX)$, their $\alpha$-R\'enyi divergence is defined as 
    $$ D_\alpha(Q_X \| P_X) 
    := \frac{1}{\alpha-1} \log \left( \sum_x Q_X^\alpha(x) P_X^{1-\alpha}(x) \right). $$
\end{definition}

\begin{definition}[$\alpha$-R\'enyi mutual information]\label{def-Ia}
    Given $P_X\in\mP(\mX)$, the $\alpha$-R\'enyi mutual information is defined as 
    $$ I_\alpha(P_X;W) 
    := \frac{\alpha}{\alpha-1} \log \left[ \sum_y \left(\sum_x P_X(x) W^\alpha_{Y|X}(y|x) \right)^\frac{1}{\alpha} \right]. $$
\end{definition}

\begin{lemma}[\text{\cite[Rmk.~24]{yagli2019exact}}]\label{lem-Ia-yagli}
    Let $P_X\in\mS$ and define $\bW_{X|Y} := P_X W_{Y|X}/{P_Y} \in \mP(\mX|\mY)$. Then 
    $$ I_\alpha(P_X;W) 
    = \frac{\alpha}{\alpha-1} \log \left(\mathbb{E}_{P_Y} \left[\mathbb{E}_{\bW_{X|Y}}^{1/\alpha} \left[2^{(\alpha-1)\iota_{X;Y}}|Y\right] \right] \right). $$
\end{lemma}

\begin{remark}\label{rmk-H=I}
    If $W_{Y|X}$ is a noiseless, i.e., for each $x\in\mX$, there exists a symbol $y\in\mY$ such that $W_{Y|X}(y|x) = 1$. Then $I_\alpha(P_X;W) = H_{\frac{1}{\alpha}}(P_X W)$ \cite[Eq.~(85)]{verdu2021error}. 
\end{remark}

Next, we define a novel two-parameter information-theoretic quantity below, which we will use to characterize the strong converse exponent (see Theorem \ref{thm-sc}). Some properties of this quantity are summarized in Lemma \ref{lem-Jab} and proved in Appendix \ref{app-Jab}.

\begin{definition}[A two-parameter information-theoretic quantity] \label{def-Jab}
    Let $P_Y\in\mP(\mY)$ and $W_{Y|X}\in\mP(\mY|\mX)$ be given such that $\mS$ is nonempty. 
    Let $\alpha,\beta\in[0,1]$. 
    We define a two-parameter information-theoretic quantity as follows:
    $$ J_{\alpha,\beta}\left(W_{Y|X}\|P_Y\right)
        := -\log \left[ \max_{Q_X\in\mP(\mX)} \sum_y P_Y^{1-\alpha\beta}(y) \left(\sum_x Q_X(x) W_{Y|X}^\alpha(y|x) \right)^\beta \right]. $$
\end{definition}

\begin{lemma}[Properties of $J_{\alpha,\beta}$] \label{lem-Jab}
    The quantity $J_{\alpha,\beta}\left(W_{Y|X}\|P_Y\right)$ defined above has the following properties.
    \begin{itemize}

        \item [(a)] Given $Q_X\in\mP(\mX)$, let $\hat{Q}_Y(y) \ \propto \ \big[\sum_x Q_X(x) W_{Y|X}^\alpha(y|x) \big]^{\frac{1}{\alpha}}$ be a normalized distribution on $\mY$. Then
        $$ J_{\alpha,\beta}\left(W_{Y|X}\|P_Y\right) 
        = \min_{Q_X\in\mP(\mX)} \left[ 
            (1 - \alpha\beta) D_{\alpha\beta}(\hat{Q}_Y\|P_Y) +
            \beta(1-\alpha) I_\alpha(Q_X;W) \right], $$
        where $D_{\alpha\beta}(\hat{Q}_Y\|P_Y)$ is defined in Definition \ref{def-Da} and $I_\alpha(Q_X;W)$ is defined in Definition \ref{def-Ia}.

        \item [(b)] \hypertarget{lem-Jab-pos}{}
        (Non-negativity)
        If $\alpha,\beta\in[0,1]$, then
        $J_{\alpha,\beta}\left(W_{Y|X}\|P_Y\right) \geq 
        J_{1,\beta}\left(W_{Y|X}\|P_Y\right) = 0$.

        \item [(c)] \hypertarget{lem-Jab-add}{}
        (Additivity)
        $J_{\alpha,\beta}\big(W_{Y|X}^n\|P_Y^n\big) = n J_{\alpha,\beta}\left(W_{Y|X}\|P_Y\right).$
    \end{itemize}
\end{lemma}

\subsection{Notions of soft covering}
The precise formulations of the soft covering problem are given in this subsection. We begin with two cases: uniform and non-uniform, depending on whether the message set $\mM$ has a uniform distribution or not. 

\begin{definition}[Uniform soft covering]\label{def-uni}
    An $(R,n,P_Y,W_{Y|X})$ (uniform) soft-covering scheme consists of a message set $\mM = \{1,\dots,M\}$ with $M=:2^{nR}$, where each message is drawn from uniform distribution, and a code $\mC = \{X^n(1),X^n(2),\dots,X^n(M)\}$.
    This soft-covering scheme induces the following distribution at the output of the DMC $W_{Y|X}$:
    \begin{equation}\label{tPy}
        \tP_{Y^n|\mC}(y^n) := \frac{1}{M} \sum_{i=1}^M W_{Y|X}^n(y^n|X^n(i)), 
        \quad \forall y^n\in\mY^n.
    \end{equation}
    The covering error is defined as the total variation between the achieved non-product output distribution $\tPy$ and the product desired one $P_Y^n$:
    $$ \frac12 \left\| \tPy - P_Y^n \right\|_1 
        := \frac12 \sum_{y^n} \left| \tPy(y^n) - P_Y^n(y^n) \right|. $$
\end{definition}

\begin{definition}[Non-uniform soft covering]\label{def-non}
    An $(R,n,P_Y,W_{Y|X},q)$ (non-uniform) soft-covering scheme consists of a message set $\mM = \{1,\dots,M\}$ with $M=:2^{nR}$, where messages are drawn from distribution $q\in\mP(\mM)$, and a code $\mC = \{X^n(1),X^n(2),\dots,X^n(M)\}$.
    This soft-covering scheme induces the following distribution at the output of the DMC $W_{Y|X}$:
    \begin{equation}\label{tPyQ}
        \tPyQ(y^n) := \sum_{i=1}^M q(i) \ W_{Y|X}^n(y^n|X^n(i)), 
        \quad \forall y^n\in\mY^n.
    \end{equation}
    The covering error is defined as the total variation between the achieved non-product output distribution $\tPyQ$ and the product desired one $P_Y^n$:
    \begin{equation}\label{l1-non}
        \frac12 \left\| \tPyQ - P_Y^n \right\|_1 
        := \frac12 \sum_{y^n} \left| \tPyQ(y^n) - P_Y^n(y^n) \right|.
    \end{equation}
\end{definition}

Besides these two formulations, we introduce a variation based on the non-uniform case: in addition to assuming that the messages have a non-uniform distribution $q$, we further impose an extra condition that the smallest nonzero probability among all messages is at least $1/M =: 2^{-nR}$. This condition can be equivalently written as
\begin{equation}\label{H-inf<nR}
    H_{-\infty}(q) = -\log\min_{i\in\mM:q(i)>0} q(i) \leq nR,
\end{equation}
according to Remark \ref{rmk-Hinf}. Under this condition, the size of the message set need not be $M$; instead, $|\mM| = |\supp(q)| \leq M$. Moreover, given a rate $R$, define
\begin{equation}\label{QRn}
    \mQ(R,n) := \left\{q\in\mP(\{1,\dots,M\}): H_{-\infty}(q) \leq nR \right\}
\end{equation}
to be the set of all probabilities satisfying condition \eqref{H-inf<nR}. Soft covering under this condition is referred to as $H_{-\infty}$-constrained formulation, and is given by the following definition.

\begin{definition}[$H_{-\infty}$-constrained soft covering]
\label{def-ren}
An $(R,n,P_Y,W_{Y|X},q)_{-\infty}$ ($H_{-\infty}$-constrained) soft-covering scheme consists of a message set $\mM$, where messages are drawn from distribution $q\in\mQ(R,n)$, and a code $\mC = \{X^n(1),X^n(2),\dots,X^n(|\mM|)\}$.
    This soft-covering scheme induces distribution $\tPyQ$ at the output of the DMC $W_{Y|X}$, which has the same expression in \eqref{tPyQ}, and the covering error is defined the same as \eqref{l1-non}.
\end{definition}

The reason for this new formulation of the soft covering problem is that, when $R\geq\minI$, the conventional uniform formulation would lead to an inevitable discrepancy depending on whether $P_Y(y)$'s are rational or irrational. This discrepancy arises even in the simplest case, when the DMC $W_{Y|X}$ is noiseless, because in this case, the uniform formulation attempts to approximate $P_Y$ using only rational probabilities which are multiples of $1/M$. Switching to the non-uniform formulation in Definition \ref{def-non} can certainly eliminate this rational-irrational discrepancy, but alters the problem substantially and thereby shifts the error exponent relative to the uniform case. The $H_{-\infty}$-constrained formulation in Definition \ref{def-ren}, on the other hand, resolves the discrepancy while still sharing an overlapping error-exponent region with the uniform formulation. A detailed discussion of this point is presented in Section \ref{sec-result-ee-nl} and Section \ref{sec-ee-nl}.

In information theory, proofs of achievability are commonly facilitated by the technique of random coding; accordingly, we also define the random-coding soft covering below. In this definition, the message set $\mM$ is uniformly distributed.

\begin{definition}[Random-coding soft covering]\label{def-rc}
    An $(R,n,W_{Y|X})_{P_X}$ (random-coding) soft-covering scheme consists of a message set $\mM = \{1,\dots,M\}$ with $M=:2^{nR}$, where each message is drawn from uniform distribution, and a random code $\mC = \{X^n(1),X^n(2),\dots,X^n(M)\}$, where each codeword $X^n(i)$ is generated randomly from i.i.d. $P_X$, i.e., $X^n(i)\sim P_X^n$ for all $i\in\mM$. The induced output distribution $\tPy$ is given by \eqref{tPy}, and the covering error is defined as the expectation of the total variation between $\tP_{Y^n|\mC}$ and $\mathbb{E}_\mC \tP_{Y^n|\mC}$:
    $$ \frac12 \mathbb{E}_\mC \left\| \tPy - \mathbb{E}_\mC \tP_{Y^n|\mC} \right\|_1 
    := \frac12 \mathbb{E}_\mC \sum_{y^n} \left| \tPy(y^n) - \mathbb{E}_\mC \tP_{Y^n|\mC}(y^n) \right|, $$
    where the expectation $\mathbb{E}_\mC$ is taken with respect to i.i.d. $P_X$. 
\end{definition}

\begin{remark}
\label{rmk-CsimQ}
The notation $\mC\sim q$ indicates that $\mC$ is the code corresponding to a message set with distribution $q$. If $q(i) = 1/M$ is uniform, we omit the notation $\sim q$, as in Definitions \ref{def-uni} and \ref{def-rc} above. This convention will be used throughout the paper: specifically, every occurrence of $\mC\sim q$ or $\tPyQ$ refers to the non-uniform formulation (including Definitions \ref{def-non} and \ref{def-ren}), whereas $\tPy$ refers to the uniform formulation (including Definitions \ref{def-uni} and \ref{def-rc}), unless otherwise specified.
\end{remark}

Next, we define the error exponent (denoted by $E$) and the strong converse exponent (denoted by $\Ga$) for the above four formulations of the soft covering problem.

\begin{definition}[Error exponent and strong converse exponent for soft covering] \label{def-exp} \ 
\begin{itemize}
\item [(a)] For uniform soft-covering:
\begin{align*}
    E_\su(R) &:= \limsup_{n\to\infty} \ \max_{\mC:|\mC|= 2^{nR}} \left[ -\frac1n \log \left( \frac12 \left\| \tPy - P_Y^n \right\|_1 \right) \right], \\
    \Ga_\su(R) &:= \liminf_{n\to\infty} \ \min_{\mC:|\mC|= 2^{nR}} \left[ -\frac1n \log \left( 1 - \TVinEq \right) \right].
\end{align*}

\item [(b)] For non-uniform soft-covering:
\begin{align*}
    E_\sn(R) &:= \limsup_{n\to\infty} \ \max_{q\in\mP(\{1,\dots,2^{nR}\})} \ \max_{\mC:\mC\sim q} \left[ -\frac1n \log \left( \frac12 \left\| \tPyQ - P_Y^n \right\|_1 \right) \right], \\
    \Ga_\sn(R) &:= \liminf_{n\to\infty} \ \min_{q\in\mP(\{1,\dots,2^{nR}\})} \ \min_{\mC:\mC\sim q} \left[ -\frac1n \log \left( 1 - \frac12 \left\| \tPyQ - P_Y^n \right\|_1 \right) \right],
\end{align*}
where the notation $\mC\sim q$ is defined in Remark \ref{rmk-CsimQ}.

\item [(c)] For $H_{-\infty}$-constrained soft covering:
\begin{align*}
    E_\sr(R) &:= \limsup_{n\to\infty} \ \max_{q\in\mQ(R,n)} \ \max_{\mC:\mC\sim q} \left[ -\frac1n \log \left( \frac12 \left\| \tPyQ - P_Y^n \right\|_1 \right) \right], \\
    \Ga_\sr(R) &:= \liminf_{n\to\infty} \ \min_{q\in\mQ(R,n)} \ \min_{\mC:\mC\sim q} \left[ -\frac1n \log \left( 1 - \frac12 \left\| \tPyQ - P_Y^n \right\|_1 \right) \right],
\end{align*}
where $\mQ(R,n)$ is defined in \eqref{QRn} and the notation $\mC\sim q$ is defined in Remark \ref{rmk-CsimQ}.

\item [(d)] For random-coding soft covering with codewords randomly drawn from i.i.d. $P_X$:
\begin{align*}
    E_\rc(R,P_X) &:= \limsup_{n\to\infty} \left[ -\frac1n \log \left( \frac12 \mathbb{E}_\mC \left\| \tPy - \mathbb{E}_\mC\tPy \right\|_1 \right) \right], \\
    \Ga_\rc(R,P_X) &:= \liminf_{n\to\infty} \left[ -\frac1n \log \left( 1 -\frac12 \mathbb{E}_\mC \left\| \tPy - \mathbb{E}_\mC\tPy \right\|_1 \right) \right].
\end{align*}
\end{itemize}
\end{definition}

\begin{remark}\label{rmk-relation}
    It is clear that $E_\su(R) \leq  E_\sr(R) \leq E_\sn(R)$ 
    and $\Ga_\su(R) \geq \Ga_\sr(R) \geq \Ga_\sn(R)$, since the $H_{-\infty}$-constrained formulation is a special case of the non-uniform formulation, and the uniform formulation is a special case of the $H_{-\infty}$-constrained formulation.
\end{remark}

\begin{remark}\label{rmk-cuff}
    It is shown in \cite[Thm.~1]{yagli2019exact} that for all $P_X\in\mS$, 
    \begin{align*}
        E_\rc(R,P_X) &= \min_{Q_{XY}\in\mP(\mX\mY)} \left[ D(Q_{XY}\|P_X W_{Y|X}) + \frac12 \big|R - D(Q_{XY}\|P_X Q_Y)\big|^+ \right] \\
        &= \max_{\alpha\in[1,2]} \left\{ \frac{\alpha-1}{\alpha} \left[R - I_\alpha(P_X;W)\right] \right\}, 
    \end{align*}
    where $I_\alpha(P_X;W)$ is the $\alpha$-R\'enyi mutual information, defined in Definition \ref{def-Ia}. Consequently, the optimal random coding error exponent is
    $ E_\rc(R) := \max_{P_X\in\mS} E_\rc(R,P_X)$. 
    In particular, by Remark \ref{rmk-H=I}, when the DMC $W_{Y|X}$ is noiseless, $E_\rc(R)$ reduces to
    \begin{align*}
      E_\rc^\nl(R)
        &= \min_{Q_Y\in\mP(\mY)} 
        \left\{ D(Q_Y\|P_Y) + \frac12 \big| R - D(Q_Y\|P_Y) - H(Q_Y) \big|^+ \right\} \\
        &=\ \max_{\alpha\in[1,2]} \left\{ \frac{\alpha-1}{\alpha} \left[R - H_{\frac{1}{\alpha}}(P_Y)\right] \right\}.
    \end{align*}
\end{remark}

\section{Main Results}
\label{sec-result}
This section summarizes the main results of this work, including the strong converse exponent in Section \ref{sec-result-sc} and bounds for the error exponent in Sections \ref{sec-result-ee-nl} and \ref{sec-result-ee-ny}. A summary table of these results is presented in Section \ref{sec-result-summary}.

\subsection{Results on the strong converse exponent}
\label{sec-result-sc}
To begin with, the following Theorem \ref{thm-sc} characterizes the exact strong converse exponents of the soft covering problem in the uniform, non-uniform, and $H_{-\infty}$-constrained formulations, and shows that they are all identical.

\begin{theorem}[The exact strong converse exponent]
\label{thm-sc}
The exact strong converse exponents in Definition \ref{def-exp} for the uniform, non-uniform, and the $H_{-\infty}$-constrained soft-covering formulations coincide, and are given by
    $$ \Ga_\su(R) = \Ga_\sn(R) = \Ga_\sr(R) = \Ga(R) 
    := \max_{\alpha,\beta\in[0,1]} 
        \left[ J_{\alpha,\beta}\left(W_{Y|X}\|P_Y\right) 
        + \beta(\alpha-1) R \right], $$
    where $J_{\alpha,\beta}\left(W_{Y|X}\|P_Y\right)$ is defined in Definition \ref{def-Jab}.
\end{theorem}

Theorem \ref{thm-sc} is proved in Section \ref{sec-sc-dual}. The quantity $J_{\alpha,\beta}\left(W_{Y|X}\|P_Y\right)$ involves two parameters, which is uncommon in the literature of error exponents. Similar two-parameter exponents have appeared in the R\'enyi resolvability problem, e.g., \cite{yu2025renyi}\cite{li2025two}\cite{hayashi2016equivocations}, in which the R\'enyi divergence, itself parameterized, serves as the error criterion in the formulation.
Moreover, in the context of lossy source coding, e.g., \cite{blahut1974hypothesis}\cite{jitsumatsu2025computation}, exponents taking two parameters have also been observed; however, lossy source coding and soft covering address fundamentally different problems. The former employs a distortion function as a criterion for loss and directly evaluates the probability of error, whereas the latter centers on a forward channel and quantifies the error through a probabilistic divergence. Although the notion of covering is intrinsically revealed in source coding, the two settings are structurally distinct. Typically, when a channel is involved in a parameter-free problem formulation, the error exponent takes the form of $\max_\alpha \frac{1-\alpha}{\alpha}\left(I_\alpha - R\right)$, where $I_\alpha$ is some R\'enyi mutual information with respect to the given channel, with different domains of maximization over $\alpha$. This form is observed in both packing \cite{gallager1965simple}\cite{sibson1969information}\cite{arimoto1973converse}\cite{arimoto1977information}\cite{augustin1978noisy}\cite{csiszar1995generalized}\cite{verdu2015alpha}\cite{verdu2021error} and covering (when $R\geq\minI$) \cite{hayashi2006general}\cite{parizi2016exact}\cite{yassaee2019almost}\cite{yagli2019exact} problems. In this work, we give an achievability bound for random codes that exactly takes this form, as formally stated in Theorem \ref{thm-rc}.

\begin{theorem}[A random coding achievability bound]\label{thm-rc} The strong converse exponent in Definition \ref{def-exp} for the random coding formulation has the following upper bound. For all $P_X\in\mS$, we have
    \begin{align}
        \Ga_\rc(R,P_X) \leq \Gu_\rc(R,P_X)
        := & \min_{Q_{XY}\in\mP(\mX\mY)} \left[ D(Q_{XY}\|P_X W_{Y|X}) + \big|D(Q_{XY}\|P_X Q_Y) - R\big|^+ \right] \label{thm-rc-a} \\
        = & \max_{\alpha\in[\frac12,1]} \left\{ \frac{1-\alpha}{\alpha} \left[I_\alpha(P_X;W) - R \right] \right\}, \label{thm-rc-b}
    \end{align}
    where $I_\alpha(P_X;W)$ is the $\alpha$-R\'enyi mutual information, defined in Definition \ref{def-Ia}.
\end{theorem}

See Appendix \ref{app-ran} for a proof of Theorem \ref{thm-rc}. On the other hand, a lower bound for $\Ga_\rc(R,P_X)$ is given in \cite[Thm.~3]{cheng2023error}:
$$ \Ga_\rc(R,P_X) \geq \Gl_\rc(R,P_X):= 
    \max_{\alpha\in[\frac12,1]} \left\{ \frac{1-\alpha}{\alpha} \left[D_{2-\frac{1}{\alpha}}\left(P_X W_{Y|X}\|P_X P_Y\right) - R \right] \right\} $$
for all $P_X\in\mS$, where $D_{2-\frac{1}{\alpha}}\left(P_X W_{Y|X}\|P_X P_Y\right)$ is the R\'enyi divergence in Definition \ref{def-Da}. 
Hence, defining 
$$ \Ga_\rc(R) := \min_{P_X\in\mS}\Ga_\rc(R,P_X), $$ 
it can be further bounded via
\begin{equation}\label{Ga_rc-bound}
    \Gl_\rc(R) \leq \Ga_\rc(R) \leq \Gu_\rc(R)
\end{equation}
with 
$$ \Gl_\rc(R) := \min_{P_X\in\mS} \Gl_\rc(R,P_X), \quad 
\Gu_\rc(R) := \min_{P_X\in\mS} \Gu_\rc(R,P_X). $$
It is also noteworthy that $\Ga_\rc(R)$ provides an achievability bound for $\Ga_\su(R)$; specifically, $\Ga_\su(R) \leq \Ga_\rc(R)$. Examples of the exact exponent $\Ga(R)$ (in Theorem \ref{thm-sc}) and aforementioned random coding bounds $\Gl_\rc(R)$ and $\Gu_\rc(R)$ are illustrated in Figure \ref{fig-rc}, where the matrix entry $W(x,y)$ represents $W_{Y|X}(y|x)$. These examples include channels with fully noisy, fully noiseless, and hybrid input symbols (i.e., a mix of noisy and noiseless inputs). Figure~\ref{fig-rc} clearly shows that the random coding is not tight in general in the converse regime of the soft covering problem: due to \eqref{Ga_rc-bound}, the random-coding exponent $\Ga_\rc(R)$ must lie between $\Gl_\rc(R)$ and $\Gu_\rc(R)$, while the exact exponent $\Ga(R)$ is lower and hence tighter.

\begin{figure}[!htbp]
    \centering
    \subfloat[A fully noisy binary channel]{%
        \includegraphics[width=0.49\linewidth]{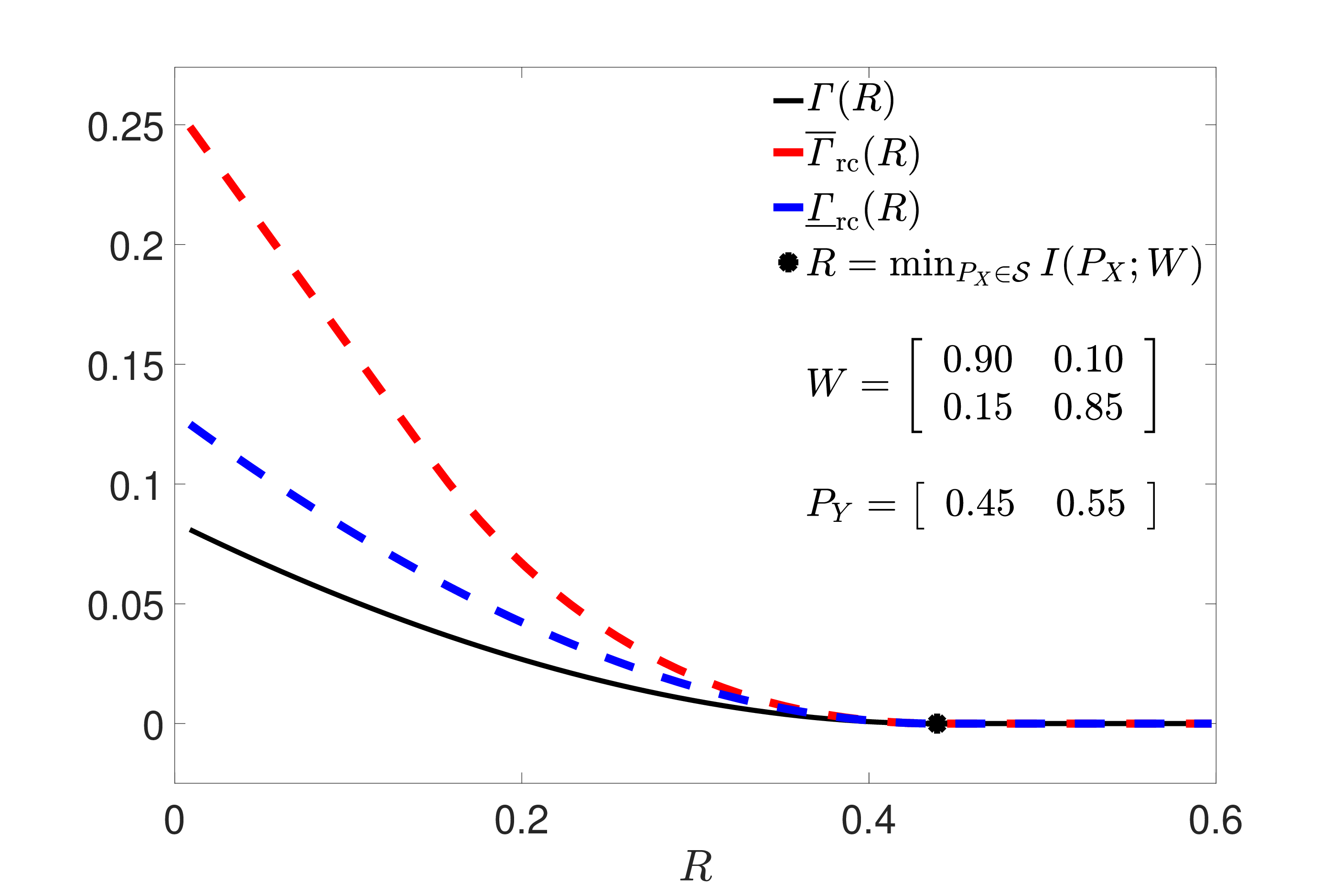}%
    }\hfill
    \subfloat[A fully noisy ternary channel]{%
        \includegraphics[width=0.49\linewidth]{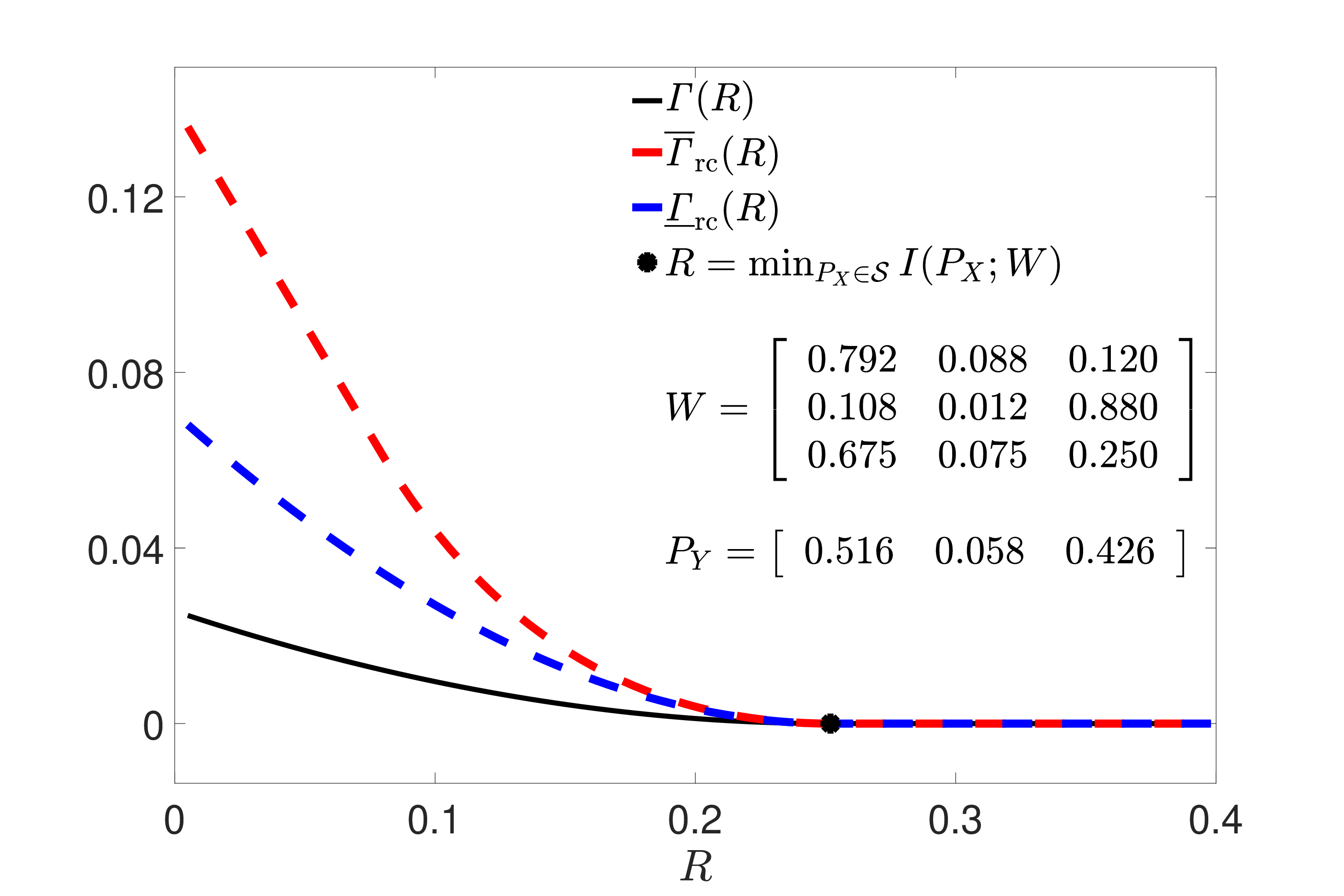}%
    }\\[1ex]
    \subfloat[A fully noiseless ternary channel]{%
        \includegraphics[width=0.49\linewidth]{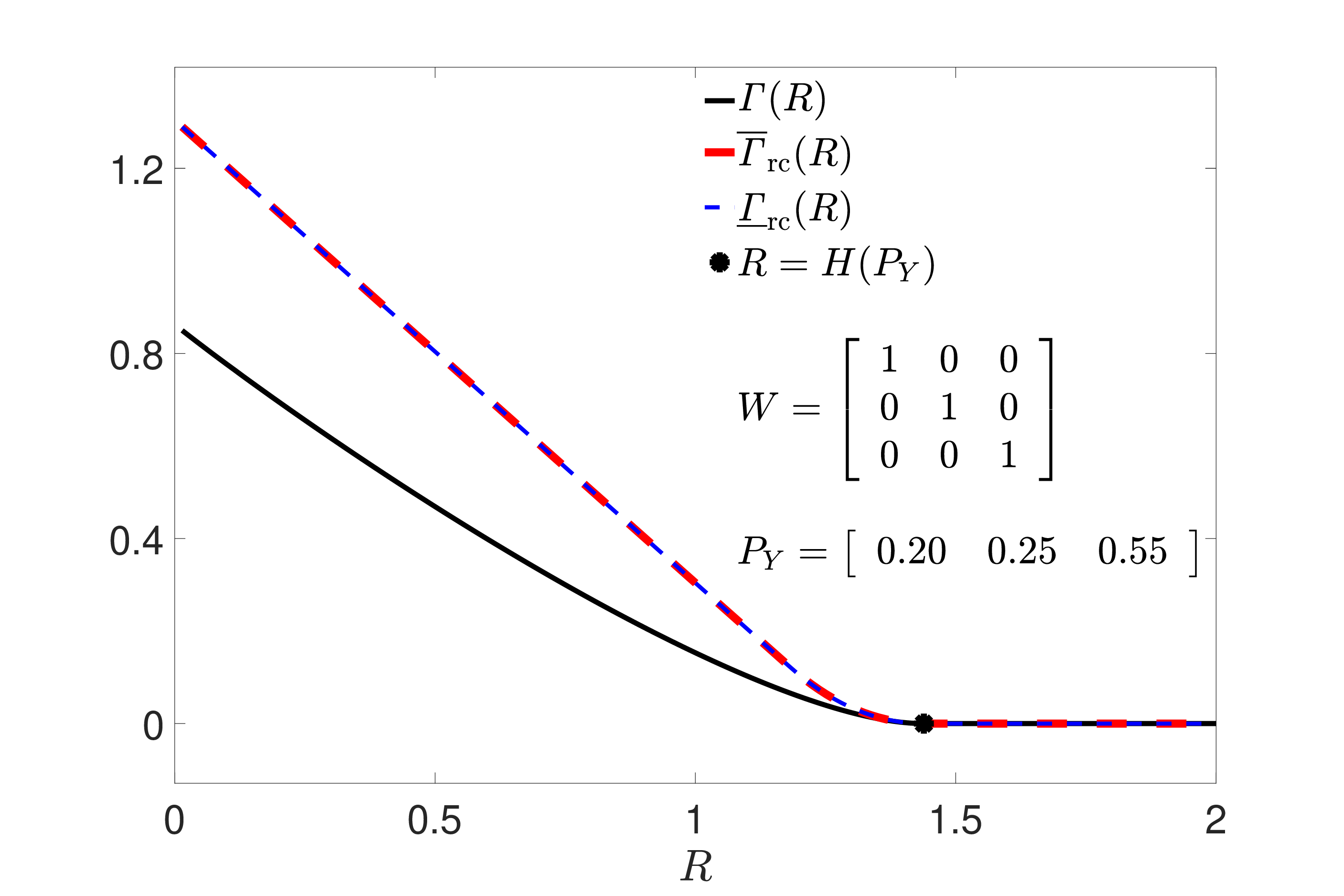}%
    }\hfill
    \subfloat[A hybrid ternary channel]{%
        \includegraphics[width=0.49\linewidth]{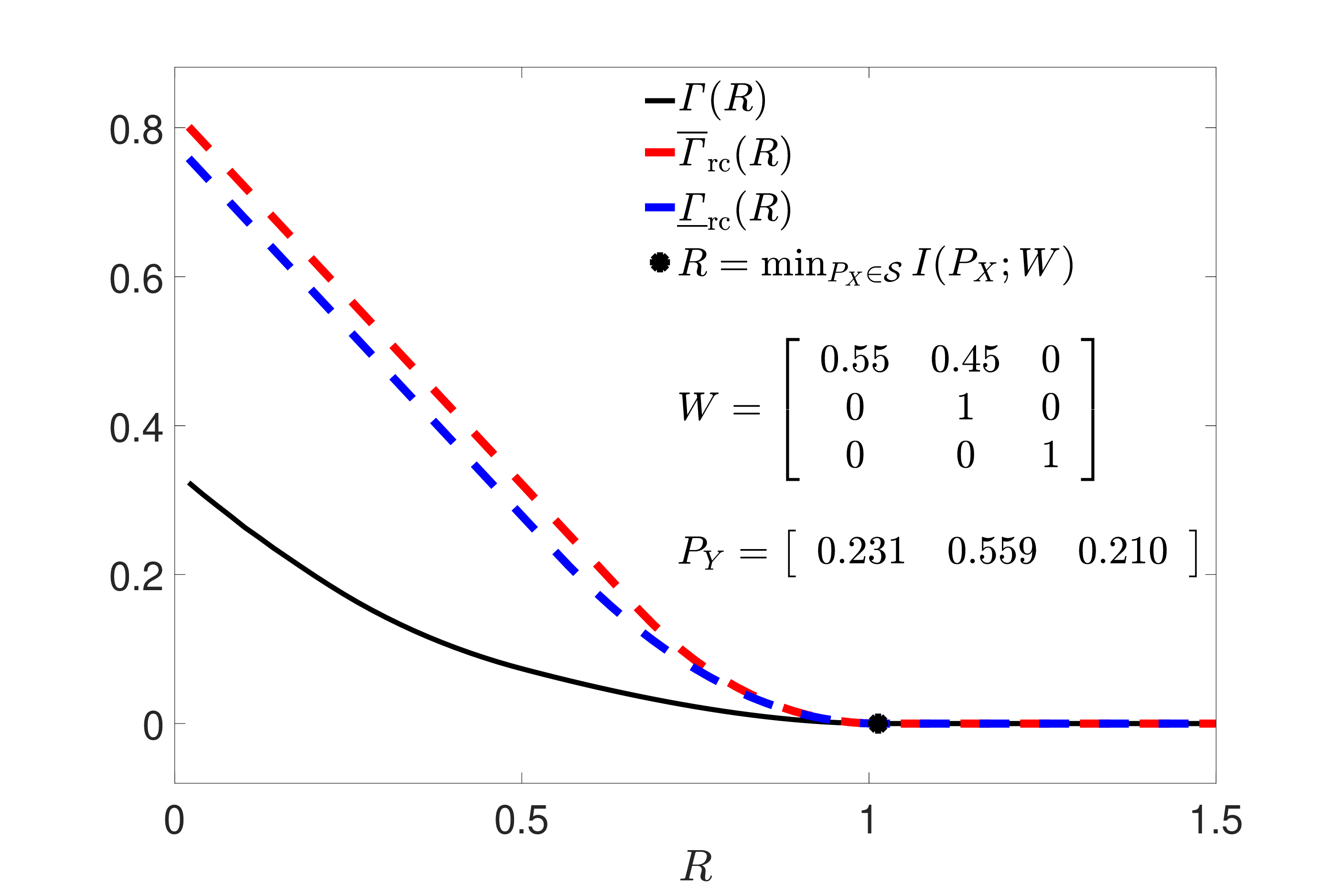}%
    }
    \caption{Examples of the exact strong converse exponent $\Ga(R)$,
    the random coding achievability $\Gu_{\rc}(R)$,
    and the random coding converse $\Gl_{\rc}(R)$.}
    \label{fig-rc}
\end{figure}

\subsection{Results on error exponents for noiseless channels}\label{sec-result-ee-nl}

The previous subsection reveals an intriguing phenomenon: when $R<\minI$, random coding fails to achieve a tight strong converse exponent. A natural question is whether such non-tightness also arises in the error exponent regime for $R\geq\minI$. In this subsection, we show that the answer is yes when $W_{Y|X}$ is noiseless. The following Theorems \ref{thm-nl-Eu}, \ref{thm-nl-El}, \ref{thm-nl-non}, and \ref{thm-nl-ren} summarize our results on error exponents for noiseless channels, where exponents for different formulations are defined in Definition \ref{def-exp}, with the superscript `nl' indicating `noiseless'.

\begin{theorem}[Converse for noiseless channels] 
\label{thm-nl-Eu} 
For noiseless channels under the uniform formulation, we have the following upper bound on $E_\su^\nl(R)$.
\begin{align*}
        E_\su^\nl(R) \leq \Eu^\nl(R) 
        := \ & \min_{Q_Y\in\mP(\mY): D(Q_Y\|P_Y) + H(Q_Y) \geq R } D(Q_Y\|P_Y) \\
        = \ & \max_{\alpha\in(-\infty,0)\cup[1,\infty)} \left\{ \frac{\alpha-1}{\alpha} \left[R - H_{\frac{1}{\alpha}}(P_Y)\right] \right\}.
\end{align*}
\end{theorem}

\begin{theorem}[Achievability for noiseless channels]\label{thm-nl-El} For noiseless channels under the uniform formulation, we have the following lower bound on $E_\su^\nl(R)$.
    \begin{align*}
        E_\su^\nl(R) \geq \El^\nl(R)
        := \ & \min_{Q_Y\in\mP(\mY)} 
        \left\{ D(Q_Y\|P_Y) + \big| R - D(Q_Y\|P_Y) - H(Q_Y) \big|^+ \right\} \\
        = \ & \max_{\alpha\geq1} \left\{ \frac{\alpha-1}{\alpha} \left[R - H_{\frac{1}{\alpha}}(P_Y)\right] \right\},
    \end{align*}
\end{theorem}

\begin{theorem}[Error exponent for non-uniform case]\label{thm-nl-non} For noiseless channels under the non-uniform formulation, the exact error exponent is given by
    \begin{align*}
        E_\sn^\nl(R)
        &= \min_{Q_Y\in\mP(\mY): H(Q_Y) \geq R} D(Q_Y\|P_Y) 
        = \max_{\alpha\geq1} \left\{ (\alpha-1) \left[R - H_{\frac{1}{\alpha}}(P_Y)\right] \right\}.
    \end{align*}
\end{theorem}

\begin{theorem}[Error exponent for $H_{-\infty}$-constrained case]\label{thm-nl-ren} For noiseless channels under the $H_{-\infty}$-constrained formulation, the exact error exponent is given by
    $ E_\sr^\nl(R) = \Eu^\nl(R), $
    where $\Eu^\nl(R)$ is defined in Theorem \ref{thm-nl-Eu}.
\end{theorem}

\begin{figure}[!htbp]
    \centering
    \subfloat[A binary output]{%
        \includegraphics[width=0.49\linewidth]{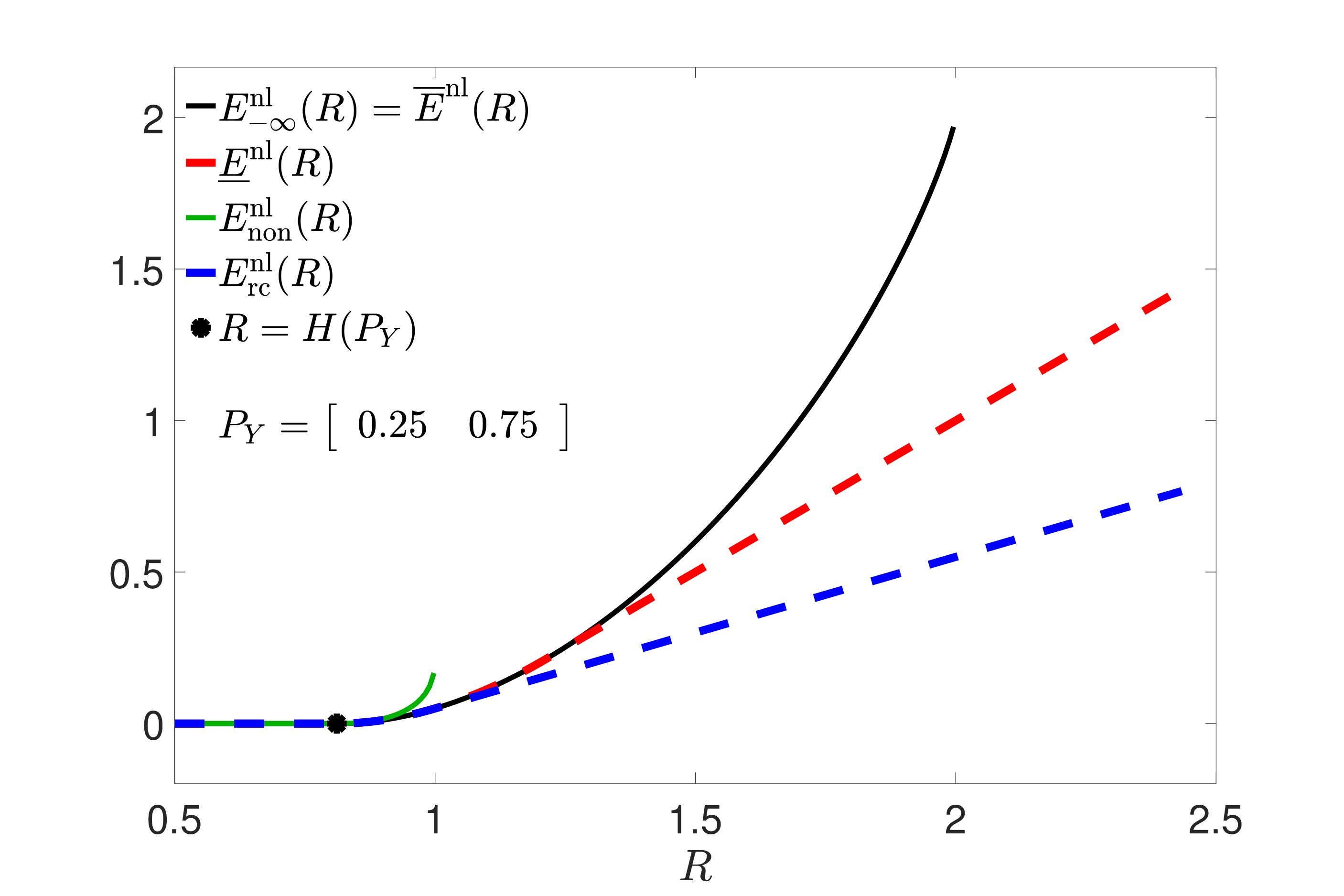}%
    }\hfill
    \subfloat[A ternary output]{%
        \includegraphics[width=0.49\linewidth]{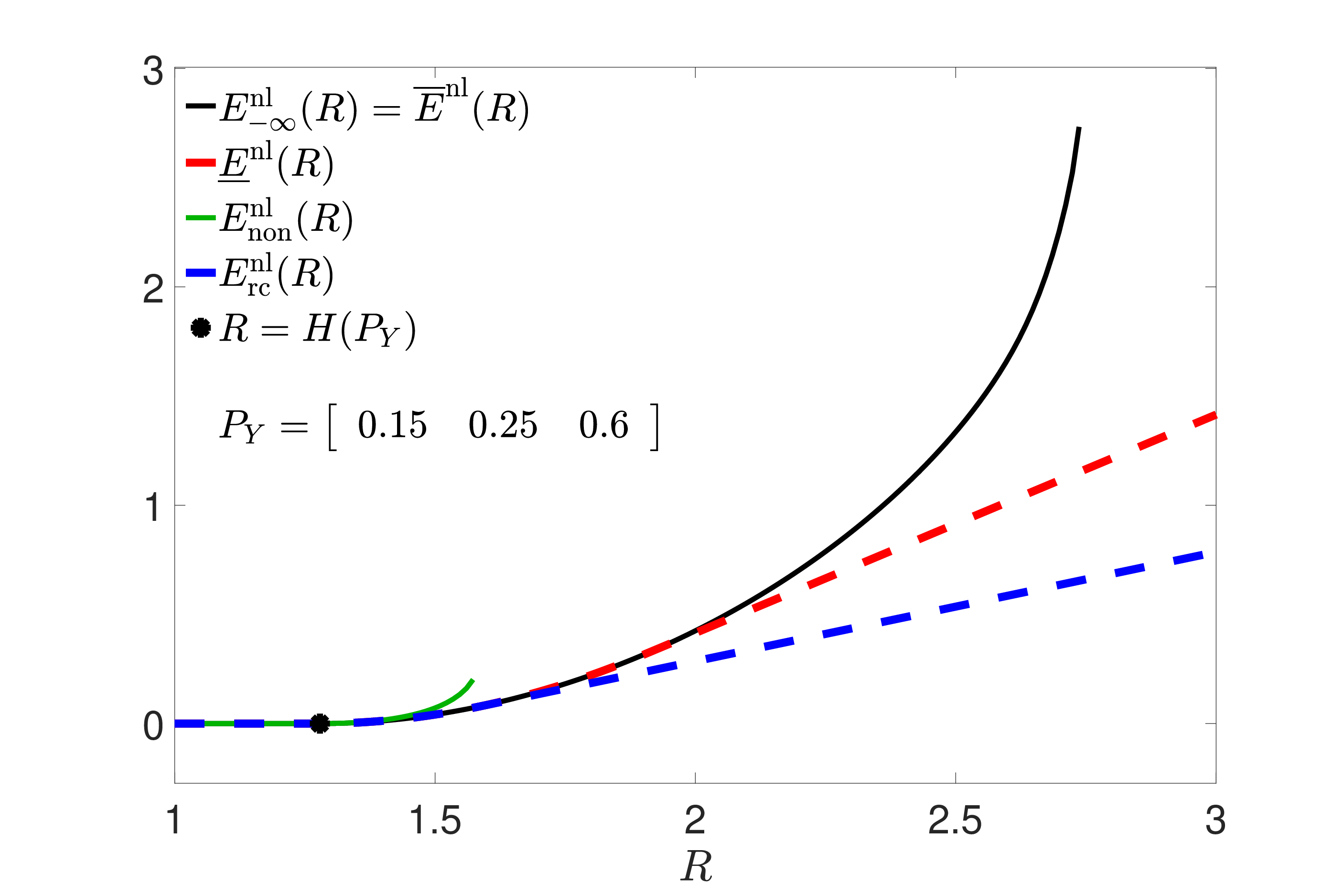}%
    }
    \caption{Examples of error exponents for noiseless channels under the uniform, non-uniform, and $H_{-\infty}$-constrained formulations.}
    \label{fig-nl}
\end{figure}

Theorem \ref{thm-nl-Eu} and Theorem \ref{thm-nl-El} are proved in Section \ref{sec-nl-uni}; Theorem \ref{thm-nl-non} is proved in Section \ref{sec-nl-non}; and Theorem \ref{thm-nl-ren} is proved in Section \ref{sec-nl-ren}. Figure \ref{fig-nl} presents plots of these proposed bounds or exponents as well as the random coding exponent $E_\rc^\nl(R)$ (see Remark \ref{rmk-cuff}) using a binary and a ternary target output distribution, respectively. Under the non-uniform and the $H_{-\infty}$-constrained formulation, an exact error exponent is established. Under the uniform formulation, the error exponent $E_\su^\nl(R)$ lies between the two bounds $\Eu^\nl(R)$ and $\El^\nl(R)$. However, these two bounds overlap in the vicinity of $R = H(P_Y)$, which is given by the following proposition.

\begin{proposition}
\label{prop-nl-exact}
Let $R_\alpha^s$ be the value where the function $\Eu^\nl(\cdot)$ has a tangent line of slope $\alpha$. Then
\begin{align*}
        \El^\nl(R) &= 
        \begin{cases}
            \Eu^\nl(R)     & \text{if } 0\leq R \leq R_1^s, \\
            \Eu^\nl(R_1^s) + R - R_1^s & \text{if } R>R_1^s,
        \end{cases} \\
        E_{\rc}^\nl(R) &= 
        \begin{cases}
            \Eu^\nl(R)     & \text{if } 0\leq R \leq R_{1/2}^s, \\
            \Eu^\nl(R_{1/2}^s) + \dfrac12 \left(R - R_{1/2}^s\right) & \text{if } R>R_{1/2}^s.
        \end{cases}
\end{align*}
Here, $E_{\rc}^\nl(R)$ denotes the random-coding error exponent for noiseless channels and is given in Remark \ref{rmk-cuff}.
\end{proposition}

Proposition \ref{prop-nl-exact} is proved in Appendix \ref{app-nl-exact}. In summary, for $R \in [H(P_Y),R_{1/2}^s]$, all three curves -- $\Eu^\nl(R)$, $\El^\nl(R)$, and $E_\rc^\nl(R)$ -- overlap, implying that random coding is tight. For $R\in[R_{1/2}^s,R_1^s]$, random coding is not tight; nevertheless, $\Eu^\nl(R)$ and $\El^\nl(R)$ still coincide, and an exact exponent exists via the construction of a deterministic code that is provided in the proof of Theorem \ref{thm-nl-El}. For $R>R_1^s$, $\Eu^\nl(R)$ and $\El^\nl(R)$ no longer coincide. It is also noteworthy that $\Eu^\nl(R)$ diverges when $R>H_{-\infty}(P_Y)$; see Lemma \ref{lem-nl-Euinf}.

Assuming uniform messages is conventional in many problems in information theory. Nonetheless, in the soft covering problem, adhering to this uniform formulation highlights an interesting discrepancy between rational and irrational output distributions for noiseless channels. This is because $\tPy$ in \eqref{tPy} can only take rational values as it is a multiple of $1/M$. If $P_Y$ is irrational in some symbols, then approximating $P_Y^n$ by $\tPy$ amounts to a Diophantine approximation, which necessarily incurs nonzero errors for arbitrarily large $M$ according to studies in number theory. We summarize the rational-irrational discrepancy in Theorem \ref{thm-irr} and Theorem \ref{thm-ra}.

\begin{definition}\label{def-ra}
    Given a distribution $P_Y\in\mP(\mY)$, we say that $P_Y$ is rational, denoted $P_Y\in\mathbb{Q}^{\carY}$, if $P_Y(y)\in\mathbb{Q}$ for all $y\in\mY$. Otherwise, we say that $P_Y$ is irrational, denoted $P_Y\notin\mathbb{Q}^{\carY}$.
\end{definition}

\begin{theorem}[Linear converse for irrational $P_Y$]\label{thm-irr}         
    Suppose $W_{Y|X}$ is noiseless.
    Then for almost every $P_Y\notin\mathbb{Q}^{\carY}$ (in the sense of Lebesgue measure on $[0,1]^{\carY}$), we have $E_\su^\nl(R) \leq 2 R$.
\end{theorem}

\begin{theorem}[Infinite achievability for rational $P_Y$]\label{thm-ra}
    Suppose $W_{Y|X}$ is noiseless and $P_Y\in\mathbb{Q}^{\carY}$, say, $P_Y(y) = \frac{A_y}{B_y}$ with coprime $A_y,B_y\in\mathbb{Z}$ for $y\in\mY$. Then
    $E_\su^\nl(R) = \infty$ when $R \geq R_\infty(P_Y) := \log(\lcm(\{B_y\}_{y\in\mY}))$, where $\lcm(\{B_y\}_{y\in\mY})$ refers to the least common multiple of $B_y$'s among all $y\in\mY$.
\end{theorem}

Theorem \ref{thm-irr} and Theorem \ref{thm-ra} are proved in Section \ref{sec-nl-disc}. Evidently, the discrepancy concerns whether an infinite exponent, corresponding to perfect covering with no error, is achievable. For rational $P_Y$'s, this is achievable at high rates, whereas for most irrational $P_Y$'s, it is not. It is noteworthy that the linear converse in Theorem \ref{thm-irr} applies to most irrational probabilities, which are dense in $[0,1]^{\carY}$. However, it may not be possible to claim a universal converse for an arbitrary irrational $P_Y$. Let $\by\in\mY$ be an irrational symbol, i.e., $P_Y(\by)\notin\mathbb{Q}$. If $P_Y(\by)$ is an irrational algebraic number, then the linear converse of $2R$ holds, by virtue of the Thue-Siegel-Roth theorem \cite[Thm.~3.1.4]{queffelec2013diophantine}. On the other hand, $P_Y(\by)$ can also be a `good' irrational number, known as a Liouville number \cite[Def.~3.1.8]{queffelec2013diophantine}, which admits infinitely many integer pairs $K,M\in\mathbb{Z}$ such that $\left|K/M - P_Y(\by) \right|\leq M^{-(2+\epsilon)}$ for all $\epsilon>0$; consequently, it is not clear how to establish a linear converse.

To eliminate such a discrepancy, $\tPy$ must be allowed to also take irrational values; hence, messages cannot be uniformly distributed. Switching from the uniform to the non-uniform formulation indeed removes this discrepancy and yields an exact exponent. However, observing Figure \ref{fig-nl}, the error exponent $E_\sn^\nl(R)$ under the non-uniform formulation deviates from both $\Eu^\nl(R)$ and $\El^\nl(R)$ for all $R$, including in the low-rate regime near $H(P_Y)$. In fact, for noiseless channels, soft covering under the non-uniform formulation is equivalent to lossless source coding (see Lemma \ref{lem-equiv} in Section \ref{sec-nl-non}). In order to remove the rational-irrational discrepancy without substantially altering the feature of soft covering, we proposed the $H_{-\infty}$-constrained formulation in Definition \ref{def-ren}. Its exponent $E_\sr^\nl(R)$ is exact, and is identical to $\Eu^\nl(R)$ by Theorem \ref{thm-nl-ren}. Thus, in the neighborhood of $H(P_Y)$, this new formulation aligns with the uniform formulation.

\subsection{Results on error exponents for noisy channels}\label{sec-result-ee-ny}

More generally, in this subsection, we consider noisy channels. Note that $E_\su^\nl(R)$ in Theorem \ref{thm-nl-El} becomes a straight line of slope 1 for large $R$. Consequently, at high rates, this noiseless achievability can exceed the noisy random-coding exponent $E_\rc(R)$ whose slope is $1/2$, thereby demonstrating a high-rate improvement via input covering over random coding under the uniform formulation for noisy channels. This noisy achievability is summarized in Theorem \ref{thm-ee-achi}. Furthermore, we provide a general converse bound for noisy channels under the non-uniform formulation in Theorem \ref{thm-ee-con}.

\begin{theorem}[High-rate achievability]\label{thm-ee-achi} For noisy channels under the uniform, non-uniform, or $H_{-\infty}$-constrained formulation, we have the following lower bounds on the respective error exponents defined in Definition \ref{def-exp}.
\begin{align*}
    E_\su(R) &\geq \El(R)
        := \max_{P_X\in\mS} \ \max_{\alpha\geq1} \left\{ \frac{\alpha-1}{\alpha} \left[R - H_{\frac{1}{\alpha}}(P_X)\right] \right\}, \\
    E_\sn(R) 
        &\geq \max_{P_X\in\mS} \ \max_{\alpha\geq1} \left\{ (\alpha-1) \left[R - H_{\frac{1}{\alpha}}(P_X)\right] \right\}, \\
    E_\sr(R)
        &\geq \max_{P_X\in\mS} \ \max_{\alpha\in(-\infty,0)\cup[1,\infty)} \left\{ \frac{\alpha-1}{\alpha} \left[R - H_{\frac{1}{\alpha}}(P_X)\right] \right\}.
\end{align*}
\end{theorem}

\begin{theorem}[Converse]
\label{thm-ee-con} 
For noisy channels under the non-uniform formulation, we have the following upper bound on $E_\sn(R)$ defined in Definition \ref{def-exp}.
\begin{align*}
    E_\sn(R) \le \Eu(R) 
      := \ & \max_{P_X\in\mS} \ \min_{V\in\mP(\mY|\mX): \iota(P_X V_{Y|X}) \geq R} D(V\|W|P_X) \\
      = \ & \max_{P_X\in\mS} \ \max_{\alpha\ge1} \left\{ (\alpha-1) \left[ R - \mathbb{E}_{P_X} D_\alpha(W_{Y|X}\|P_Y) \right] \right\}.
\end{align*}
\end{theorem}

Theorem \ref{thm-ee-achi} is proved in Section \ref{sec-ee-ny-achi} and Theorem \ref{thm-ee-con} is proved in Section \ref{sec-ee-ny-con}. Figure \ref{fig-ny} exhibits examples of the proposed bounds for the uniform formulation, as well as the random coding error exponent $E_\rc(R)$ in Remark \ref{rmk-cuff}. Immediately following Theorem \ref{thm-ra} and Theorem \ref{thm-ee-achi}, we obtain Corollary \ref{cor-ra-ny} below, which demonstrates an achievable infinite error exponent for noisy channels, provided that there exists some rational input distributions $P_X\in\mS$ and the rate is sufficiently large. 

\begin{corollary}[Infinite achievability for noisy channels]\label{cor-ra-ny}
    Suppose some $P_X\in\mS$ is rational, say, $P_X(x) = \frac{A_x}{B_x}$ with coprime $A_x,B_x\in\mathbb{Z}$ for $x\in\mX$. Then
    $E_\su(R) = \infty$ when $R \geq R_\infty(P_X) := \log(\lcm(\{B_x\}_{x\in\mX}))$.
\end{corollary}

\begin{figure}[!htbp]
    \centering
    \subfloat[A fully noisy binary channel]{%
        \includegraphics[width=0.49\linewidth]{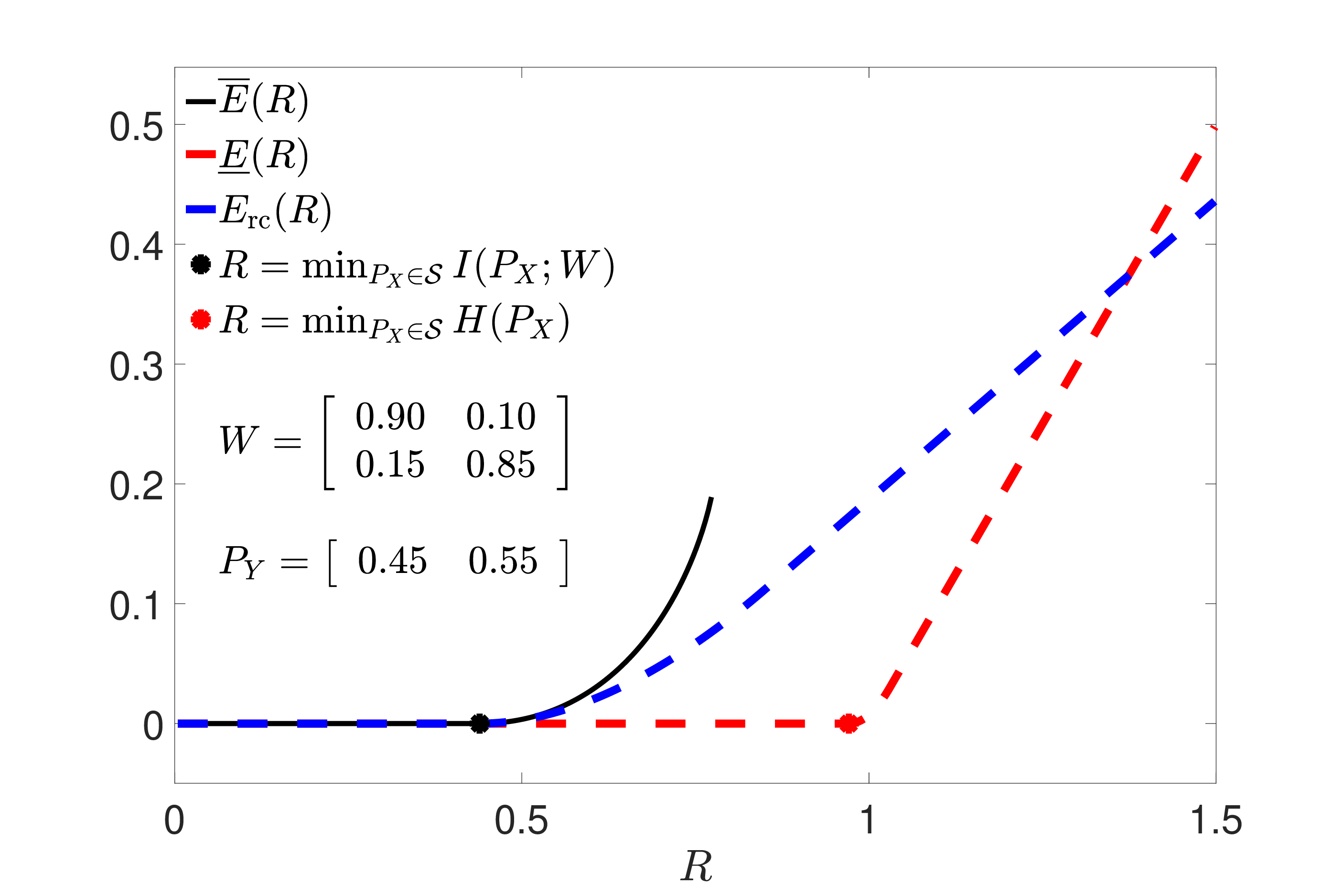}%
    }\hfill
    \subfloat[A fully noisy ternary channel]{%
        \includegraphics[width=0.49\linewidth]{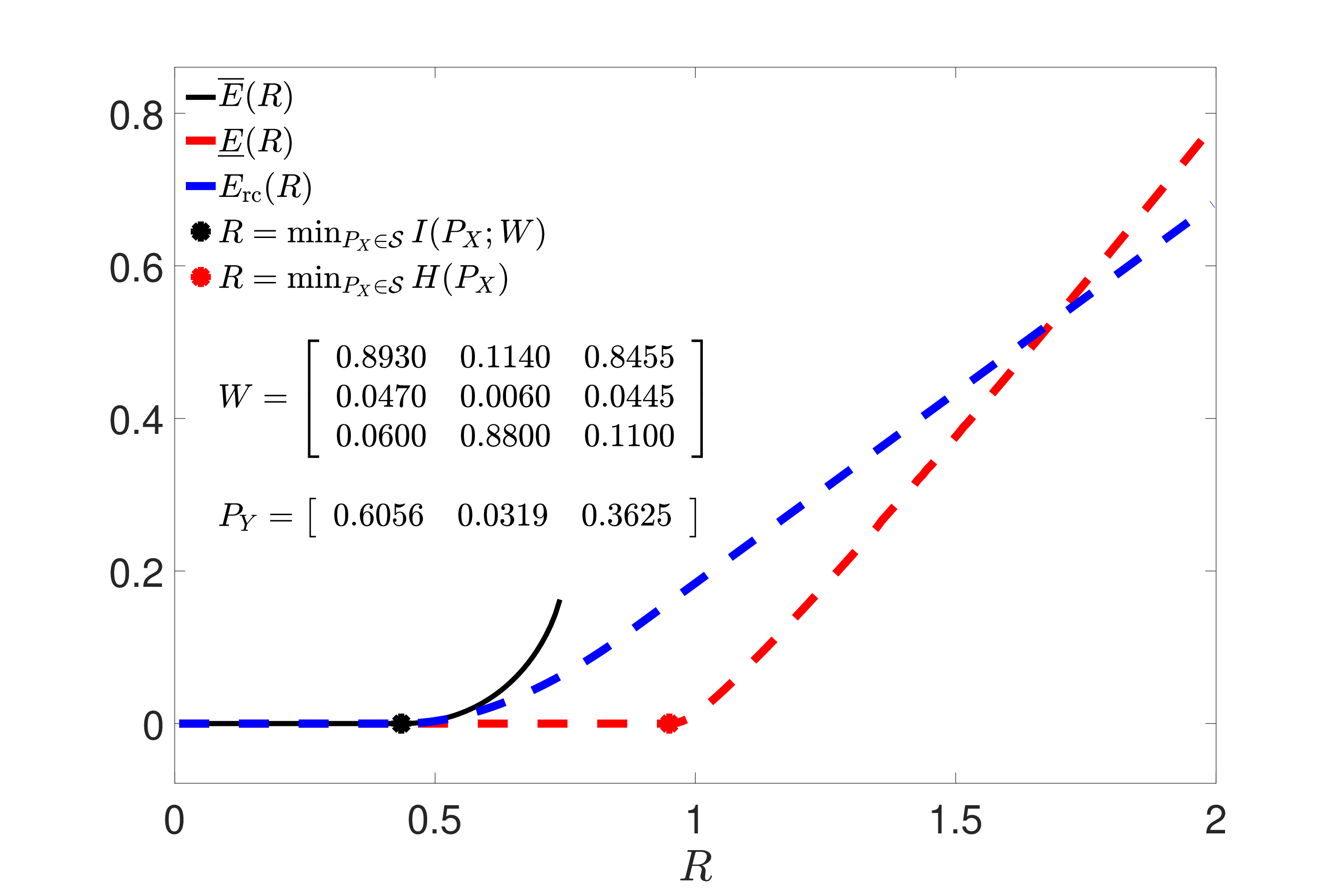}%
    }
    \caption{Examples of the converse bound $\Eu(R)$, the achievability bound $\El(R)$,
    and the random-coding error exponent $E_{\rc}(R)$ for noisy channels.}
    \label{fig-ny}
\end{figure}

\subsection{Summary of main results} \label{sec-result-summary}

In Table \ref{table-summary}, we summarize the aforementioned results for the uniform, non-uniform, and $H_{-\infty}$-constrained formulations, in conjunction with Remark \ref{rmk-relation}.

\begin{table}[!htbp]
\centering
\caption{Summary of Results}
\label{table-summary}

\newcolumntype{C}[1]{>{\centering\arraybackslash}m{#1}}

\renewcommand{\arraystretch}{1.25}
\begin{tabular}{|C{2.25cm}|C{6.25cm}|C{6.25cm}|}
\hline

Formulation & 
\multicolumn{2}{c|}{Strong Converse Exponent} \\ \hline \hline

Uniform, Non-uniform, $H_{-\infty}$-constrained & \multicolumn{2}{c|}{$ \displaystyle \max_{\alpha,\beta\in[0,1]}
    \left[ J_{\alpha,\beta}\left(W_{Y|X}\|P_Y\right) + \beta(\alpha-1) R \right] $} \\ 

\specialrule{3pt}{0pt}{0pt}

\multirow{2}{*}{Formulation} & \multicolumn{2}{c|}{Error Exponent for Noiseless Channels} \\ \cline{2-3}

& Achievability & Converse \\ \cline{1-3} 

\hline \hline

\multirow{2}{*}[-3pt]{\makecell{Uniform}} & 
\addstackgap[5pt]{ $\displaystyle \max_{\alpha\geq1} 
    \left\{ \frac{\alpha-1}{\alpha} \left[R - H_{\frac{1}{\alpha}}(P_Y)\right] \right\}$ }  & 
\addstackgap[5pt]{ $\displaystyle \max_{\alpha\in(-\infty,0)\cup[1,\infty)} 
    \left\{ \frac{\alpha-1}{\alpha} \left[R - H_{\frac{1}{\alpha}}(P_Y)\right] \right\}$ } \\ \cline{2-3}

& $\infty$ if $P_Y\in\mathbb{Q}^{\carY}$ and $R\geq R_\infty(P_Y)$ & 
$ 2 R$ for almost every $P_Y\notin\mathbb{Q}^{\carY}$\\ \cline{1-3}

\multirow{1}{*}[-3pt]{Non-uniform} & 
\multicolumn{2}{c|}{
    \addstackgap[5pt]{$\displaystyle \max_{\alpha\geq1} \left\{ (\alpha-1) \left[R - H_{\frac{1}{\alpha}}(P_Y)\right] \right\}$}} \\ \cline{1-3}

\multirow{1}{*}[-3pt]{$H_{-\infty}$-constrained} & 
\multicolumn{2}{c|}{
    \addstackgap[5pt]{$ \displaystyle \max_{\alpha\in(-\infty,0)\cup[1,\infty)} 
    \left\{ \frac{\alpha-1}{\alpha} \left[R - H_{\frac{1}{\alpha}}(P_Y)\right] \right\} $}} \\ \cline{1-3}

\specialrule{3pt}{0pt}{0pt}

\multirow{2}{*}{Formulation} & \multicolumn{2}{c|}{Error Exponent for Noisy Channels} \\ \cline{2-3}

& High-rate Achievability & Converse \\ \cline{1-3} 

\hline\hline
 
\multirow{2}{*}[-3pt]{\makecell{Uniform}} & 
\addstackgap[5pt]{$\displaystyle \max_{P_X\in\mS} \ \max_{\alpha\geq1} 
    \left\{ \frac{\alpha-1}{\alpha} \left[R - H_{\frac{1}{\alpha}}(P_X)\right] \right\} $} & 
\multirow[c]{4}{*}[-12pt]{%
    \makecell[c]{ 
        $\displaystyle \max_{P_X \in \mathcal{S}} \ \max_{\alpha \ge 1} \left\{ (\alpha-1) \left[ R - \mathbb{E}_{P_X} D_{\alpha}(W_{Y|X}\|P_Y) \right] \right\}$} } \\ \cline{2-2}

& $\infty$ if $\exists \ P_X\in\mS\cap\mathbb{Q}^{\carX}$ and $R\geq R_\infty(P_X)$ & \\ \cline{1-2}

\multirow{1}{*}[-3pt]{Non-uniform} & 
\addstackgap[5pt]{$ \displaystyle \max_{P_X\in\mS} \ \max_{\alpha\geq1} \left\{ (\alpha-1) \left[R - H_{\frac{1}{\alpha}}(P_X)\right] \right\} $} & \\ \cline{1-2}

\multirow{1}{*}[-3pt]{$H_{-\infty}$-constrained} & 
\addstackgap[5pt]{$ \displaystyle \max_{P_X\in\mS} \ \max_{\alpha\in(-\infty,0)\cup[1,\infty)} 
    \left\{ \frac{\alpha-1}{\alpha} \left[R - H_{\frac{1}{\alpha}}(P_X)\right] \right\} $} & \\ \cline{1-2}
\hline

\end{tabular}
\end{table}

\section{Strong Converse Exponent}
\label{sec-sc}
In this section, we prove Theorem \ref{thm-sc}. The sketch of the proof is as follows. In Section \ref{sec-sc-con}, we establish a lower bound for $\Ga_\sn(R) $, serving as a converse result; while Section \ref{sec-sc-achi} presents an achievability result that provides an upper bound for $\Ga_\su(R)$. Both bounds are expressed in variational forms in terms of the optimization over joint distributions. We convey them into dual representations in Section \ref{sec-sc-dual} and show that they coincide when $R < \minI$. Since $\Ga_\sn(R)\leq \Ga_\su(R)$, this coincidence characterizes an exact exponent for both uniform and non-uniform formulations. Additionally, in Appendix \ref{app-arimoto}, we provide an alternative proof of the converse part of Theorem \ref{thm-sc} only for the uniform formulation following Arimoto's techniques in \cite{arimoto1973converse}.

\subsection{A lower bound for the strong converse exponent}
\label{sec-sc-con}
In the following, we show a lower bound for $\Ga_\sn(R)$ for the non-uniform formulation, denoted by $\Gl(R)$. 

\begin{proposition}[Converse]
\label{prop-sc-con}
The strong converse exponent $\Ga_\sn(R)$ in the non-uniform formulation can be bounded from below as follows:
    $$ \Ga_\sn(R) \geq \Gl(R) 
    := \min_{Q_X\in\mP(\mX)} \ \min_{s\geq0} \ \max\big\{s, \Ga(s,Q_X,R)\big\}, $$
    where
    \begin{equation}\label{Ga-s-Qx-R}
        \Ga(s,Q_X,R) := \min_{V\in\mP(\mY|\mX): D(V\|W|Q_X)\leq s} \left[ D(Q_X V\|P_Y) + \big|I(Q_X;V) - R\big|^+ \right].
    \end{equation}
\end{proposition}

\begin{proof}
We use a hypothesis testing perspective,
and apply the following well-known property of the total variation (e.g.,\cite[Def.~2.24]{moser2019advanced}) in terms of decision regions:
\begin{equation}\label{setA}
    \frac12 \left\|\tPyQ - P_Y^n \right\|_1 
    \geq \left| \tPyQ(\mA) - P_Y^n(\mA) \right|, \quad \forall \mA \subseteq \mY^n.
\end{equation}
We take a collection of conditional types around each codeword to create a set $\mathcal{A}$ that could be used to discriminate $\tPy$ from $P_Y^n$, as follows:
\begin{equation}\label{setA-def}
    \mA = \bigcup_{Q_X\in\mP_n(\mX)} \
    \bigcup_{i:X^n(i)\in \mT_{Q_X}} \ 
    \bigsqcup_{V\in\mV_n(Q_X)} \mT_V(X^n(i)).
\end{equation}
That is, restrict to the output type classes whose types are in the form of $Q_Y = Q_X V$, where $Q_X$ is the type of some codeword and $V\in\mV_n(Q_X)$. Here $\mV_n(Q_X)$ can be any subset of $\mP_n(\mY|Q_X)$. Later, we will choose an appropriate subset to yield the optimal bound. When a codeword $X^n(i)$ is fixed, so is its type. Then the $V$-shells $\mT_V(X^n(i))$ for different $V$ must be disjoint. We thereby write disjoint union $\bigsqcup_{V\in\mV_n(Q_X)}$ in $\mA$.

Consider a codeword $X^n(i)$ whose type is $Q_X$, i.e., $X^n(i)\in \mT_{Q_X}$ for some $Q_X\in\mP_n(\mX)$. We have
\begin{align*}
    W_{Y|X}^n(\mA|X^n(i)) 
    &= W_{Y|X}^n\left(\bigcup_{Q_X'\in\mP_n(\mX)} \
    \bigcup_{j:X^n(j)\in \mT_{Q_X'}} \ 
    \bigsqcup_{V\in\mV_n(Q_X')} \mT_V(X^n(j)) \bigg|X^n(i)\right) \\
    &\overset{a}{\geq} W_{Y|X}^n \left( \bigsqcup_{V\in\mV_n(Q_X)} \mT_V(X^n(i)) \bigg| X^n(i) \right) \\
    &\overset{b}{=} 1 - W_{Y|X}^n \left( \bigsqcup_{V\notin\mV_n(Q_X)} \mT_V(X^n(i)) \bigg| X^n(i) \right) \\
    &\overset{c}{\geq} 1 - \sum_{V\notin\mV_n(Q_X)} 2^{-nD(V\|W|Q_X)} \\
    &\overset{d}{\geq} 1 - (n+1)^{|\mX||\mY|} \max_{V\notin\mV_n(Q_X)} 2^{-nD(V\|W|Q_X)} \\
    &= 1 - (n+1)^{|\mX||\mY|} \exp_2 \left\{ -n \min_{V\notin\mV_n(Q_X)} D(V\|W|Q_X) \right\},
\end{align*}
where in $(a)$ we restrict to the subset in which $j=i$; $(b)$ follows from the fact that $\bigsqcup_{V\in\mP_n(\mY|Q_X)} \mT_V(X^n(i)) = \mY^n$; $(c)$ and $(d)$ follow from properties of types. By averaging over all codewords, we obtain that 
\begin{align*}
    \tPyQ(\mA) 
    &=  \sum_{i=1}^M  q(i) \sum_{Q_X\in\mP_n(\mX)} \mathbbm{1}\{X^n(i)\in\mT_{Q_X}\} \ W_{Y|X}^n(\mA|X^n(i)) \\
    &\geq 1 - (n+1)^{|\mX||\mY|} \sum_{Q_X\in\mP_n(\mX)} \sum_{i=1}^M q(i) \mathbbm{1}\{X^n(i)\in\mT_{Q_X}\} \exp_2 \left\{ -n \min_{V\notin\mV_n(Q_X)} D(V\|W|Q_X) \right\} \\
    &\overset{a}{\geq} 1 - (n+1)^{|\mX|(|\mY|+1)} \max_{Q_X\in\mP_n(\mX)} \sum_{i=1}^M q(i) \mathbbm{1}\{X^n(i)\in\mT_{Q_X}\} \exp_2 \left\{ -n \min_{V\notin\mV_n(Q_X)} D(V\|W|Q_X) \right\} \\
    &\overset{b}{\geq} 1 - (n+1)^{|\mX|(|\mY|+1)} \max_{Q_X\in\mP_n(\mX)} \exp_2 \left\{ -n \min_{V\notin\mV_n(Q_X)} D(V\|W|Q_X) \right\} \\
    &= 1 - (n+1)^{|\mX|(|\mY|+1)} \exp_2 \left\{ -n \min_{Q_X\in\mP_n(\mX)} \ \min_{V\notin\mV_n(Q_X)} D(V\|W|Q_X) \right\},
\end{align*}
where $(a)$ follows from property of types and $(b)$ from $\sum_{i=1}^M q(i) \mathbbm{1}\{X^n(i)\in\mT_{Q_X}\} \leq 1$. On the other hand, 
\begin{align*}
    P_Y^n(\mA) 
    &\overset{a}{=} \sum_{Q_Y\in\mP_n(\mY)} \left|\mA\cap\mT_{Q_Y}\right| \ 2^{-n[D(Q_Y\|P_Y) + H(Q_Y)]} \\
    &= \sum_{Q_Y\in\mP_n(\mY)} \left| \left(\bigcup_{Q_X\in\mP_n(\mX)} \
        \bigcup_{i:X^n(i)\in \mT_{Q_X}} \ 
        \bigsqcup_{V\in\mV_n(Q_X)} \mT_V(X^n(i)) \right) \bigcap \mT_{Q_Y} \right| 2^{-n[D(Q_Y\|P_Y) + H(Q_Y)]} \\
    &= \sum_{Q_Y\in\mP_n(\mY)} \left| \bigcup_{Q_X\in\mP_n(\mX)} \
        \bigcup_{i:X^n(i)\in \mT_{Q_X}} \ 
        \bigsqcup_{V\in\mV_n(Q_X)} \mT_V(X^n(i)) \cap \mT_{Q_Y} \right| 2^{-n[D(Q_Y\|P_Y) + H(Q_Y)]} \\
    &\overset{b}{\leq} \sum_{Q_X\in\mP_n(\mX)} \ \sum_{V\in\mV_n(Q_X)} 
        \left|\bigcup_{i:X^n(i)\in \mT_{Q_X}} \mT_V(X^n(i)) \cap \mT_{Q_X V}\right| 2^{-n[D(Q_X V\|P_Y) + H(Q_X V)]} \\
    &\leq \sum_{Q_X\in\mP_n(\mX)} \ \sum_{V\in\mV_n(Q_X)} 
        \min\left\{\sum_{i=1}^M \mathbbm{1}\{X^n(i)\in \mT_{Q_X}\} \big|\mT_V(X^n(i))\big|, \big|\mT_{Q_X V}\big|\right\}
        2^{-n[D(Q_X V\|P_Y) + H(Q_X V)]} \\
    &\overset{c}{\leq} \sum_{Q_X\in\mP_n(\mX)} \ \sum_{V\in\mV_n(Q_X)} \min\left\{M\cdot 2^{nH(V|Q_X)}, 2^{nH(Q_X V)}\right\} 2^{-n[D(Q_X V\|P_Y) + H(Q_X V)]} \\
    &= \sum_{Q_X\in\mP_n(\mX)} \ \sum_{V\in\mV_n(Q_X)} 2^{-n\left[D(Q_X V\|P_Y) + |I(Q_X;V)-R|^+\right]} \\
    &\overset{d}{\leq} (n+1)^{|\mX|(|\mY|+1)} \max_{Q_X\in\mP_n(\mX)} \ \max_{V\in\mV_n(Q_X)} 2^{-n\left[D(Q_X V\|P_Y) + |I(Q_X;V)-R|^+\right]} \\
    &= (n+1)^{|\mX|(|\mY|+1)} \exp_2 \left\{ -n \min_{Q_X\in\mP_n(\mX)} \ \min_{V\in\mV_n(Q_X)}  \left[ D(Q_X V\|P_Y) + \big|I(Q_X;V) - R\big|^+ \right] \right\},
\end{align*}
where $(a)$ follows from property of types; in $(b)$, the summation $\sum_{Q_Y\in\mP_n(\mY)}$ is missing because the non-emptiness of $\mT_V(X^n(i)) \cap \mT_{Q_Y}$ implies that $Q_Y$ can only take the unique value $Q_Y = Q_X V$; $(c)$ and $(d)$ follow from properties of types.
Inserting the above expressions into \eqref{setA}, we obtain that 
$$ \frac12 \left\|\tPyQ - P_Y^n \right\|_1 \geq 1 - 2(n+1)^{|\mX|(|\mY|+1)} 2^{-n\Gl(R,n)}, $$
where the exponent is
\begin{align*}
    \Gl(R,n) &= \min\left\{ 
    \min_{Q_X\in\mP_n(\mX)} \min_{V\notin\mV_n(Q_X)} D(V\|W|Q_X), \ 
    \min_{Q_X\in\mP_n(\mX)} \min_{V\in\mV_n(Q_X)} \left[ D(Q_X V\|P_Y) + \big|I(Q_X;V) - R\big|^+ \right]\right\} \\
    &= \min_{Q_X\in\mP_n(\mX)} 
    \min\left\{ \min_{V\notin\mV_n(Q_X)} D(V\|W|Q_X), \ 
    \min_{V\in\mV_n(Q_X)}  \left[ D(Q_X V\|P_Y) + \big|I(Q_X;V) - R\big|^+ \right]\right\}.
\end{align*}

It remains to choose a suitable $\mV_n(Q_X)$. We choose it to be the relative-entropy ball centered at $W$:
$$ \mV_n(Q_X) = \left\{ V\in\mP_n(\mY|\mX): D(V\|W|Q_X) \leq s(Q_X) \right\}, $$
where the quantity $s(Q_X)$, serving as the ball radius, can be chosen as an arbitrary function of $Q_X$, since any set $\mA$ of the form \eqref{setA-def} can satisfy \eqref{setA}.
As $n\to\infty$, $\mP_n(\mX)$ becomes dense in $\mP(\mX)$ and $\mV_n(Q_X)$ boils down to
$$ \mV(Q_X) := \left\{ V\in\mP(\mY|\mX): D(V\|W|Q_X) \leq s(Q_X)\right\}. $$
Consequently, the exponent $\Gl(R,n)$ converges as follows:
\begin{equation}\label{Ednl-minmin}
    \liminf_{n\to\infty} \Gl(R,n) 
    = \min_{Q_X\in\mP(\mX)} 
    \min\left\{ \inf_{V\notin\mV(Q_X)} D(V\|W|Q_X), \ 
    \min_{V\in\mV(Q_X)} \left[ D(Q_X V\|P_Y) + \big|I(Q_X;V) - R\big|^+ \right]\right\}.
\end{equation}
Since the choice of $s(Q_X)$ is free and we seek a lower bound, we choose $s(Q_X)$ to maximize the expression in \eqref{Ednl-minmin} for each $Q_X$, thereby obtaining the tightest possible lower bound. Specifically, given $Q_X\in\mP(\mX)$, we write 
$$ \mV(s,Q_X) := \left\{ V\in\mP(\mY|\mX): D(V\|W|Q_X) \leq s\right\} $$ and set
\begin{align*}
    s(Q_X) &= \argmax_{s\geq0} \ 
    \min\left\{ \inf_{V\notin\mV(s,Q_X)} D(V\|W|Q_X), \ 
    \min_{V\in\mV(s,Q_X)} \left[ D(Q_X V\|P_Y) + \big|I(Q_X;V) - R\big|^+ \right]\right\} \\
    &= \argmax_{s\geq0} \
    \min\left\{ \inf_{V\notin\mV(s,Q_X)} D(V\|W|Q_X), \ \Ga(s,Q_X,R)\right\},
\end{align*}
where $\Ga(s,Q_X,R)$ is defined in \eqref{Ga-s-Qx-R}. Consequently, \eqref{Ednl-minmin} yields a lower bound
\begin{equation}
\label{ElR-1}
    \Ga_\sn(R) \geq 
    \Gl(R) := \min_{Q_X\in\mP(\mX)} \ \max_{s\geq0} \ 
    \min\left\{ \inf_{V\notin\mV(s,Q_X)} D(V\|W|Q_X), \ \Ga(s,Q_X,R)\right\}.
\end{equation}

In \eqref{ElR-1}, it might seem natural to write $\inf_{V\notin\mV(s,Q_X)} D(V\|W|Q_X) = s$ since $\mV(s,Q_X)$ describes a relative entropy ball. However, this is true only if $V$ can be those conditional distributions such that $V\ll W$ and $V\neq W$ under $Q_X$. To elaborate, our discussions and the derived bounds should be applicable to all channels. Without loss of generality, it is possible that the channel $W_{Y|X}$ contains both noiseless and noisy input symbols, meaning that $W_{Y|X}(y|x) = 1$ for some $x,y$ and $W_{Y|X}(y|x) < 1$ for others. If $Q_X$ is supported only on those noiseless input symbols, then $W$ is deterministic and hence for any finite $s\geq0$, 
\[
  D(V\|W|Q_X) 
   = \begin{cases}
        \infty & \text{if } D(V\|W|Q_X)>s, \\
        0      & \text{if } D(V\|W|Q_X)\leq s.
     \end{cases}
\]
The case $D(V\|W|Q_X) = 0$ implies that $V|_{\supp(Q_X)} = W$, i.e., $V$ is also noiseless when restricted to $x\in\supp(Q_X)$. As a result, $\Ga(s,Q_X,R) = D(Q_X W\|P_Y) + \big|H(Q_X W) - R\big|^+$, which is a constant independent of $s$, as illustrated in Figure \ref{fig-Gs}\subref{fig-Gs-a}. Then in \eqref{ElR-1} we have
\begin{align*}
     \max_{s\geq0} 
    \min\left\{ \inf_{V\notin\mV(s,Q_X)} D(V\|W|Q_X), \ \Ga(s,Q_X,R)\right\} 
    = \ & \max_{s\geq0} 
    \min\left\{ \infty, D(Q_X W\|P_Y) + \big|H(Q_X W) - R\big|^+\right\} \\
    = \ & D(Q_X W\|P_Y) + \big|H(Q_X W) - R\big|^+ \\
    \overset{a}{=} \ & \max_{s\geq0} \min\left\{s, D(Q_X W\|P_Y) + \big|H(Q_X W) - R\big|^+\right\} \\
    = \ & \max_{s\geq0} 
    \min\left\{s, \Ga(s,Q_X,R)\right\},
\end{align*}
where $(a)$ can be observed from Figure \ref{fig-Gs}\subref{fig-Gs-a}. Interestingly, even though $\inf_{V\notin\mV(s,Q_X)} D(V\|W|Q_X) = \infty \neq s$, in \eqref{ElR-1} it is still valid to write 
\begin{equation}\label{maxs}
    \max_{s\geq0} 
    \min\left\{ \inf_{V\notin\mV(s,Q_X)} D(V\|W|Q_X), \ \Ga(s,Q_X,R)\right\} 
    = \max_{s\geq0} 
    \min\left\{s, \Ga(s,Q_X,R)\right\}.
\end{equation}

If $Q_X$ has support on at least one noisy input symbol, $D(\cdot\|W|Q_X)$ can thereby take continuous values. In this case, writing \eqref{maxs} is explicit and correct. According to Lemma \ref{lem-fs} in Appendix \ref{app-fs}, $\Ga(s,Q_X,R)$ is non-increasing in $s$: it decreases over a certain interval and then remains constant. Therefore, $\max_{s\geq0} \min\left\{s, \Ga(s,Q_X,R)\right\}$ is attained at a unique fixed point where $s^* = \Ga(s^*,Q_X,R)$, i.e., the intersection point of the curve $\Ga(s,Q_X,R)$ and the straight line $s$. Such an intersection can occur either in the decreasing regime of $\Ga(s,Q_X,R)$ (Figure \ref{fig-Gs}\subref{fig-Gs-b}) or in the constant regime (Figure \ref{fig-Gs}\subref{fig-Gs-c}). 

To summarize, for any $Q_X$ supported on $\mX$, \eqref{maxs} is always true. Furthermore, since \eqref{maxs} is achieved at the intersection point $s^*$, we can equivalently write 
\begin{equation}\label{maxs=mins}
    \max_{s\geq0} \min\left\{s, \Ga(s,Q_X,R)\right\} 
    = \min_{s\geq0} \max\left\{s, \Ga(s,Q_X,R)\right\}
\end{equation}
from Figure \ref{fig-Gs}. The reason for doing so is that the right-hand side of \eqref{maxs=mins} facilitates a convenient derivation from variational forms to dual forms in Section \ref{sec-sc-dual}. Substituting \eqref{maxs} and \eqref{maxs=mins} into \eqref{ElR-1} completes the proof.
\end{proof}

\begin{figure}[!htbp]
    \centering
    \subfloat[$\supp(Q_X)\subseteq\mX_l$]{%
        \includegraphics[width=0.32\linewidth]{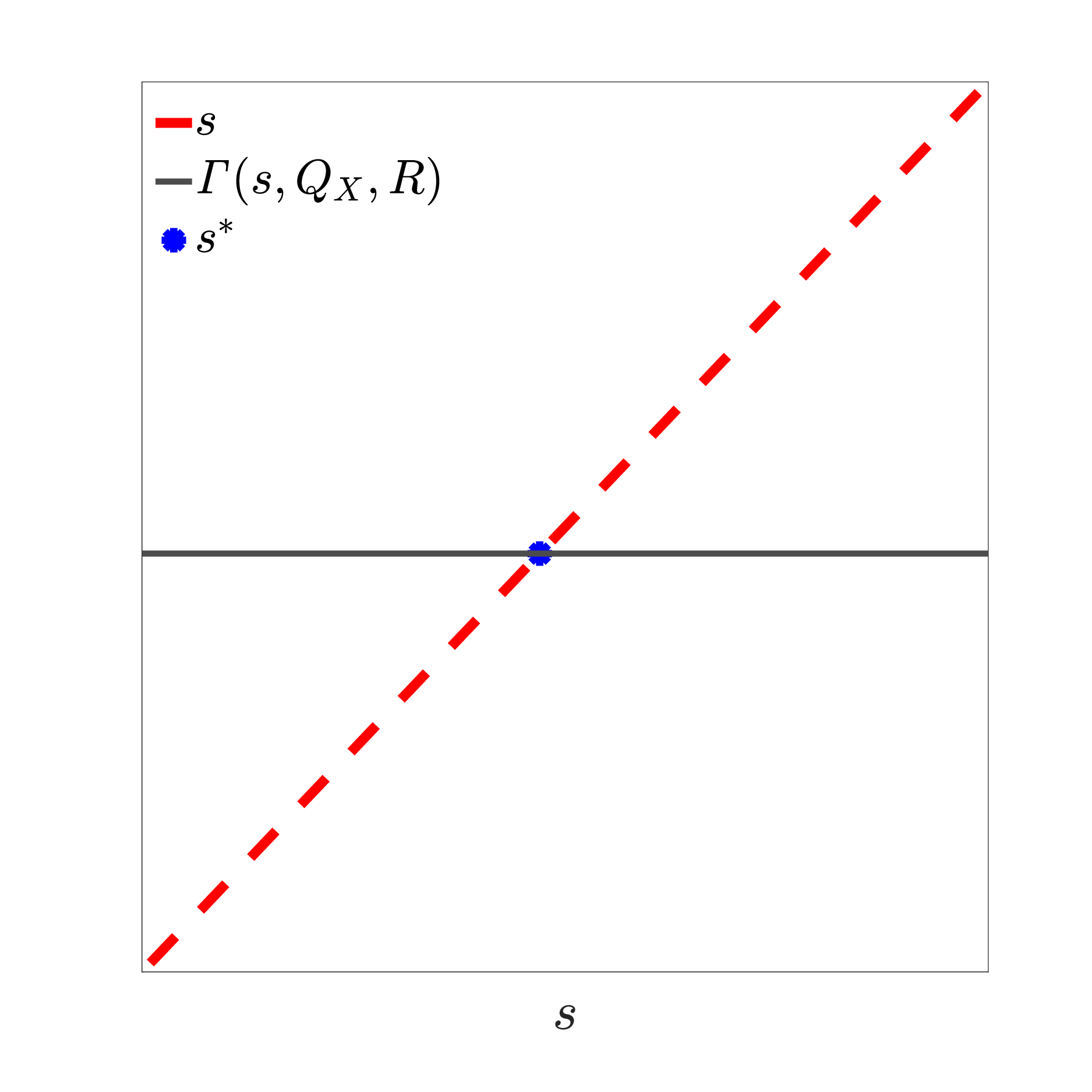}%
        \label{fig-Gs-a}%
    }\hfill
    \subfloat[$\supp(Q_X)\cap\mX_o\neq\varnothing$]{%
        \includegraphics[width=0.32\linewidth]{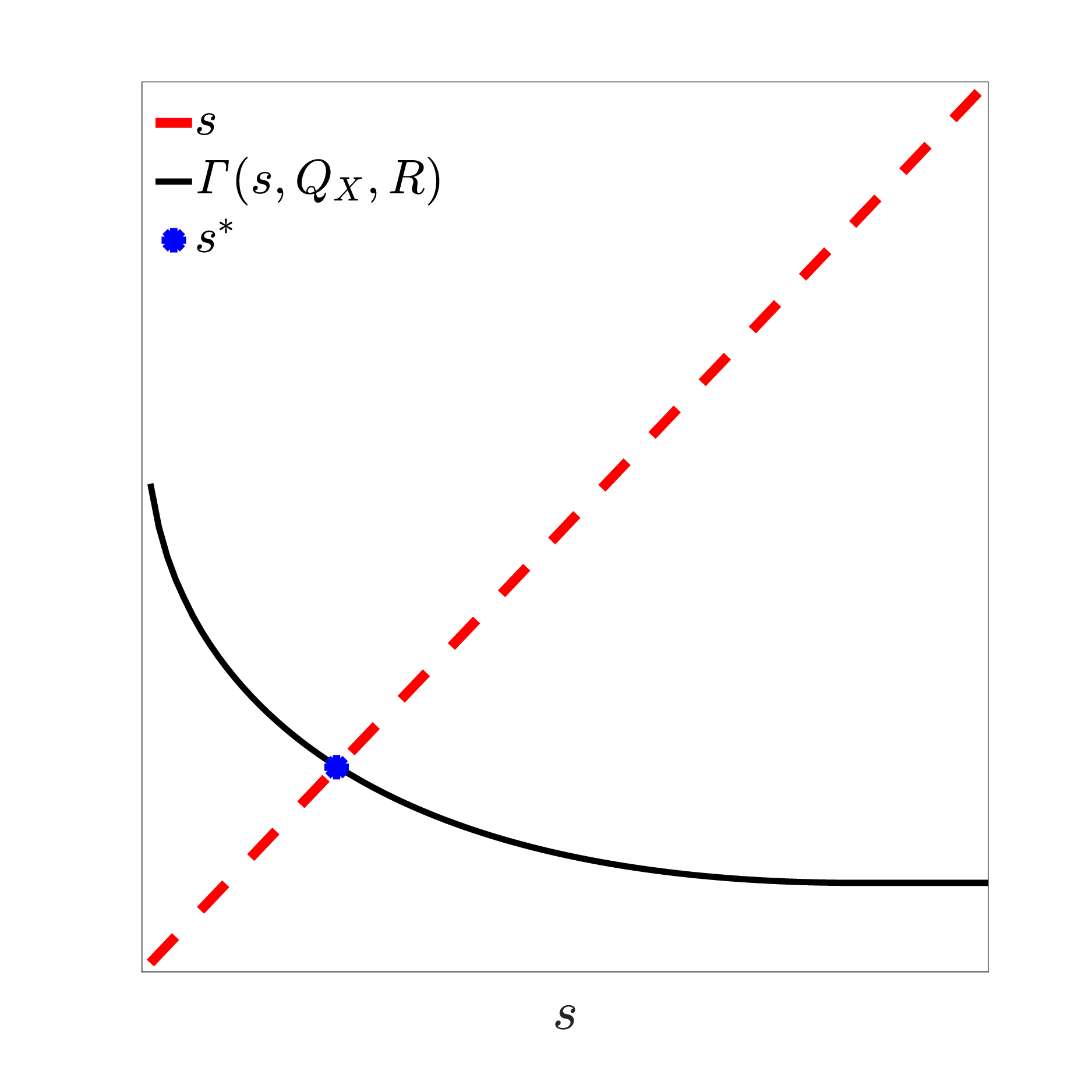}%
        \label{fig-Gs-b}%
    }\hfill
    \subfloat[$\supp(Q_X)\cap\mX_o\neq\varnothing$]{%
        \includegraphics[width=0.32\linewidth]{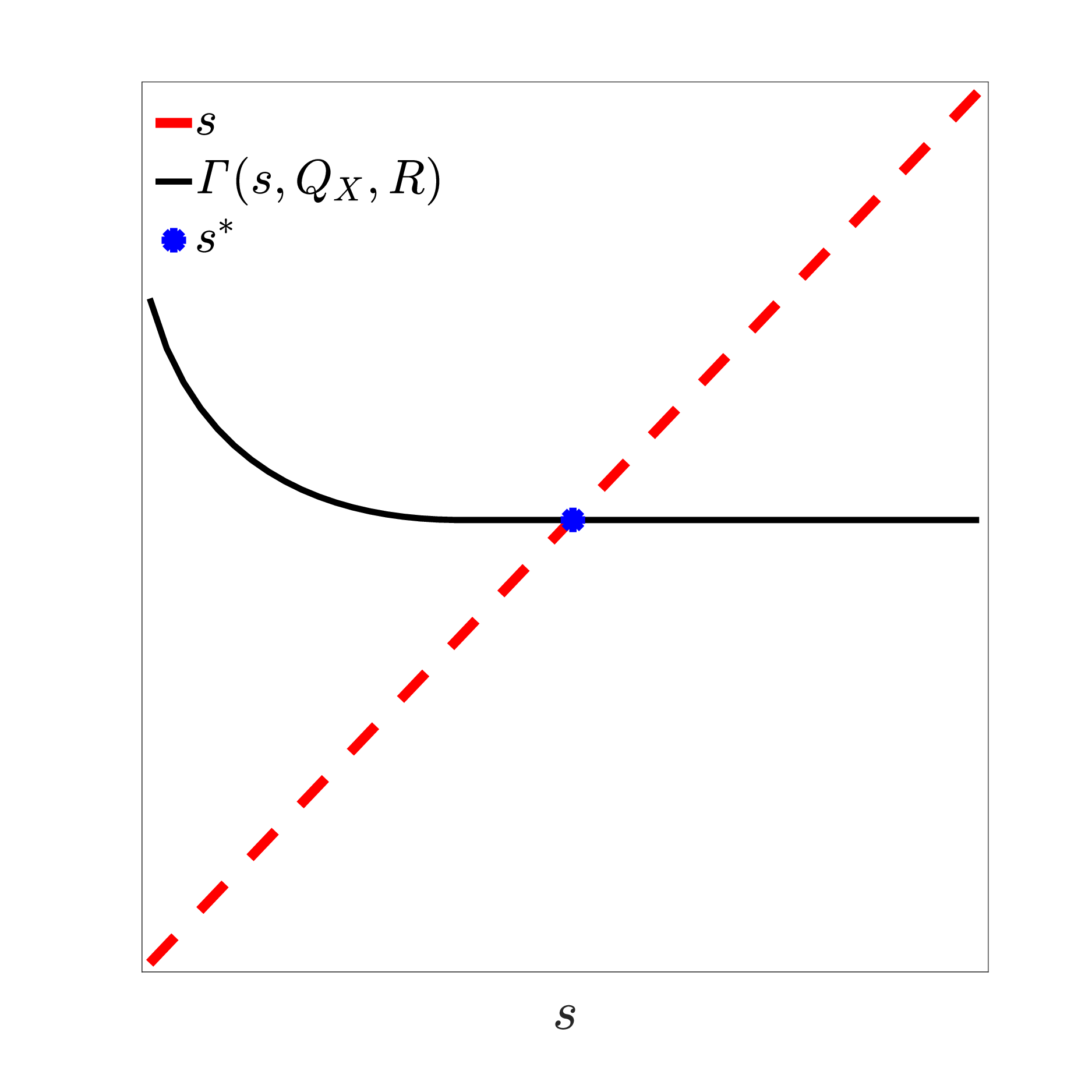}%
        \label{fig-Gs-c}%
    }
    \caption{Three possible ways in which $\Ga(s,Q_X,R)$ intersects $s$.
    Here, $\mX_l$ denotes the set of noiseless input symbols, and $\mX_o$ denotes the set of noisy input symbols.}
    \label{fig-Gs}
\end{figure}

The idea of proving Proposition \ref{prop-sc-con} is closely related to hypothesis testing. We are in fact looking for a decision region $\mA$ such that $\tPy(\mA^c)\leq 2^{-ns}$ and $P_Y^n(\mA) \leq 2^{-n\Ga(s,R)}$ with some exponent $\Ga(s,R)$. The average probability of testing error is $\frac12\big[\tPy(\mA^c) + P_Y^n(\mA)\big] = \frac12 - \frac14\big\|\tP_{Y^n|\mC} - P_Y^n \big\|_1$. Then we optimize over $s$ to find the optimal decision region.
Observing the expression in Proposition \ref{prop-sc-con}, the following properties of $\Gl(R)$ can be established, which are proved in Appendix \ref{app-lem-Gl0}.

\begin{lemma}\label{lem-Gl0}  We have the following properties of $\Gl(R)$: \ 

    (i) $\Gl(R) = 0$ when $R \geq \minI$. \quad 
    (ii) $\Gl(R) > 0$ when $R < \minI$.
\end{lemma}

\subsection{An achievability bound for the strong converse exponent}\label{sec-sc-achi}

In this subsection, we show an upper bound for $\Ga_\su(R)$ (the uniform formulation), denoted by $\Gu(R)$. We develop a novel deterministic code construction technique based on the type covering lemma. The key observation is that since we are operating below $\minI$, we cannot afford codeword repetitions, while random coding produces repeated codewords, albeit rarely. In contrast, we only cover joint types with mutual information less than $R$, and cover only once.

\begin{proposition}[Achievability]\label{prop-sc-achi} The strong converse exponent  $\Ga_\su(R)$ in the uniform formulation can be bounded from above as follows:
    $$ \Ga_\su(R) \leq \Gu(R) := \min_{Q_{XY}\in\mP(\mX\mY): I(Q_X;V) \leq R} \left[D(Q_Y\|P_Y) + \big| R - \iQ \big|^+ \right], $$ 
    where $\iQ$ is defined in Definition \ref{def-iQ}.
\end{proposition}

\begin{proof}
Consider any joint type $Q_{XY}\in\mP_n(\mX\mY)$. Following our notation convention, we introduce the backward conditional type $\bV_{X|Y} = Q_{XY}/Q_Y\in\mP_n(\mX|Q_Y)$. Take any $y^n\in \mT_{Q_Y}$. \eqref{tPy} yields that
\begin{align*}
    \tPy(y^n)
    &= \frac{1}{M} \sum_{\bV\in\mP_n(\mX|Q_Y)} \ \sum_{i=1}^M \mathbbm{1}\{X^n(i)\in \mT_{\bV}(y^n)\} \ W_{Y|X}^n(y^n|X^n(i)) \\
    &\overset{a}{=} \frac{1}{M} \sum_{\bV\in\mP_n(\mX|Q_Y)} k_{\bV}(y^n) \ 2^{-n\left[D(V\|W|Q_X) + H(V|Q_X) \right]},
\end{align*}
where $(a)$ follows from property of types. Furthermore, in $(a)$ we have defined 
\begin{equation}\label{kv}
    k_{\bV}(y^n) := \sum_{i=1}^M \mathbbm{1}\{X^n(i)\in \mT_{\bV}(y^n)\},
\end{equation}
which is the number of codewords that lie in the $\bV$-shell of $y^n$. In other words, $k_{\bV}(y^n)$ counts the number of codewords that have joint type $Q_{XY}$ with $y^n$. Since $P_Y^n(y^n) = 2^{-n[D(Q_Y\|P_Y) + H(Q_Y)]}$ according to property of types, we can evaluate the following ratio:
\begin{align}
    \frac{\tPy(y^n)}{P_Y^n(y^n)}
    &= \frac{1}{M} \sum_{\bV\in\mP_n(\mX|Q_Y)} k_{\bV}(y^n) \ 
        2^{n\left[D(Q_Y\|P_Y) + H(Q_Y) - D(V\|W|Q_X) - H(V|Q_X) \right]} \notag\\
    &= \frac{1}{M} \sum_{\bV\in\mP_n(\mX|Q_Y)} k_{\bV}(y^n) \ 
        2^{n\left[D(Q_Y\|P_Y) + I(Q_X;V) - D(V\|W|Q_X)\right]} \notag\\
    &= \sum_{\bV\in\mP_n(\mX|Q_Y)} k_{\bV}(y^n) \ 2^{n\left[\iQ - R\right]}, \label{ratio}
\end{align}
where the last equality follows from Lemma \ref{lem-Deq}.

Using the identity $|a-b| = a + b - 2\min\{a,b\}$, the total variation satisfies the following property:
\begin{equation}
\label{1-L1=min}
    1 - \frac12 \left\|\tPy - P_Y^n \right\|_1 
    = \sum_{y^n} \min\left\{\tPy(y^n), P_Y^n(y^n)\right\}.
\end{equation}
Combining this and \eqref{ratio}, we obtain that
\begin{align}\label{1-L1}
    & 1 - \frac12 \left\|\tPy - P_Y^n \right\|_1 
    = \sum_{Q_Y\in\mP_n(\mY)}  \ \sum_{y^n\in \mT_{Q_Y}} P_Y^n(y^n) 
        \min\left\{\frac{\tPy(y^n)}{P_Y^n(y^n)}, 1\right\} \notag\\
    &\overset{a}{=} \sum_{Q_Y\in\mP_n(\mY)}  2^{-n\left[D(Q_Y\|P_Y) + H(Q_Y)\right]} \sum_{y^n\in \mT_{Q_Y}} 
        \!\!\!\min\left\{\sum_{\bV\in\mP_n(\mX|Q_Y)} k_{\bV}(y^n) 2^{n\left[\iQ - R\right]}, 1\right\} \notag\\
    &= \sum_{Q_Y\in\mP_n(\mY)} 2^{-n\left[D(Q_Y\|P_Y) + H(Q_Y)\right]} \sum_{y^n\in \mT_{Q_Y}} 
        \!\!\!\min\left\{\sum_{\bV\in\mP_n(\mX|Q_Y)} k_{\bV}(y^n) 2^{n\left[\iQ - R\right]}, \sum_{\bV\in\mP_n(\mX|Q_Y)} \frac{1}{|\mP_n(\mX|Q_Y)|}\right\} \notag\\
    &\overset{b}{\geq} \sum_{Q_Y\in\mP_n(\mY)}  2^{-n\left[D(Q_Y\|P_Y) + H(Q_Y)\right]} \sum_{y^n\in \mT_{Q_Y}} \sum_{\bV\in\mP_n(\mX|Q_Y)} 
        \!\!\!\min\left\{k_{\bV}(y^n) 2^{n\left[\iQ - R\right]}, \frac{1}{|\mP_n(\mX|Q_Y)|}\right\} \notag\\
    &\overset{c}{\geq} \sum_{Q_{XY}\in\mP_n(\mX\mY)}  2^{-n\left[D(Q_Y\|P_Y) + H(Q_Y)\right]} \sum_{y^n\in \mT_{Q_Y}}
        \min\left\{k_{\bV}(y^n) \ 2^{n\left[\iQ - R\right]}, (n+1)^{-|\mX||\mY|}\right\}\notag\\
    &\overset{d}{\geq} (n+1)^{-|\mX||\mY|} \sum_{Q_{XY}\in\mP_n(\mX\mY)}  2^{-n\left[D(Q_Y\|P_Y) + H(Q_Y)\right]} \sum_{y^n\in \mT_{Q_Y}}
        \min\left\{k_{\bV}(y^n) \ 2^{n\left[\iQ - R\right]}, 1\right\},
\end{align}
where $(a)$ follows from property of types and \eqref{ratio}; $(b)$ follows from the fact that the minimum of a sum is greater than or equal to the sum of the minima (in this case, it is equivalent to the triangle inequality); and $(c)$ follows from property of types. In $(d)$, we artifically add an extra term $(n+1)^{-|\mX||\mY|}$ to $k_{\bV}(y^n) \ 2^{n\left[\iQ - R\right]}$ and then extract it outside the minimum. The reason for doing so is that $k_{\bV}(y^n) \ 2^{n\left[\iQ - R\right]}$ is exponential in $n$, so comparing it with 1 or any polynomial factor does not affect the exponential order.

For the purpose of designing a good code, we need to maximize \eqref{1-L1}, and it suffices to make $k_{\bV}(y^n)$ large (or at least nonzero) for all joint types. However, due to the low rate and hence the limited number of codewords, this can only be achieved for certain joint types, and even for those, a large $k_{\bV}(y^n)$ would seem too greedy. A more reasonable approach is to set $k_{\bV}(y^n) = 1$, meaning that there exists a single codeword that covers $y^n$. The type covering lemma guarantees the existence of such a covering for all $y^n$'s, provided that the mutual information is less than the code rate.

Based on the above analysis, we construct our code $\mC$ in the following way. Take any $\epsilon>0$ and examine all joint types $Q_{XY}\in\mP_n(\mX\mY)$: if its mutual information satisfies $I(Q_X;V) \leq R - 2\epsilon$, then according to the type covering lemma (e.g., \cite[Lem.~3.34]{moser2019advanced}), there exists a covering code with $2^{n[I(Q_X;V)+\epsilon]}$ codewords in $\mT_{Q_X}$ that ensures $k_{\bV}(y^n)\geq1$ for all the corresponding output sequences $y^n\in \mT_{Q_Y}$. We collect all codewords in this covering code into our code $\mC$. It is noteworthy that such a collection may contain repeated codewords. Under this strategy, the number of codewords we have assigned so far is
\begin{align*}
    |\mC_{\text{eff}}|
    &=\sum_{Q_{XY}\in\mP_n(\mX\mY): I(Q_X;V) \leq R-2\epsilon} 2^{n[I(Q_X;V)+\epsilon]} \\
    &\overset{a}{\leq} (n+1)^{\carX\carY} \max_{Q_{XY}\in\mP_n(\mX\mY): I(Q_X;V) \leq R-2\epsilon} 2^{n[I(Q_X;V)+\epsilon]} \\
    &\leq (n+1)^{\carX\carY} 2^{n(R-\epsilon)} \leq 2^{nR}
\end{align*}
for sufficienly large $n$, where $(a)$ follows from property of types. The inequality $|\mC_{\text{eff}}| \leq 2^{nR}$ confirms that $\mC$ is a valid code. Specifically, the codewords are drawn from a uniform distribution of $1/M = 2^{-nR}$. $|\mC_{\text{eff}}|$ among them are used for the sake of covering all joint types $Q_{XY}\in\mP_n(\mX\mY)$ satisfying $I(Q_X;V) \leq R - 2\epsilon$; as a result, $k_{\bV}(y^n)\geq1$ for all $y^n$'s in those joint type classes. There are $M - |\mC_{\text{eff}}|$ codewords left, which can be used to cover the other joint types. However, this fraction is small, and we apply a trivial lower bound zero on it. To sum up, in \eqref{1-L1} we have
$$ k_{\bV}(y^n) \geq
\begin{cases}
    1 & \text{if } I(Q_X;V) \leq R - 2\epsilon \\
    0 & \text{if } I(Q_X;V) > R - 2\epsilon
\end{cases} $$
for all $y^n\in \mT_{Q_Y}$. Consequently, \eqref{1-L1} simplifies to
\begin{align*}
  1-\frac12&\left\|\tPy - P_Y^n \right\|_1 \\
    &\geq (n+1)^{-|\mX||\mY|} \sum_{Q_{XY}\in\mP_n(\mX\mY):I(Q_Y;\bV) \leq R-2\epsilon}  2^{-n\left[D(Q_Y\|P_Y) + H(Q_Y)\right]} \ |\mT_{Q_Y}| \
        \min\left\{2^{n\left[\iQ - R\right]}, 1\right\} \\
    &\overset{a}{\geq} (n+1)^{-(|\mX|+1)|\mY|} \sum_{Q_{XY}\in\mP_n(\mX\mY):I(Q_Y;\bV) \leq R-2\epsilon} 2^{-n\left[D(Q_Y\|P_Y) + \left|R - \iQ\right|^+ \right]} \\
    &\geq (n+1)^{-(|\mX|+1)|\mY|} \max_{Q_{XY}\in\mP_n(\mX\mY):I(Q_Y;\bV) \leq R-2\epsilon} 2^{-n\left[D(Q_Y\|P_Y) + \left|R - \iQ\right|^+ \right]} \\
    &= (n+1)^{-(|\mX|+1)|\mY|} \exp_2\left\{ -n \min_{Q_{XY}\in\mP_n(\mX\mY): I(Q_X,V) \leq R-2\epsilon} \left[D(Q_Y\|P_Y) + \big| R - \iQ \big|^+\right] \right\},
\end{align*}
where $(a)$ follows from property of types. Taking $\epsilon\to0$ and $n\to\infty$ completes the proof.
\end{proof}

Based on Proposition \ref{prop-sc-achi}, the following property of $\Gu(R)$ holds, which is proved in Appendix \ref{app-lem-Gu0}.

\begin{lemma}
\label{lem-Gu0} 
If $\minI \leq R \leq \maxI$, then $\Gu(R) = 0$.
\end{lemma}

\subsection{Dual forms of $\Gl(R)$ and $\Gu(R)$ and their equality}
\label{sec-sc-dual}
So far, we have obtained a lower bound $\Gl(R)$ for $\Ga_\sn(R)$ and an upper bound $\Gu(R)$ for $\Ga_\su(R)$, stated in Proposition \ref{prop-sc-con} and \ref{prop-sc-achi}, respectively. Figure \ref{fig-ElEu} presents examples of these two bounds, using the same channel models as in Figure \ref{fig-rc}. Observing Figure \ref{fig-ElEu}, a natural question arises: do the proposed two bounds coincide when $R<\minI$? If they do, then an exact exponent can be tightly squeezed out since $\Ga_\sn(R) \leq \Ga_\su(R)$. In this subsection, we show that the answer is yes by reformulating $\Gl(R)$ and $\Gu(R)$ into their dual forms in terms of $J_{\alpha,\beta}$ in Definition \ref{def-Jab}. These dual forms are presented in the following Proposition \ref{prop-Gl-dual} and \ref{prop-Gu-dual}, respectively.

\begin{figure}[!htbp]
    \centering
    \subfloat[A fully noisy binary channel]{%
        \includegraphics[width=0.49\linewidth]{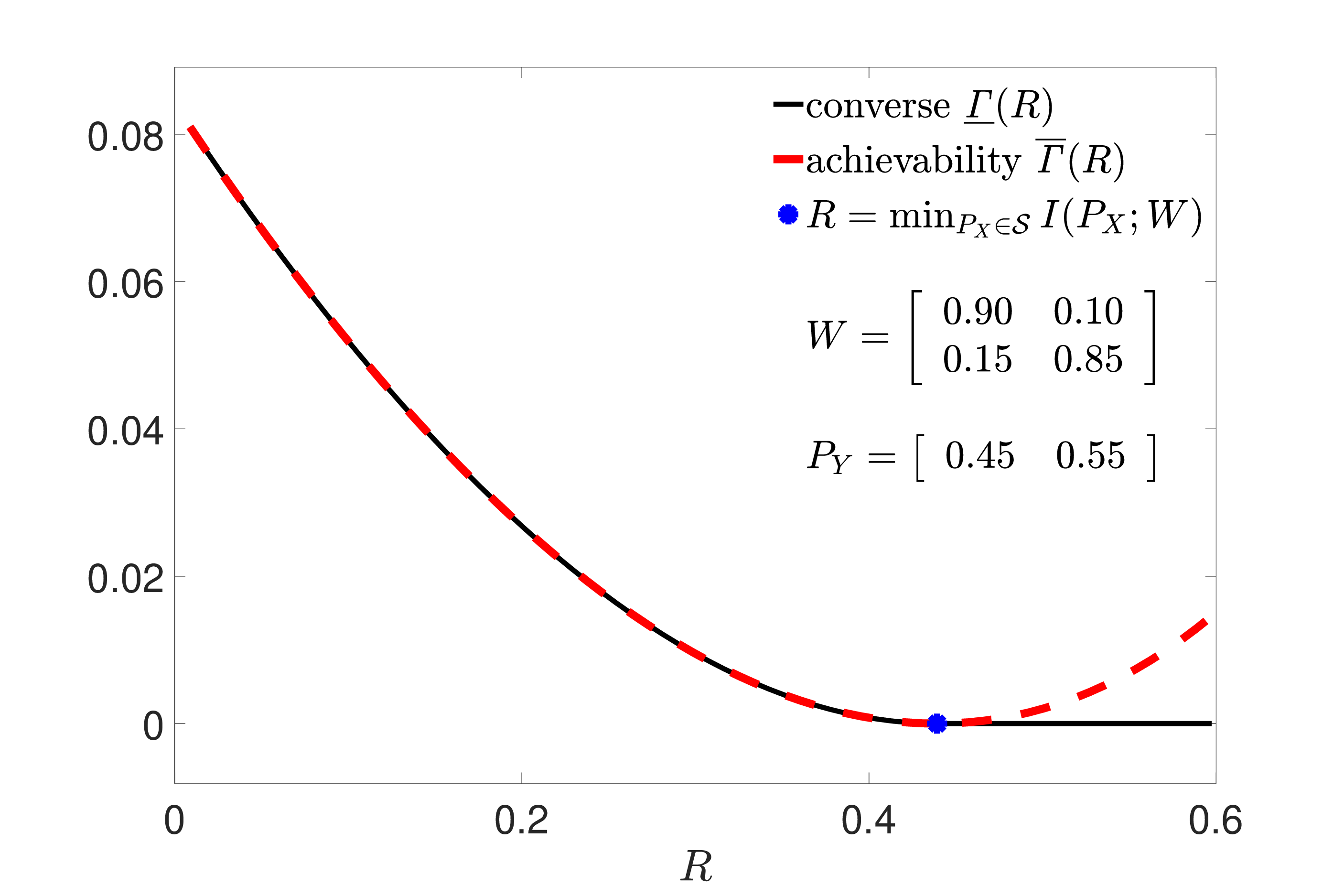}%
    }\hfill
    \subfloat[A fully noisy ternary channel]{%
        \includegraphics[width=0.49\linewidth]{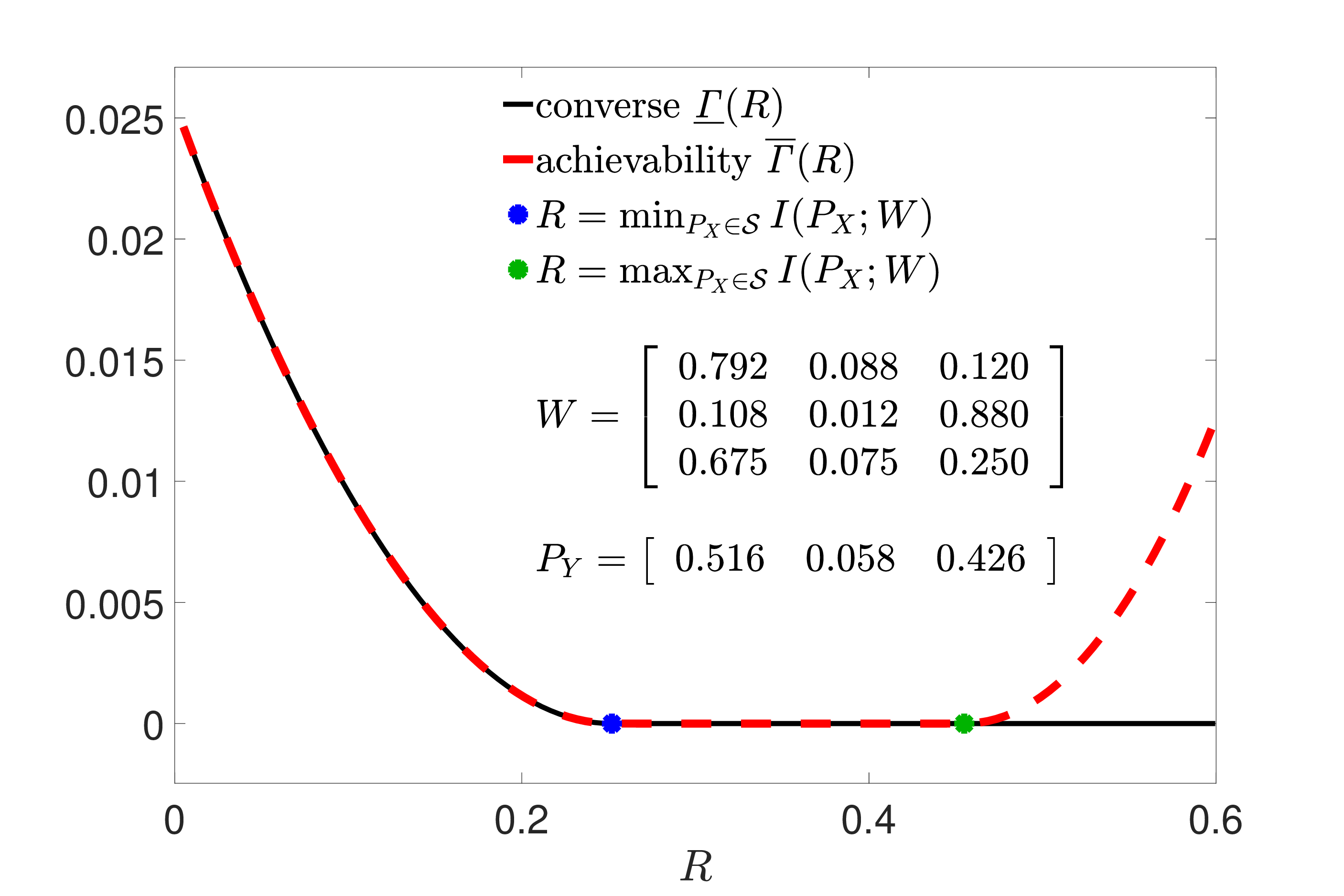}%
    }\\[1ex]
    \subfloat[A fully noiseless ternary channel]{%
        \includegraphics[width=0.49\linewidth]{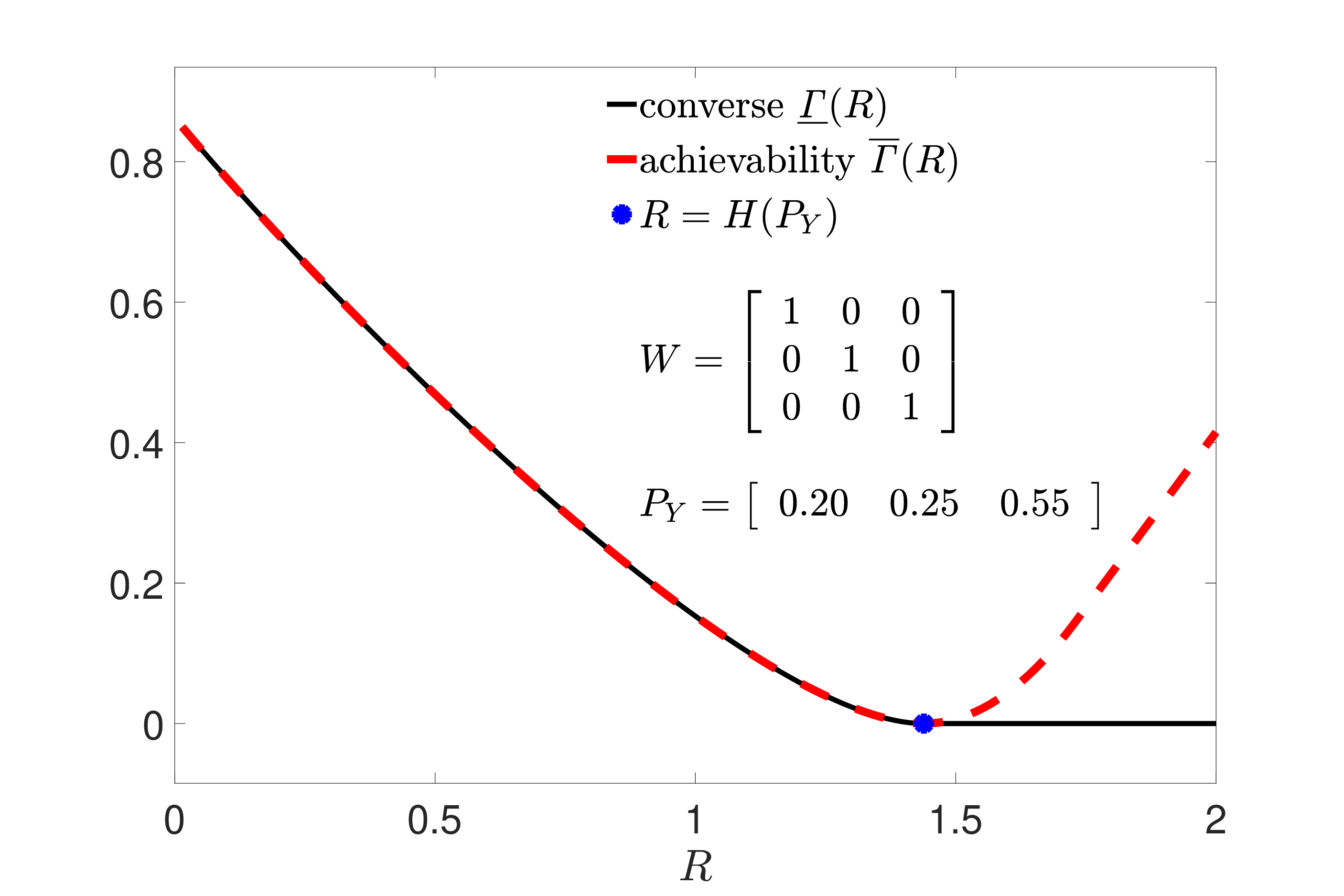}%
    }\hfill
    \subfloat[A hybrid ternary channel]{%
        \includegraphics[width=0.49\linewidth]{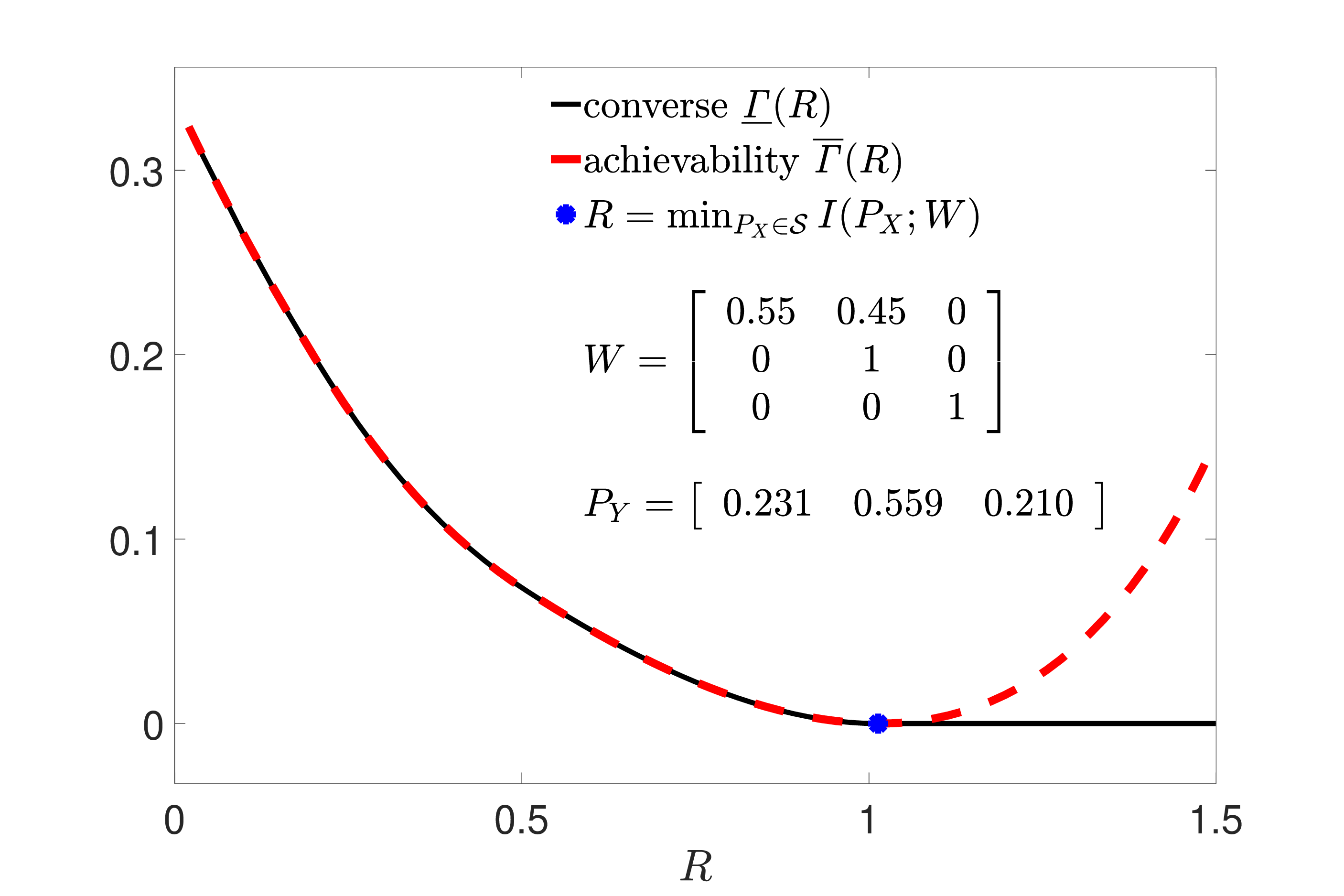}%
    }
    \caption{Examples of the converse bound $\Gl(R)$ and the achievability bound $\Gu(R)$
    for the strong converse exponent.}
    \label{fig-ElEu}
\end{figure}

\begin{proposition}\label{prop-Gl-dual}  $\Gl(R)$ is lower bounded by the following dual form:
    \begin{align}
        \Gl(R) \geq \Gll(R) 
        := \ & \min_{Q_{XY}\in\mP(\mX\mY)} \ \max_{\beta,\gamma\in[0,1],\beta\geq\gamma} 
            \big\{D(Q_Y\|P_Y) + \beta \left[I(Q_X;V)-R\right] + \gamma\left[R-\iQ\right] \big\} \label{prop-Gl-dual-a} \\
        = \ & \max_{\alpha,\beta\in[0,1]} \left[ J_{\alpha,\beta}\left(W_{Y|X}\|P_Y\right) + \beta(\alpha-1) R \right]. \label{prop-Gl-dual-b} 
    \end{align}
\end{proposition}

\begin{proposition}\label{prop-Gu-dual} $\Gu(R)$ has the following dual form:
    $$ \Gu(R) = \max_{\alpha\geq0,\beta\in[0,1]} \left[ J_{\alpha,\beta}\left(W_{Y|X}\|P_Y\right) + \beta(\alpha-1) R \right]. $$
\end{proposition}

Proposition \ref{prop-Gl-dual} is proved in Appendix \ref{app-sec-Gl} and Proposition \ref{prop-Gu-dual} is proved in Appendix \ref{app-sec-Gu}. Comparing the dual forms of $\Gu(R)$ and $\Gll(R)$, we establish their equality in the following Proposition \ref{prop-Gu=Gll}.

\begin{proposition}
\label{prop-Gu=Gll}
$\Gu(R) = \Gll(R)$ when $R<\minI$.
\end{proposition}

Proposition \ref{prop-Gu=Gll} is proved in Appendix \ref{app-sec-Gu=Gll}. Equipped with propositions in this section, the proof of the main result, Theorem \ref{thm-sc}, follows from straightforward logical reasoning.

\begin{proof}[Proof of Theorem \ref{thm-sc}]
First, $\Gll(R) \leq \Gl(R) \leq \Ga_\sn(R) \leq \Ga_\sr(R) \leq \Ga_\su(R) \leq \Gu(R)$ holds for all $R$ by combining Remark \ref{rmk-relation}, Proposition \ref{prop-sc-con}, Proposition \ref{prop-sc-achi}, and Proposition \ref{prop-Gl-dual}.
    
When $R<\minI$, by Proposition \ref{prop-Gu=Gll}, $\Gll(R) = \Gl(R) = \Ga_\sn(R) = \Ga_\sr(R) = \Ga_\su(R) = \Gu(R)$.

When $R\geq\minI$, by Proposition \ref{prop-sc-con}, Proposition \ref{prop-Gl-dual}, and Lemma \ref{lem-Gll0}, $\Ga_\sn(R)\geq\Gl(R) = \Gll(R)=0$. On the other hand, according to the soft covering lemma, e.g., \cite[Ch.~19]{moser2019advanced}, under the uniform formulation there exists a code that can make $\TVinText\to0$ for sufficiently large $n$, yielding that $\left( 1-\TVinText \right) \to 1 = 2^{-n\cdot0}$: a zero strong converse exponent is achievable and thus $\Ga_\su(R)\leq 0$. Ergo, $\Ga_\sn(R) = \Ga_\sr(R) = \Ga_\su(R) = 0 = \Gll(R)$.

To sum up, we can conclude that $\Ga_\sn(R) = \Ga_\sr(R) = \Ga_\su(R) = \Gll(R)$ for all $R$. Combining it with Proposition \ref{prop-Gl-dual} completes the proof. Since $\Gll(R)$ is now exact, rather than merely a lower bound, we drop the underline and denote it by $\Ga(R)$ to simplify notation.
\end{proof}

\section{Error Exponent for Noiseless Channels}
\label{sec-ee-nl}
In this section, we discuss the soft-covering error exponent when $W_{Y|X}$ is noiseless: for each $x\in\mX$, there exists a symbol $y\in\mY$ such that $W_{Y|X}(y|x) = 1$. 
Then we have $W_{Y|X}(y|x) = \mathbbm{1}\{y = w(x)\}$ for all $x \in \mathcal{X}$, $y \in \mathcal{Y}$, and for some function $w \colon \mX\to\mY$. 

\begin{remark}[Choice of alphabet]\label{rmk-chooseX} 
 If $w(\cdot)$ is a bijection, then, without loss of generality, we assume that  $\mX=\mY$.
     If $w(\cdot)$ is a surjection, for each $y\in\mY$, in the achievability proof we may select a single representative $x\in w^{-1}(y)$ and construct an alphabet $\mX_\mC \subset\mX$ consisting of these representatives and hence satisfying $|\mX_\mC| = |\mY|$. In the converse proof, given any code $\mC$, we first proceed with a restriction of symbols: if $y_1 = w(x_1) = w(x_2)$, we replace every occurrence of $x_2$ in the codewords with $x_1$. Since both $x_1$ and $x_2$ map to the same output symbol $y_1$, this modification leaves the output sequences unchanged and, consequently, does not affect the induced distribution $\tPyQ$ or $\tPy$. Under this restriction, the code alphabet set reduces to some $\mX_\mC \subset\mX$ such that $|\mX_\mC|=|\mY|$ and $w(\cdot)$ can be regarded bijective. 
Thus, in both cases, one can always identify an input alphabet set $\mX_\mC$ such that $|\mX_\mC|= |\mY|$ for encoding. 
\end{remark}

\subsection{Uniform formulation}\label{sec-nl-uni}

Under a noiseless channel, the output distribution \eqref{tPy} reduces to
\begin{equation}\label{ky}
    \tPy(y^n) 
    = \frac1M \sum_{x^n} \mathbbm{1}\{y^n = w(x^n)\} \sum_{i=1}^M  \mathbbm{1}\{X^n(i) = x^n\} 
    =: \frac{k(y^n)}{M},
\end{equation}
where $k(y^n)\in\mathbb{Z}^+$ counts the number of codewords mapping to $y^n$. We begin by showing the converse result stated by Theorem \ref{thm-nl-Eu}, which is valid for both rational and irrational $P_Y$.

\begin{proof}[Proof of Theorem \ref{thm-nl-Eu}]\label{pf-nl-Eu}
When $R\geq\minI$, it is relatively easier to approximate large probabilities, while harder to approximate small probabilities, in particular those smaller than the step size $1/M$ in \eqref{ky}. This leads us to the consideration of the following `bad' set:
\begin{equation}\label{badset}
    \mB:= \left\{y^n\in\mY^n: \frac{1}{M} \geq 2 P_Y^n(y^n) \right\} 
    = \bigsqcup_{Q_Y\in\mP_n(\mY): D(Q_Y\|P_Y) + H(Q_Y) \geq R + \frac1n} \mT_{Q_Y}.
\end{equation}
For any $y^n\in\mB$, we have
$$ \left|\frac{k(y^n)}{M} - P_Y^n(y^n)\right| 
\begin{cases}
    \geq \dfrac1M - P_Y^n(y^n) \geq P_Y^n(y^n), & \text{if } k(y^n)\geq1 \\
    = P_Y^n(y^n), & \text{if } k(y^n)=0
\end{cases}. $$
An error of value $P_Y^n(y^n)$ is unavoidable for all $y^n\in\mB$. In this sense, $\mB$ is regarded as `bad', producing errors no matter what code is applied. As a result,
\begin{align*}
    \frac12\left\|\tPy - P_Y^n\right\|_1
    &\geq \frac12 \sum_{y^n\in\mB} P_Y^n(y^n) \\
    &\overset{a}{\geq} \frac12(n+1)^{-\carY} \sum_{Q_Y\in\mP_n(\mY): D(Q_Y\|P_Y) + H(Q_Y) > R+\frac1n} 2^{-n D(Q_Y\|P_Y) } \\
    &\geq \frac12(n+1)^{-\carY} \exp_2\left\{ -n \min_{Q_Y\in\mP_n(\mY): D(Q_Y\|P_Y) + H(Q_Y) > R+\frac1n}  D(Q_Y\|P_Y) \right\},
\end{align*}
where $(a)$ follows from property of types. Taking $n\to\infty$ completes the proof of the variational form.

To obtain the dual form, consider the following.
\begin{align*}
     \min_{Q_Y\in\mP(\mY): D(Q_Y\|P_Y) + H(Q_Y) \geq R} D(Q_Y\|P_Y) 
    \overset{a}{=} \ & \max_{\delta\geq0} \ \min_{Q_Y\in\mP(\mY)} 
        \big\{ D(Q_Y\|P_Y) + \delta\left[ R - D(Q_Y\|P_Y) - H(Q_Y) \right] \big\} \\
    = \ & \max_{\delta\geq0} \left\{ \delta R + \min_{Q_Y\in\mP(\mY)}
        \left[ D(Q_Y\|P_Y) + \delta \sum_y Q_Y(y) \log P_Y(y) \right] \right\} \\
    \overset{b}{=} \ & \max_{\delta\geq0}
        \left\{ \delta R - \log\sum_y P_Y^{1-\delta}(y) \right\} \\
    \overset{c}{=} \ & \max_{\alpha\in(-\infty,0)\cup[1,\infty)} \left\{ \frac{\alpha-1}{\alpha} \left[R - H_{\frac{1}{\alpha}}(P_Y)\right] \right\},
\end{align*}
where $(a)$ follows from the convexity of the optimization problem; $(b)$ follows directly from \eqref{cuff-minQy}; and $(c)$ follows by setting $\alpha:= \frac{1}{1-\delta}$.
\end{proof}

    


It is noteworthy that $\Eu^\nl(R)$ is finite only in a certain region, as specified precisely in the following Lemma \ref{lem-nl-Euinf}, which is proved in Appendix \ref{app-lem-nl-Euinf}.

\begin{lemma}\label{lem-nl-Euinf}
    $\Eu^\nl(R)<\infty$ when $R \leq H_{-\infty}(P_Y)$, and 
    $\Eu^\nl(R)=\infty$ when $R> H_{-\infty}(P_Y)$.
\end{lemma}

Next, we prove the achievability result stated by Theorem \ref{thm-nl-El}, which is also valid for both rational and irrational $P_Y$. Towards proving it, we need the following lemma.

\begin{lemma}\label{lem-N1N2}
    When $R\geq H(P_Y)$, we have $N_1(R) \geq N_2(R)$,  where
    \begin{align*}
        N_1(R) &= \max_{Q_Y\in\mP(\mY): D(Q_Y\|P_Y) + H(Q_Y)\leq R} H(Q_Y), \\
        N_2(R) &= \max_{Q_Y\in\mP(\mY): D(Q_Y\|P_Y) + H(Q_Y)\geq R} \left[R-D(Q_Y\|P_Y)\right].
    \end{align*}
\end{lemma}

Lemma \ref{lem-N1N2} is proved in Appendix \ref{app-lem-N1N2}. Recalling \eqref{ky}, a strategy for constructing a good code is to make $k(y^n)\approx MP_Y(y^n)$ so that $k(y^n)/M$ can approximate $P_Y(y^n)$ as closely as possible. More precisely, to establish the achievability result in Theorem \ref{thm-nl-El}, we show that there exists a code in which $k(y^n)$ differs from either $\lfloor MP_Y(y^n)\rfloor$ or $\lceil MP_Y(y^n)\rceil$ by at most a polynomial quantity in $n$.

\begin{proof}[Proof of Theorem \ref{thm-nl-El}]
Taking inspiration from the proof of Proposition \ref{prop-sc-achi} regarding the achievability of the strong converse exponent, we provide a deterministic code construction for the problem at hand. 
Let us first assume that $R\geq H(P_Y)$. Define
\begin{equation}\label{goodset}
    \mG := \left\{ y^n\in\mY^n: MP_Y^n(y^n)\geq1 \right\}
    = \bigsqcup_{Q_Y\in\mP_n(\mY): D(Q_Y\|P_Y) + H(Q_Y)\leq R} \mT_{Q_Y}.
\end{equation}
$\mG$ is actually a `good' set in the sense that all sequences in $\mG$ satisfy that $\lfloor MP_Y^n(y^n)\rfloor\geq1$. Thus, it is always preferable to cover these sequences, since doing so will yield nonzero $k(y^n)$ values that can align with $MP_Y^n(y^n)$. In fact, ignoring the factor of two, $\mG$ is approximately the complement of the `bad' set $\mB$ defined in \eqref{badset}. Take any $\epsilon>0$. The size of $\mG$ is bounded by
\begin{align*}
    |\mG| &= \sum_{Q_Y\in\mP_n(\mY): D(Q_Y\|P_Y) + H(Q_Y)\leq R} \mT_{Q_Y} 
    \geq \max_{Q_Y\in\mP_n(\mY): D(Q_Y\|P_Y) + H(Q_Y)\leq R} \mT_{Q_Y} \\
    &\overset{a}{\geq} (n+1)^{-\carY} \max_{Q_Y\in\mP_n(\mY): D(Q_Y\|P_Y) + H(Q_Y)\leq R} 2^{nH(Q_Y)} \\
    &\overset{b}{\geq} (n+1)^{-\carY} \ 2^{n \left[N_1(R)-\epsilon\right]},
\end{align*}
where $(a)$ follows from property of types. In $(b)$, the maximization is over all types, so it differs from the maximization over distributions (namely $N_1(R)$ as defined in Lemma \ref{lem-N1N2}) by at most $\epsilon$ for sufficiently large $n$.

To design a good code, we adopt the alphabet choice described in Remark \ref{rmk-chooseX}. With this choice, $k(y^n)$ represents the number of codewords mapped one-to-one to $y^n$, and hence can be directly constructed via a repetition of codewords. Our goal is to judiciously design $k(y^n)$  so that it can satisfy $\lfloor MP_Y^n(y^n) \rfloor \leq k(y^n) \leq \lceil MP_Y^n(y^n) \rceil + \mathrm{poly}(n)$ under the constraint that $\sum_{y^n} k(y^n) = M$. Namely, let us first consider the following two codes:
\begin{align*}
    \mC_1: \quad 
    & k(y^n) = \lfloor MP_Y^n(y^n) \rfloor \quad \forall y^n\in\mY^n, \quad
    M_1 := |\mC_1| = \sum_{y^n} \ \lfloor MP_Y^n(y^n) \rfloor. \\
    \mC_2: \quad 
    & k(y^n) = 
    \begin{cases}
        \lceil MP_Y^n(y^n) \rceil & \text{if } y^n\in\mG \\
        0 & \text{if } y^n\notin\mG
    \end{cases}, \quad
    M_2 := |\mC_2| = \sum_{y^n\in\mG} \lceil MP_Y^n(y^n) \rceil.
\end{align*}
Clearly $M_1\leq M$, meaning that we can use $\mC_1$ to cover all sequences $y^n\in\mY^n$ and there will be some codewords left. Note that $\mG^c := \mY^n\backslash\mG$ is not covered since $\lfloor MP_Y^n(y^n) \rfloor = 0$ for all $y^n\in\mG^c$.
If $M_2\geq M$, then after applying $\mC_1$, each sequence in $\mG$ can be covered at most once more. Hence, there exists a code $\mC_3$ as follows, with $|\mC_3| = M$:
\begin{align*}
    \mC_3: \quad 
    & k(y^n) = 
    \begin{cases}
        \lfloor MP_Y^n(y^n) \rfloor \text{ or } \lceil MP_Y^n(y^n) \rceil  & \text{if } y^n\in\mG \\
        0 & \text{if } y^n\notin\mG
    \end{cases}.
\end{align*}
If $M_2<M$, then after applying $\mC_1$, each sequence in $\mG$ can still be covered more than once. Equivalently, we use $\mC_2$ to cover $\mY^n$, and there will be some codewords left. The number of those codewords that are left is
\begin{align*}
    \Delta M &:= M - M_2 
    = M - \sum_{y^n\in\mG} \lceil MP_Y^n(y^n) \rceil 
    \leq M - \sum_{y^n\in\mG} MP_Y^n(y^n) 
    = M P_Y^n(\mG^c) \\
    &= \sum_{Q_Y\in\mP_n(\mX): D(Q_Y\|P_Y) + H(Q_Y)>R} MP_Y^n(\mT_{Q_Y}) \\
    &\overset{a}{\leq} \sum_{Q_Y\in\mP_n(\mY): D(Q_Y\|P_Y) + H(Q_Y)>R} 2^{n\left[R-D(Q_Y\|P_Y)\right]} \\
    &\overset{b}{\leq} (n+1)^{\carY} \max_{Q_Y\in\mP_n(\mY): D(Q_Y\|P_Y) + H(Q_Y)>R} 2^{n\left[R-D(Q_Y\|P_Y)\right]} \\
    &= (n+1)^{\carY} \exp_2 \left\{ n \max_{Q_Y\in\mP_n(\mY): D(Q_Y\|P_Y) + H(Q_Y)>R} \left[R-D(Q_Y\|P_Y)\right] \right\} \\
    &\leq (n+1)^{\carY} \ 2^{n N_2(R)},
\end{align*}
where $(a)$ and $(b)$ follow from properties of types, and $N_2(R)$ is defined in Lemma \ref{lem-N1N2}. 
Now let us consider using these $\Delta M$ codewords to continue covering $\mG$ after $\mC_2$ is applied. We can cover each sequence in $\mG$ by $\frac{\Delta M}{|\mG|}$ more times, which is at most
$$ \frac{\Delta M}{|\mG|} \leq (n+1)^{2\carY} 2^{n \left[N_2(R)-N_1(R) + \epsilon\right]} \leq (n+1)^{2\carY} 2^{n\epsilon} \leq 2^{2n\epsilon} $$
for all sufficiently large $n$. Here we have used $N_1(R)\geq N_2(R)$ when $R\geq H(P_Y)$ from Lemma \ref{lem-N1N2}. Hence, there exists a code $\mC_4$ as follows, with $|\mC_4| = M$:
\begin{align*}
    \mC_4: \quad 
    &
    \begin{cases}
        \lceil MP_Y^n(y^n) \rceil \leq k(y^n) \leq \lceil MP_Y^n(y^n) \rceil + 2^{2n\epsilon} & \text{if } y^n\in\mG \\
        k(y^n) = 0 & \text{if } y^n\notin\mG
    \end{cases}.
\end{align*}

According to the above discussions, either $\mC_3$ or $\mC_4$ is possible and has exactly $M$ codewords. Therefore, we can summarize that for all $\epsilon>0$ and sufficiently large $n$, there exists a code $\mC$ as follows, with $|\mC| = \sum_{y^n} k(y^n) = M$:
\begin{align*}
    \mC: \quad 
    &
    \begin{cases}
        \lfloor MP_Y^n(y^n) \rfloor \leq k(y^n) \leq \lceil MP_Y^n(y^n) \rceil + 2^{2n\epsilon} & \text{if } y^n\in\mG \\
        k(y^n) = 0 & \text{if } y^n\notin\mG
    \end{cases}.
\end{align*}
Choose $\mC$ to be our covering code, which yields that
\[
  \left| \frac{k(y^n)}{M} - P_Y^n(y^n) \right| 
   \leq \left\{\begin{array}{@{}l l@{}}
          \dfrac{1+2^{2n\epsilon}}{M} & \text{if } y^n \in \mG \\
          P_Y^n(y^n) & \text{if } y^n \notin \mG
        \end{array}\right\}
   \leq 2^{3n\epsilon} \min\left\{\frac{1}{M}, P_Y^n(y^n)\right\}. 
\]
Consequently, we have 
\begin{align*}
    \frac12 \left\|\tPy - P_Y^n\right\|_1
    &\leq 2^{3n\epsilon-1}  \sum_{y^n} \min\left\{\frac{1}{M}, P_Y^n(y^n)\right\} \\
    &= 2^{3n\epsilon-1} \sum_{Q_Y\in\mP_n(\mY)} |\mT_{Q_Y}| 
        \min\left\{ 2^{-nR}, 2^{-n\left[D(Q_Y\|P_Y) + H(Q_Y)\right]} \right\} \\
    &\overset{a}{\leq} 2^{3n\epsilon-1} \sum_{Q_Y\in\mP_n(\mY)} 2^{nH(Q_Y)} 
        \min\left\{ 2^{-nR}, 2^{-n\left[D(Q_Y\|P_Y) + H(Q_Y)\right]} \right\} \\
    &\overset{b}{\leq} \frac12(n+1)^{\carY} \ 2^{3n\epsilon} \max_{Q_Y\in\mP_n(\mY)} \min\left\{ 2^{-n[R - H(Q_Y)]}, 2^{-nD(Q_Y\|P_Y)} \right\} \\
    &= \frac12(n+1)^{\carY} \exp_2 \left\{-n \min_{Q_Y\in\mP_n(\mY)} \left[ D(Q_Y\|P_Y) + \big| R - D(Q_Y\|P_Y) - H(Q_Y) \big|^+ - 3\epsilon \right] \right\},
\end{align*}
where $(a)$ and $(b)$ follow from properties of types. The exponent is exactly $\El^\nl(R)$ as $\epsilon\to0$ and $n\to\infty$. 

Note that all the above discussions are based on the assumption that $R\geq H(P_Y)$. If $R<H(P_Y)$, we can just apply a trivial bound $\TVinText\leq1$ and the error exponent is $0$. Since $\El^\nl(R)$ also vanishes for $R<H(P_Y)$, it is thus an achievable error exponent for $R$ values both below and above $H(P_Y)$. Hence, the variational form in the theorem is proved.

To obtain the dual form, consider the following.
\begin{align*}
    \min_{Q_Y\in\mP(\mY)} &
        \left\{ D(Q_Y\|P_Y) + \big| R - D(Q_Y\|P_Y) - H(Q_Y) \big|^+ \right\} \\
    = \ & \max_{\lambda\in[0,1]} \ \min_{Q_Y\in\mP(\mY)} 
        \big\{ D(Q_Y\|P_Y) + \lambda\left[ R - D(Q_Y\|P_Y) - H(Q_Y) \right] \big\} \\
    = \ & \max_{\lambda\in[0,1]} \left\{\lambda R + \min_{Q_Y\in\mP(\mY)}
        \left[ D(Q_Y\|P_Y) + \lambda \sum_y Q_Y(y) \log P_Y(y) \right] \right\} \\
    \overset{a}{=} \ & \max_{\lambda\in[0,1]}
        \left\{ \lambda R - \log \sum_y P_Y^{1-\lambda}(y) \right\} \\
    \overset{b}{=} \ & \max_{\alpha\geq1} \left\{ \frac{\alpha-1}{\alpha} \left[R - H_{\frac{1}{\alpha}}(P_Y)\right] \right\},
\end{align*}
where $(a)$ follows from \eqref{cuff-minQy} and $(b)$ follows by setting $\alpha:= \frac{1}{1-\lambda}$.
\end{proof}

The idea behind this proof is actually a quantization of $P_Y^n$ with a minimal gap of $1/M$. For all sequences in the good set $\mG$, the resulting error is $1/M$. For sequences outside $\mG$, each probability is less than $1/M$ so we choose not to cover them, and the error equals their probability. The key step is to show that such a construction can indeed be realized under the constraint that the number of codewords is exactly $M$.

\subsection{Rational-irrational discrepancy under the uniform formulation}\label{sec-nl-disc}

Recalling \eqref{ky}, under the uniform formulation, $\tPy$ always takes the form $\tPy = \mathbb{Z}^+/M \in\mathbb{Q}$. This gives rise to the rational-irrational discrepancy. In the following, we state Khintchine's theorem and use it to prove the linear converse in Theorem \ref{thm-irr}.

\begin{lemma}[{Khintchine's theorem~\cite[Thm.~1.10] {bugeaud2004approximation}, 
\cite[Thm.~3.3.2]{queffelec2013diophantine}}] \label{lem-khint}
    Let $\Psi: \mathbb{R}_{\geq1}\to\mathbb{R}_{>0}$ be a continuous function such that $M \mapsto M^2 \Psi(M)$ is non-increasing and that $\sum_{M=1}^\infty M \Psi(M) < \infty$.
    Then for almost every $p\notin\mathbb{Q}$ (in the sense of Lebesgue measure on $\mathbb{R}$), we have
    $ \left|K/M - p\right| < \Psi(M) $ for only finitely many $K,M\in\mathbb{Z}$. 
\end{lemma}

\begin{proof}[Proof of Theorem \ref{thm-irr}]
Let $\by\in\mY$ be an irrational symbol, i.e., $P_Y(\by)\notin\mathbb{Q}$. Define
$ \mA := \left\{ y^n\in\mY^n: y_1 = \by \right\} $
to be the set of output sequences in which $\by$ appears in the first position. According to \eqref{setA},
\begin{equation}\label{dio}
    \TVinEq \geq \left| \tPy(\mA) - P_Y^n(\mA) \right|,
\end{equation}
where 
\begin{align*}
    \tPy(\mA) &= \sum_{y^n\in\mA} \tPy(y^n) 
        = \frac1M \sum_{y^n\in\mA} k(y^n) =: \frac{K}{M}, \\
    P_Y^n(\mA) &= \sum_{y^n\in\mA} P_Y^n(y^n)
        = P_Y(\by) \sum_{y^n\in\mY^{n-1}} P_Y^{n-1}(y^n) = P_Y(\by).
\end{align*}
The quantity $k(y^n)$ is defined in \eqref{ky} and $K:= \sum_{y^n\in\mA} k(y^n)$ is an integer. Hence, $\tPy(\mA)\in\mathbb{Q}$ and $P_Y^n(\mA)\notin\mathbb{Q}$, and \eqref{dio} describes a problem of Diophantine approximation. Taking any $\epsilon>0$ and $\Psi(M) = M^{-(2+\epsilon)}$, it follows immediately from Lemma \ref{lem-khint} that
$$ \left| \tPy(\mA) - P_Y^n(\mA) \right| 
= \left|\frac{K}{M} - P_Y(\by) \right|
\geq \frac{1}{M^{2+\epsilon}} = -2^{-n(2+\epsilon)R} $$
for all sufficiently large $n$, since $n$ representing the number of channel uses can take infinitely many integers. Taking $\epsilon\to0$ completes the proof.
\end{proof}

Alternatively, if $P_Y$ is rational, for high rates a perfect covering can be achieved, as stated in Theorem \ref{thm-ra}. This is because $M P_Y(y^n)$ can always be an integer for sufficiently large $n$, making $k(y^n) = M P_Y(y^n)$ a valid code construction. The detailed proof is as follows.

\begin{proof}[Proof of Theorem \ref{thm-ra}]
Given $P_Y(y) = \frac{A_y}{B_y}$, we have
$$ M P_Y^n(y^n) = M \prod_{i=1}^n \frac{A_{y_i}}{B_{y_i}} = \prod_{i=1}^n \left( 2^{\frac{\log M}{n}}\frac{A_{y_i}}{B_{y_i}} \right). $$
When $R \geq \log(\lcm(\{B_y\}_{y\in\mY}))$, there always exist sufficiently large $M$ and $n$ such that $2^\frac{\log M}{n}$ is a multiple of $B_y$ for each $y\in\mY$ and $\frac{\log M}{n}\to R$ as $n\to\infty$. Therefore, we can employ the alphabet choice in Remark \ref{rmk-chooseX} and set a repetition number $k(y^n) = M P_Y^n(y^n)$.
\end{proof}

\begin{example}
    Consider a ternary output $P_Y = [\frac12, \frac13, \frac16]$. Then $E_\su^\nl(R) = \infty$ when $R\geq\log6$.
\end{example}

\subsection{Non-uniform formulation}\label{sec-nl-non}

If the messages are non-uniformly distributed, then in the achievability proof, one must address not only how to construct a good covering code, but also how to select the optimal message distribution. It turns out that an effective choice is to treat $P_Y^n$ as a source and apply lossless source coding, with the decoding index serving as the message index in the covering setting. More generally, after formulating the lossless source coding problem in Definition \ref{def-source}, we show the equivalence between noiseless soft covering and lossless source coding in Lemma \ref{lem-equiv}.

\begin{definition}[Lossless source coding]\label{def-source}
    Let $\mM = \{1,\dots,M\}$ with $M=2^{nR}$. Consider a discrete memoryless source $\mY^n$ subject to i.i.d. $P_Y$. An $(R,n,P_Y,\Pe)$ lossless source coding scheme consists of an encoder $\mE:\mY^n\to \mM$ and a decoder $\mD:\mM\to\hmY^n$, with $\Pe := \Pr\{Y^n\neq\hY^n\}$ denoting the probability of decoding error.
\end{definition}

\begin{lemma}[Equivalence between noiseless soft covering and lossless source coding]\label{lem-equiv}
    Let $w:\mX\to\mY$ be a bijective function. Consider a noiseless channel $W_{Y|X}(y|x) = \mathbbm{1}\{y = w(x)\}$. 
    \begin{itemize}
        \item [(a)] \hypertarget{lem-equiv-a}{}
        For any $(R,n,P_Y,\Pe)$ lossless source coding scheme, there exists an $(R,n,P_Y,W_{Y|X},q)$ non-uniform soft-covering scheme such that $\TVQinText\leq\Pe$.
        \item [(b)] \hypertarget{lem-equiv-b}{} 
        For any $(R,n,P_Y,W_{Y|X},q)$ non-uniform soft-covering scheme, there exists an $(R,n,P_Y,\Pe)$ lossless source coding scheme such that that $\Pe\leq\big\|\tPyQ-P_Y^n\big\|_1$.
    \end{itemize}
\end{lemma}

Lemma \ref{lem-equiv} is proved in Appendix \ref{app-lem-equiv}. Now we employ this lemma to show Theorem \ref{thm-nl-non}.

\begin{proof}[Proof of Theorem \ref{thm-nl-non}]
We first show achievability. Follow the alphabet choice in Remark \ref{rmk-chooseX} so that $w(\cdot)$ is a bijection. To apply Lemma \ref{lem-equiv}, we need to specify a lossless source coding scheme for the i.i.d. source $P_Y^n$. Take any $\epsilon>0$ and define
$$ \mA := \bigsqcup_{Q_Y\in\mP_n(\mY):H(Q_Y)\leq R-\epsilon} \mT_{Q_Y}. $$
Clearly $|\mA|\leq M = 2^{nR}$, meaning that we can establish a one-to-one label for every sequence in $\mA$ using at most $M$ codewords. Therefore, there exists a $(R,n,P_Y,\Pe)$ lossless source coding scheme, where all  sequences in $\mA$ can be correctly decoded and hence $\Pe\leq P_Y^n(\mA^c)$ with $\mA^c=\mY^n\backslash\mA$. According to Lemma \equiva, there exists an $(R,n,P_Y,W_{Y|X},q)$ non-uniform soft-covering scheme such that
\begin{align*}
    \frac12 \left\| \tPyQ - P_Y^n \right\|_1 
    &\leq \Pe \leq P_Y^n(\mA^c) 
    = \sum_{Q_Y\in\mP_n(\mY):H(Q_Y)>R-\epsilon} P_Y^n(\mT_{Q_Y}) \\
    &\overset{a}{\leq} (n+1)^{\carY} \exp_2\left\{ -n \min_{Q_Y\in\mP_n(\mY):H(Q_Y)>R-\epsilon} D(Q_Y\|P_Y) \right\},
\end{align*}
where $(a)$ follows from property of types. Taking $\epsilon\to0$ and $n\to\infty$ completes the proof of achievability.

For the converse, given any covering code, by alphabet restriction in Remark \ref{rmk-chooseX}, $w(\cdot)$ can be regarded bijective. Hence, according to Lemma \equivb, for any $(R,n,P_Y,W_{Y|X},q)$ non-uniform soft-covering scheme, there exists a corresponding $(R,n,P_Y,\Pe)$ lossless source coding scheme such that
\begin{align*}
    \frac12 \left\| \tPyQ - P_Y^n \right\|_1 
    \geq \frac12 \Pe
    \overset{a}{\geq} \frac12 (n+1)^{-\carY} \exp_2\left\{ -n \min_{Q_Y\in\mP_n(\mY):H(Q_Y)>R+\epsilon} D(Q_Y\|P_Y) \right\},
\end{align*}
where $(a)$ follows from the converse exponent of the lossless source coding, e.g., \cite[Thm.~2.15]{csiszar2011information}. Taking $\epsilon\to0$ and $n\to\infty$ completes the proof of the converse.

So far, we have proved the variational form, which is exactly the error exponent of the lossless source coding problem. The dual form thus follows immediately from \cite[Problem~2.15]{csiszar2011information}.
\end{proof}

\subsection{$H_{-\infty}$-constrained formulation}\label{sec-nl-ren}

\begin{proof}[Proof of Theorem \ref{thm-nl-ren}]
We begin with the converse. In fact, the proof of the converse is identical to that of Theorem \ref{thm-nl-Eu} in Section \ref{sec-nl-uni}. The `bad' set $\mB$ defined in \eqref{badset} applies here as well, because in the $H_{-\infty}$-constrained formulation, the minimal probability in the message set is also at least $1/M$. All output sequences with probability $P_Y^n(y^n) \leq 1/2M$ contribute to the covering error. Ergo, $\Eu^\nl(R)$ in Theorem \ref{thm-nl-Eu}, which serves as a converse for the uniform formulation, also constitutes a converse for the $H_{-\infty}$-constrained formulation, even though the latter is a more general formulation that encompasses the former.

For the achievability, we follow the alphabet choice in Remark \ref{rmk-chooseX} so that $w(\cdot)$ is a bijection. Similar to the non-uniform case, we consider a lossless source coding scheme for $P_Y^n$, where only sequences with probability greater than $1/M$ are encoded. This can be interpreted as a lossless source coding scheme where the rate is measured by $H_{-\infty}$ of the encoded messages, instead of $H_0$, which is equal to the logarithm of the size of the message set. Those sequences are in fact from the good set $\mG$ defined in \eqref{goodset}. The corresponding source coding scheme is given by
\begin{equation}\label{encoder-ren}
    \mE(y^n) = 
        \begin{cases}
            1,2,\dots,|\mG| & \text{if } y^n\in\mG, \\
            0 & \text{if } y^n\notin\mG,
        \end{cases} \qquad 
    \mD(i) = 
        \begin{cases}
            \mE^{-1}(i) & \text{if } i = 1,2,\dots,|\mG|, \\
            \text{declare error} & \text{if } i = 0.
        \end{cases}
\end{equation}
Clearly, all sequences in $\mG^c=\mY^n\backslash\mG$ contribute to the decoding error. According to Lemma \equiva, there exists an $(R,n,P_Y,W_{Y|X},q)$ non-uniform soft-covering scheme such that
\begin{align*}
    \frac12 \left\| \tPyQ - P_Y^n \right\|_1 
    &\leq \Pe = P_Y^n(\mG^c) 
    = \sum_{Q_Y\in\mP_n(\mY): D(Q_Y\|P_Y) + H(Q_Y) > R} P_Y^n(\mT_{Q_Y}) \\
    &\overset{a}{\leq} (n+1)^{\carY} \exp_2\left\{ -n \min_{Q_Y\in\mP_n(\mY):D(Q_Y\|P_Y) + H(Q_Y) > R} D(Q_Y\|P_Y) \right\},
\end{align*}
where $(a)$ follows from property of types. The exponent here is exactly $\Eu^\nl(R)$ as $n\to\infty$. However, directly from Lemma \equiva, the soft-covering coding scheme here is under the non-uniform formulation. We need to show that construction in \eqref{encoder-ren} can indeed result in a soft-covering coding scheme under the $H_{-\infty}$-constrained formulation. Specifically, the corresponding message distribution in \eqref{qi-source} (see Appendix \ref{app-lem-equiv}) needs to satisfy $q(i)\geq1/M$ for all $i=0,1,\dots,|\mG|$. First, the definition of $\mG$ in \eqref{goodset} ensures that $q(i)\geq 1/M$ for $i=1,\dots,|\mG|$. It remains to verify that $q(0)\geq 1/M$. Since
$q(0) = \Pe$, it is equivalent to check $\Eu^\nl(R) \leq R$. By \eqref{Eu-nl-boundary} in Appendix \ref{app-nl-exact}, when $\Eu^\nl(R)$ is finite, we have $\Eu^\nl(R) = D(Q_Y^*\|P_Y)$ for some optimizer $Q_Y^*\in\mP(\mY)$ such that $D(Q_Y^*\|P_Y) + H(Q_Y^*) = R$. Therefore, $\Eu^\nl(R) \leq R$. This completes the proof.
\end{proof}

\section{Error Exponent for Noisy Channels}\label{sec-ee-ny}

This section addresses soft covering error exponents for noisy channels. We prove Theorem \ref{thm-ee-achi} and Theorem \ref{thm-ee-con}.

\subsection{Achievability}\label{sec-ee-ny-achi}

\begin{proof}[Proof of Theorem \ref{thm-ee-achi}]
We can use the soft covering with noiseless channels to the case of noisy channels by covering the input distribution instead of the output. This leads to a high rate improvement in the error exponent achievability as compared to random coding. As an example, consider the uniform formulation.
The output distribution in \eqref{tPy} can be written as 
$$ \tPy(y^n) = \sum_{x^n} \tPx(x^n) \ W(y^n|x^n), $$
where $\tPx$ is the code-induced input distribution. Pick any $P_X\in\mS$. Then $P_Y = P_X W$. We have
$$ \TVinEq \overset{a}{\leq} \frac12 \left\|\tPx - P_X^n\right\|_1 
\overset{b}{\leq} 2^{-n \El^\nl(R)}, $$
where $(a)$ follows from the data processing inequality of the total variation, and $(b)$ follows from Theorem \ref{thm-nl-El}. Here $\El^\nl(R)$ is the noiseless bound of the $(R,n,P_X)$ covering problem. Hence, we can follow the code construction in the proof of Theorem \ref{thm-nl-El} in Section \ref{sec-nl-uni} and then cover the input distribution $P_X^n$. An optimal bound is further generated by maximizing over $P_X\in\mS$. Similar arguments extend to the non-uniform and $H_{-\infty}$-constrained formulations.
\end{proof}

\subsection{Converse}\label{sec-ee-ny-con}

\begin{proof}[Proof of Theorem \ref{thm-ee-con}]
Consider any code $\mC$. We claim that exists a type $\bQ_X\in\mP_n(\mX)$ such that
\begin{equation}\label{const-comp}
    \sum_{i=1}^M q(i) \ \mathbbm{1}\{X^n(i)\in\mT_{\bQ_X}\} \geq 2^{-n\epsilon}. 
\end{equation}
To see this, suppose $ \sum_{i=1}^M q(i) \ \mathbbm{1}\{X^n(i)\in\mT_{Q_X}\} < 2^{-n\epsilon} $ for all types $Q_X\in\mP_n(\mX)$. Then we have
\begin{align*}
    \sum_{i=1}^M q(i) 
    &= \sum_{i=1}^M q(i) \sum_{Q_X\in\mP_n(\mX)} \mathbbm{1}\{X^n(i)\in\mT_{Q_X}\} 
    = \sum_{Q_X\in\mP_n(\mX)} \ \sum_{i=1}^M q(i) \ \mathbbm{1}\{X^n(i)\in\mT_{Q_X}\} \\ 
    &\leq |\mP_n(\mX)| 2^{-n\epsilon}
    \leq (n+1)^{\carX} 2^{-n\epsilon} < 1
\end{align*}
for sufficiently large $n$, thereby leading to a contradiction. Hence, \eqref{const-comp} is proved, which implies that even though $\mC$ is not necessarily a constant composition code, it has a constant composition subset that includes most probable codewords.

We follow \eqref{setA} and set
$$ \mA = \bigcup_{i:X^n(i)\in \mT_{\bQ_X}}
    \bigsqcup_{V\in\mV_n(\bQ_X)} \mT_V(X^n(i)). $$
That is, only restrict to codewords with type $\bQ_X$ and $V$-shell in $\mV_n(\bQ_X)$, where $\mV_n(\bQ_X)$ is some subset of $\mP_n(\mY|\bQ_X)$ that will be properly chosen later. Following the proof of Proposition \ref{prop-sc-con}, we take a codeword $X^n(i)\in \mT_{\bQ_X}$ and can obtain that
\begin{align*}
    W_{Y|X}^n(\mA|X^n(i))
    &\geq W_{Y|X}^n \left( \bigsqcup_{V\in\mV_n(\bQ_X)} \mT_V(X^n(i)) \bigg| X^n(i) \right) 
    \geq \max_{V\in\mV_n(\bQ_X)} W_{Y|X}^n \left(\mT_V(X^n(i)) \big| X^n(i) \right) \\
    &\overset{a}{\geq} (n+1)^{-|\mX||\mY|} \exp_2 \left\{ -n \min_{V\in\mV_n(\bQ_X)} D(V\|W|\bQ_X) \right\},
\end{align*}
where $(a)$ follows from property of types. Averaging over all codewords further yields that
\begin{align*}
    \tPyQ(\mA) 
    &= \sum_{i=1}^M q(i) \ W_{Y|X}^n(\mA|X^n(i)) \\
    &\geq \sum_{i=1}^M q(i) \ \mathbbm{1}\{X^n(i)\in\mT_{\bQ_X}\} \ W_{Y|X}^n(\mA|X^n(i)) \\
    &\geq (n+1)^{-|\mX||\mY|} \exp_2 \left\{ -n \min_{V\in\mV_n(\bQ_X)} D(V\|W|\bQ_X) \right\} \sum_{i=1}^M  q(i) \sum_{Q_X\in\mP_n(\mX)} \mathbbm{1}\{X^n(i)\in\mT_{Q_X}\} \\
    &\overset{a}{\geq} (n+1)^{-|\mX||\mY|} \ 2^{-n\epsilon} \exp_2 \left\{ -n\min_{V\in\mV_n(\bQ_X)} D(V\|W|\bQ_X) \right\},
\end{align*}
where $(a)$ follows from \eqref{const-comp}. On the other hand, similarly to the calculation of $P_Y^n(\mA)$ in the proof of Proposition \ref{prop-sc-con}, here we have
\begin{align*}
    P_Y^n(\mA) 
    &\leq (n+1)^{\carX\carY} \exp_2 \left\{ -n \min_{V\in\mV_n(\bQ_X)}  \left[ D(\bQ_X V\|P_Y) + \big|I(\bQ_X;V) - R\big|^+ \right] \right\} \\
    &\leq (n+1)^{-\carX\carY} \ 2^{n\epsilon} \exp_2 \left\{ -n \min_{V\in\mV_n(\bQ_X)}  \left[ D(\bQ_X V\|P_Y) + \big|I(\bQ_X;V) - R\big|^+ \right] \right\}.
\end{align*}

Inserting the above expressions into \eqref{setA}, it is then reasonable to choose 
$$ \mV_n(Q_X) 
= \left\{ V\in\mP_n(\mY|Q_X): D(V\|W|Q_X) + 3\epsilon < D(Q_X V\|P_Y) + \big|I(Q_X;V) - R\big|^+ \right\} $$
and hence
\begin{align*}
    \frac12 \left\|\tPyQ - P_Y^n \right\|_1
    &\geq (n+1)^{-\carX\carY} \ 2^{-n\epsilon} (1-2^{-n\epsilon}) \exp_2 \left\{ -n\min_{V\in\mV_n(\bQ_X)} D(V\|W|\bQ_X) \right\} \\
    &\geq (n+1)^{-\carX\carY} \ 2^{-n\epsilon} (1-2^{-n\epsilon}) \exp_2 \left\{ -n \max_{Q_X\in\mP_n(\mX)} \ \min_{V\in\mV_n(Q_X)} D(V\|W|Q_X) \right\}.
\end{align*}
Taking $n\to\infty$ and $\epsilon\to0$, $\mV_n(\bQ_X)$ then boils down to
\begin{align}
    \mV(Q_X) 
    &= \left\{ V\in\mP(\mY|\mX): D(V\|W|Q_X) < D(Q_X V\|P_Y) + \big|I(Q_X;V) - R\big|^+ \right\} \label{WinVQx} \\
    &= \left\{ V\in\mP(\mY|\mX): \iQ > \min[R,I(Q_X; V)] \right\}. \label{i>minRI}
\end{align}
where Lemma \ref{lem-Deq} is used and thus we have acquired a bound 
\begin{equation}\label{El-ee-con}
    \Eu(R) := \max_{Q_X\in\mP(\mX)} \ \inf_{V\in\mV(Q_X)} D(V\|W|Q_X),
\end{equation}
which contains an unconstrained maximization over all input distributions. Clearly $\Eu(R)\geq0$. Furthermore, take any $Q_X\notin\mS$; then $D(Q_X W\|P_Y) > 0$ and hence \eqref{WinVQx} implies that $W\in\mV(Q_X)$. Ergo, for any $Q_X\notin\mS$, we have $\min_{V\in\mV(Q_X)} D(V\|W|Q_X) = 0$. As a result, the maximization $\max_{Q_X\in\mP(\mX)}$ in \eqref{El-ee-con} only occurs at those $Q_X\in\mS$. To be consistent with our notations ($Q$ associated with $V$ and $P$ associated with $W$), we rewrite \eqref{El-ee-con} as 
$$ \Eu(R) := \max_{P_X\in\mS} \ \inf_{V\in\mV(P_X)} D(V\|W|P_X). $$
Moreover, Lemma \ref{lem-Deq} implies that
$$ I(P_X;V) - \iota(P_X V_{Y|X}) = D(V\|W|P_X) - D(P_X V\|P_Y) \geq 0, $$
where the inequality is simply the data processing inequality of the relative entropy. Consequently, when $P_X\in\mS$ is taken, \eqref{i>minRI} reduces to
$$ \mV(P_X) = \left\{ V\in\mP(\mY|\mX): \iota(P_X V_{Y|X}) > R \right\}, $$
which completes the proof of the variational form in the theorem.

To reach the dual form, fix $P_X\in\mS$ and consider the following arguments. 
\begin{align*}
    \min_{V\in\mP(\mY|\mX): \iota(P_X V_{Y|X}) \geq R} D(V\|W|P_X) 
    &= \max_{\delta\geq0} \ \min_{V\in\mP(\mY|\mX)} \left\{ D(V\|W|P_X) + \delta\left[R - \iota(P_X V_{Y|X})\right] \right\} \\
    &\overset{a}{=} \max_{\delta\geq0} 
        \left\{ \delta R - \sum_x P_X(x) \log\left(\sum_y \frac{W_{Y|X}^{1+\delta}(y|x)}{P_Y^\delta(y)}\right) \right\} \\
    &\overset{b}{=} \max_{\alpha\geq1} 
        \left\{ (\alpha-1) \left[ R - \mathbb{E}_{P_X} D_\alpha(W_{Y|X}\|P_Y) \right] \right\}
\end{align*}
where $(a)$ follows from \eqref{cuff-minV} in Appendix \ref{app-ran}; $(b)$ follows from Definition \ref{def-Da} and setting $\alpha := \delta+1$.
\end{proof}

From Theorem \ref{thm-ee-con}, we can summarize the following properties of the proposed converse bound $\Eu(R)$, which are proved in Appendix \ref{app-lem-Eu0}.

\begin{lemma}\label{lem-Eu0} We have the following properties of $\Eu(R)$: \  

    (i) $\Eu(R) = 0$ when $ R\leq\minI$. \quad
    (ii) $\Eu(R) > 0$ when $R > \minI$.
    
    (iii) $\Eu(R) = \infty$ when 
        $R > \min_{P_X\in\mS} \max_{V\in\mP(\mY|\mX)} \iota(P_X V_{Y|X})$.
\end{lemma}

\section{Conclusion}
In the present work, we have characterized the exact strong converse exponent of the classical soft covering for rates below the mutual information. This exponent is expressed using a two-parameter information quantity that, to the best of our knowledge, has not been studied in the literature on error exponents with respect to a given channel. A promising direction for future work is a deeper investigation into the implications, properties, and potential applications of this new quantity.

Moreover, this work reveals that the conventional random coding bound is generally not tight in the achievability regime of rates above the mutual information, and that the traditional formulation assuming uniformly distributed words in the code can inherently lead to a rational–irrational discrepancy: even in the noiseless channel case, the reliability exponent diverges at sufficiently high rate for target distributions with all rational values, whereas for typical irrational probability values the exponent remains finite for all rates. Future work includes designing a well-behaved deterministic code for noisy channels that outperforms random coding in both high-rate and low-rate regimes, or even an optimal code that achieves the exact error exponent. Our observations also raise an important open question: 
can a similar rational–irrational discrepancy arise in other information-theoretic settings?

\appendices 

\renewcommand{\thesubsection}{\thesection-\alph{subsection}}

\renewcommand{\thesubsectiondis}{\alph{subsection}.}

\section{Properties of $J_{\alpha,\beta}\left(W_{Y|X}\|P_Y\right)$ (Proof of Lemma \ref{lem-Jab})} \label{app-Jab}

\begin{proof}
Statement $(a)$ is easy to verify. $(b)$ can be understood as a corollary of $(a)$: when $\alpha=1$, the optimizer is $Q_X^*\in\mS$ and hence $Q_X^* W = P_Y$. 

Now, we prove $(c)$. We follow the arguments in \cite[Thm.~4]{gallager1965simple} and \cite[Lem.~1]{arimoto1973converse}.
According to Definition \ref{def-Jab}, $J_{\alpha,\beta}\left(W_{Y|X}\|P_Y\right)$ is defined via solving the following optimization problem:
\begin{align}\label{convex-opt}
     \max_{Q_X} \sum_y P_Y^{1-\alpha\beta}(y) \left( \sum_x Q_X(x)  W_{Y|X}^\alpha(y|x) \right)^\beta \ 
    \text{ s.t. } \ Q_X(x) \geq 0, \forall x \in \mX, 
    \text{ and } 
     \sum_x Q_X(x) = 1.
\end{align}
The Lagrangian of this problem is
$$ \mathcal{L} = - \sum_y P_Y^{1-\alpha\beta}(y) \left( \sum_x Q_X(x)  W_{Y|X}^\alpha(y|x) \right)^\beta - \sum_x \lambda_x Q_X(x) + \mu \left( \sum_x Q_X(x) - 1\right). $$
Since $\beta\in[0,1]$, problem \eqref{convex-opt} is convex. The KKT conditions are necessary and sufficient for the optimizer $Q_X$ and multipliers $\lambda_x,\mu$ \cite[Ch.~5.5.3]{boyd2004convex}. Specifically, letting $Q_X^*, \lambda_x^*, \mu^*$ be the optimizers, we have
\begin{align}
    \sum_x Q_X^*(x) &= 1; \label{kkt1} \\
    \lambda_x^* &\geq 0, \ \ \lambda_x^* Q_X^*(x)=0, & \forall x \in\mX; \label{kkt2} \\
    0 &= \frac{\partial\mathcal{L}}{\partial Q_X(x)} \bigg|_{Q_X^*(x)}
    = - \beta \sum_y P_Y^{1-\alpha\beta}(y) \ W_{Y|X}^\alpha(y|x) \ \eta_y^{\beta-1} - \lambda_x^* + \mu^*, & \forall x \in\mX,  \label{kkt4}
\end{align}
where 
\begin{equation}\label{etay}
    \eta_y = \sum_x Q_X^*(x) W_{Y|X}^\alpha(y|x).
\end{equation}
Multiplying \eqref{kkt4} by $Q_X^*(x)$ and sum over $x$ gives that
\begin{equation}\label{kkt5}
    \mu^* = \beta \sum_y P_Y^{1-\alpha\beta}(y) \ \eta_y^\beta
\end{equation}
where we have plugged in \eqref{kkt1} and \eqref{kkt2}. Comparing \eqref{kkt2}, \eqref{kkt4}, and \eqref{kkt5}, $Q_X^*$ must satisfy 
\begin{equation}
\label{Q_X-opt}
    \sum_y P_Y^{1-\alpha\beta}(y) \ W_{Y|X}^\alpha(y|x) \ \eta_y^{\beta-1}
    \geq 
    \sum_y P_Y^{1-\alpha\beta}(y) \ \eta_y^\beta, \quad \forall x\in\mX,
\end{equation}
with equality for every $x\in\supp(Q_X^*)$ due to complementary slackness \eqref{kkt2}.

If $Q_X^*$ satisfies \eqref{Q_X-opt}, then $Q_X^{*n}$ also satisfies the $n$-shot version of \eqref{Q_X-opt} (replacing $W_{Y|X}$ by $W_{Y|X}^n$ and $P_Y$ by $P_Y^n$). To see this, inserting $Q_X^{*n}$ in \eqref{etay} gives
$$ \eta_{y^n} 
    = \sum_{x_1,\dots,x_n} \prod_{i=1}^n Q_X^*(x_i) 
    W_{Y|X}^\alpha(y_i|x_i) 
    = \prod_{i=1}^n \left[ \sum_{x_i} Q_X^*(x_i) W_{Y|X}^\alpha(y_i|x_i) \right]
    = \prod_{i=1}^n \eta_{y_i} $$
and hence \eqref{Q_X-opt} becomes
$$ \prod_{i=1}^n 
    \left[ \sum_{y_i} P_Y^{1-\alpha\beta}(y_i) \ W_{Y|X}^\alpha(y_i|x_i) \ \eta_{y_i}^{\beta-1} \right]
    \geq 
    \prod_{i=1}^n \left[ \sum_{y_i} P_Y^{1-\alpha\beta}(y_i) \ \eta_{y_i}^\beta \right], \quad \forall x^n\in\mX^n, $$
which is true as long as the inequality holds for each $y_i$ in the product. Considering that \eqref{Q_X-opt} is a necessary and sufficient condition for the optimizer, our conclusion is thus proved.
\end{proof}

\section{Dual forms of the Strong Converse Bounds $\Gl(R)$ and $\Gu(R)$}
\label{app-dual}
In this appendix, we establish the dual forms of the strong converse bounds $\Gl(R)$ (in Proposition \ref{prop-sc-con}) and $\Gu(R)$ (in Proposition \ref{prop-sc-achi}). That is, we prove Proposition \ref{prop-Gl-dual} and Proposition \ref{prop-Gu-dual} that are stated in Section \ref{sec-sc-dual}.

\subsection{Proof of Proposition \ref{prop-Gl-dual}}
\label{app-sec-Gl}

\begin{proof}
We first show \eqref{prop-Gl-dual-a} in Proposition \ref{prop-Gl-dual}.
Fix $Q_X\in\mP(\mX)$ and $R\geq0$. The function $\Ga(s,Q_X,R)$ defined in \eqref{Ga-s-Qx-R} describes a convex optimization of $V$: the objective function is $D(Q_X V\|P_Y) + \big|I(Q_X;V)-R\big|^+ = \max\{ D(Q_X V\|P_Y), D(V\|P_Y|Q_X) - R\}$, which is convex in $V$; the inequality constraint $D(V\|W|Q_X)\leq s$ is also convex in $V$; and the equality constraint is $\sum_y V_{Y|X}(y|x) = 1$, which is linear in $V$. Therefore, $\Ga(s,Q_X,R)$, as the optimal value over $V$, equals its Lagrangian dual due to the strong duality \cite[Ch.~5]{boyd2004convex}. Explicitly,
\begin{align*}
    \Ga(s,Q_X,R) &= \max_{\delta\geq0} \ \min_{V\in\mP(\mY|\mX)} 
        \left\{ D(Q_X V\|P_Y) + \big|I(Q_X;V)-R\big|^+
        + \delta\left[ D(V\|W|Q_X) - s\right] \right\} \\
    &= \max_{\delta\geq0} \ \min_{V\in\mP(\mY|\mX)} \ \max_{\mu\in[0,1]}
        \left\{ D(Q_X V\|P_Y) + \mu \big[I(Q_X;V)-R\big]
        + \delta \big[D(V\|W|Q_X) - s\big] \right\} \\
    &\overset{a}{=} \max_{\mu\in[0,1],\delta\geq0} \ \min_{V\in\mP(\mY|\mX)} 
        \left\{ D(Q_X V\|P_Y) + \mu \big[I(Q_X;V)-R\big]
        + \delta \big[D(V\|W|Q_X) - s\big] \right\} \\
    &\overset{b}{=} \max_{\mu\in[0,1],\delta\geq0} \ \min_{V\in\mP(\mY|\mX)}
        \big\{ (1+\delta)D(Q_X V\|P_Y) + (\mu+\delta) \left[I(Q_X;V)-R\right] + \delta\left[R-\iQ\right] -  \delta s \big\},
\end{align*}
where $(a)$ follows from Lemma \ref{lem-sion} in Appendix \ref{app-opt}: the objective function is continuous, and is convex in $V$ and linear in $\mu$, so we can swap $\min_V$ and $\max_\mu$. $(b)$ follows from Lemma \ref{lem-Deq}. Plugging this expression of $\Ga(s,Q_X,R)$ into Proposition \ref{prop-sc-con} yields that 
\begin{align*}
    \Gl(R) 
    = \ & \min_{Q_X\in\mP(\mX)} \ \min_{s\geq0} \ \max\big\{s, \Ga(s,Q_X,R)\big\} 
    = \  \min_{Q_X\in\mP(\mX)} \ \min_{s\geq0} \left\{ s + \big|\Ga(s,Q_X,R)\big\} - s\big|^+ \right\} \\
    = \ & \min_{Q_X\in\mP(\mX)} \ \min_{s\geq0} \ \max_{\lambda\in[0,1]} \left\{ s + \lambda \big[\Ga(s,Q_X,R) - s\big] \right\} \\
    = \ & \min_{Q_X\in\mP(\mX)} \ \min_{s\geq0} \ \max_{\lambda,\mu\in[0,1],\delta\geq0} \ \min_{V\in\mP(\mY|\mX)} \\
        & \big\{ \lambda(1+\delta)D(Q_X V\|P_Y) +\lambda (\mu+\delta) \left[I(Q_X;V)-R\right] + \lambda\delta\left[R-\iQ\right] + (1-\lambda-\lambda\delta) s \big\} \\
    \overset{a}{\geq} \ & 
        \min_{Q_X\in\mP(\mX)} \ \max_{\mu\in[0,1],\delta\geq0} \ \min_{V\in\mP(\mY|\mX)}
        \left\{D(Q_Y\|P_Y) + \frac{\mu+\delta}{1+\delta} \left[I(Q_X;V)-R\right] + \frac{\delta}{1+\delta}\left[R-\iQ\right] \right\} \\
    \overset{b}{=} \ & 
        \min_{Q_X\in\mP(\mX)} \ \max_{\beta,\gamma\in[0,1],\beta\geq\gamma} \ \min_{V\in\mP(\mY|\mX)}
        \big\{D(Q_Y\|P_Y) + \beta\left[I(Q_X;V)-R\right] + \gamma\left[R-\iQ\right] \big\} \\
    \overset{c}{=} \ & \min_{Q_{XY}\in\mP(\mX\mY)} \ \max_{\beta,\gamma\in[0,1],\beta\geq\gamma} 
        \big\{D(Q_Y\|P_Y) + \beta \left[I(Q_X;V)-R\right] + \gamma \left[R-\iQ\right] \big\}.
\end{align*}
Here $(a)$ follows from choosing $\lambda = \frac{1}{1+\delta}$; this is a valid choice because $\delta\geq0$ gives $\frac{1}{1+\delta}\in[0,1]$. $(b)$ follows from defining $\beta:= \frac{\mu+\delta}{1+\delta}$ and $\gamma:=\frac{\delta}{1+\delta}$ (and clearly $\beta\geq\gamma$). $(c)$ follows again from Lemma \ref{lem-sion}: the objective function is convex in $V$ and linear in $(\beta,\gamma)$, so we can swap $\min_V$ and $\max_{\beta,\gamma}$. Finally, we merge $\min_{Q_X}$ and $\min_V$ together as $\min_{Q_{XY}}$. This completes the proof of \eqref{prop-Gl-dual-a}.

Next, we prove \eqref{prop-Gl-dual-b} in Proposition \ref{prop-Gl-dual}. Observe that $D(Q_Y\|P_Y) + I(Q_X;V) = D(Q_{XY}\|Q_X P_Y)$ is convex in $Q_{XY}$ due to the joint convexity of $D(\cdot\|\cdot)$. Then for $\beta\leq1$, the expression $D(Q_Y\|P_Y) + \beta I(Q_X;V) = (1-\beta) D(Q_Y\|P_Y) + \beta D(Q_{XY}\|Q_X P_Y)$ is also convex in $Q_{XY}$. Moreover, $\iota(Q_{XY})$ is linear in $Q_{XY}$. Hence, the objective function of $\Gll(R)$ is convex in $Q_{XY}$ and linear in $(\beta,\gamma)$, so by Lemma \ref{lem-sion}, we can swap $\min_{Q_{XY}}$ and $\max_{\beta,\gamma}$ and obtain from \eqref{prop-Gl-dual-a} that
\begin{align}
    \Gll(R) 
    &= \max_{\beta,\gamma\in[0,1],\beta\geq\gamma} \ \min_{Q_{XY}\in\mP(\mX\mY)} 
        \big\{D(Q_Y\|P_Y) + \beta \left[I(Q_Y;\bV)-R\right] + \gamma \left[R-\iQ\right] \big\} \label{Gl-abQxy} \\
    &= \max_{\beta,\gamma\in[0,1],\beta\geq\gamma} \ \min_{Q_Y\in\mP(\mY)} 
        \left\{D(Q_Y\|P_Y) + (\gamma-\beta)R + K_{\beta,\gamma}(Q_Y) \right\}, \label{Gl-Kab}
\end{align}
where we have rewritten $I(Q_X;V)$ by $I(Q_Y;\bV)$ and defined
\begin{equation}\label{Kab}
    K_{\beta,\gamma}(Q_Y) = \min_{\bV\in\mP(\mX|\mY)} \big[ \beta I(Q_Y;\bV) - \gamma\iQ \big],
\end{equation}
which is a convex optimization over $\bV$ and can be solved using the method of the Lagrange multiplier. However, directly plugging in the derivative $\frac{\partial I(Q_Y;\bV)}{\partial \bV_{X|Y}(x|y)}$ leads to an equation too complicated to solve. Instead, we employ the following variational form of the mutual information $I(Q_Y;\bV)$:
\begin{equation}\label{I=minSx}
    I(Q_Y;\bV) = \min_{S_X\in\mP(\mX)} D(\bV_{X|Y} Q_Y \| S_X Q_Y).
\end{equation}
Therefore, \eqref{Kab} can be rewritten as
\begin{equation}\label{Kab-S}
    K_{\beta,\gamma}(Q_Y) = \min_{S_X\in\mP(\mX)} \ \min_{\bV\in\mP(\mX|\mY)} \big[ \beta D(\bV_{X|Y} Q_Y \| S_X Q_Y) - \gamma \iQ \big].
\end{equation}

Now, to optimize over $\bV$, we can introduce Lagrange multipliers $\lambda_y$ for each $y\in\mY$ and write the Lagrangian as
$$ \mathcal{L} = \beta D(\bV_{X|Y} Q_Y \| S_X Q_Y) - \gamma\iQ 
    + \sum_y \lambda_y \left( \sum_x \bV_{X|Y}(x|y) - 1 \right). $$
Substituting 
$$ \frac{\partial D(\bV_{X|Y} Q_Y \| S_X Q_Y)}{\partial \bV_{X|Y}(x|y)}
        = Q_Y(y) \left( 1 + \log\frac{\bV_{X|Y}(x|y)}{S_X(x)}\right), \quad
    \frac{\partial \iQ}{\partial \bV_{X|Y}(x|y)}
        = Q_Y(y) \log\frac{W_{Y|X}(y|x)}{P_Y(y)} $$
into $\frac{\partial \mathcal{L}}{\partial \bV_{X|Y}(x|y)} = 0$ gives that 
$$ Q_Y(y) \left( \beta + \beta \log\frac{\bV_{X|Y}(x|y)}{S_X(x)} - \gamma \log\frac{W_{Y|X}(y|x)}{P_Y(y)} \right) + \lambda_y = 0, $$
further generating the following solution of the optimizer $\bV^*$:
\begin{equation}\label{bV*}
    \bV^*_{X|Y}(x|y) = C(y) S_X(x) \left(\frac{W_{Y|X}(y|x)}{P_Y(y)}\right)^\frac{\gamma}{\beta}
\end{equation}
with some $y$-dependent normalization factor
$$ C(y) = \left[ \sum_x S_X(x) \left(\frac{W_{Y|X}(y|x)}{P_Y(y)}\right)^\frac{\gamma}{\beta} \right]^{-1}. $$
Hence, \eqref{Kab-S} reduces to
\begin{align*}
    K_{\beta,\gamma}(Q_Y) 
    &= \min_{S_X\in\mP(\mX)} \left[ \beta \sum_{x,y} Q_Y(y) \bV^*_{X|Y}(x|y) \log\frac{\bV^*_{X|Y}(x|y)}{S_X(x)} - \gamma \iota(\bV^*_{X|Y} Q_Y) \right] \\
    &= \min_{S_X\in\mP(\mX)} \left[ \beta \sum_{x,y} Q_Y(y) \bV^*_{X|Y}(x|y) \left(\log C(y) + \frac{\gamma}{\beta} \log \frac{W_{Y|X}(y|x)}{P_Y(y)} \right) - \gamma \iota(\bV^*_{X|Y} Q_Y) \right] \\
    &= \min_{S_X\in\mP(\mX)} \left[ \beta \sum_{x,y} Q_Y(y) \bV^*_{X|Y}(x|y) \log C(y) \right] 
    = \min_{S_X\in\mP(\mX)} \left[ \beta \sum_{y} Q_Y(y) \log C(y) \right] \\
    &= \min_{S_X\in\mP(\mX)} \left[ - \beta \sum_{y} Q_Y(y) \log \left( \sum_x S_X(x) \left(\frac{W_{Y|X}(y|x)}{P_Y(y)}\right)^\frac{\gamma}{\beta} \right) \right].
\end{align*}

Inserting this result into \eqref{Gl-Kab} yields that
\begin{align}
    \Gll(R)
    = \ & \max_{\beta,\gamma\in[0,1],\beta\geq\gamma}  \ \min_{S_X\in\mP(\mX)} \ \min_{Q_Y\in\mP(\mY)} \notag\\
        & \left\{D(Q_Y\|P_Y) + (\gamma-\beta)R - \beta \sum_{y} Q_Y(y) \log \left( \sum_x S_X(x) \left(\frac{W_{Y|X}(y|x)}{P_Y(y)}\right)^\frac{\gamma}{\beta} \right) \right\} \notag\\
    \overset{a}{=} \ & \max_{\beta,\gamma\in[0,1],\beta\geq\gamma} \ \min_{S_X\in\mP(\mX)} 
        \left\{ - \log\left( \sum_y P_Y^{1-\gamma}(y) \left[\sum_x S_X(x) \ W_{Y|X}^\frac{\gamma}{\beta}(y|x) \right]^\beta \right) + (\gamma-\beta)R \right\} \label{minSx} \\
    = \ &  \max_{\beta,\gamma\in[0,1],\beta\geq\gamma} 
        \left[ J_{\gamma/\beta,\beta}\left(W_{Y|X}\|P_Y\right) + (\gamma-\beta)R \right], \label{Jbc}
\end{align}
where $(a)$ is due to the following identity \cite[Lem.~20]{yagli2019exact} \cite[Thm.~1]{verdu2021error}:
\begin{equation}\label{cuff-minQy}
    \min_{Q_Y\in\mP(\mY)} \left[ D(Q_Y\|P_Y) - \sum_y Q_Y(y) F(y) \right]
    = -\log\left( \sum_y P_Y(y) \ 2^{F(y)} \right)
\end{equation}
for any function $F(\cdot)$ with a unique minimizer $Q_Y^*(y) \ \propto \ 2^{F(y)} P_Y(y)$. Setting $\alpha := \frac{\gamma}{\beta} \in [0,1]$ completes the proof of \eqref{prop-Gl-dual-b}. 
\end{proof}

For later convenience, we identify the optimizer here. The minimizer $Q_Y^*$ that yields \eqref{minSx} is
\begin{equation}\label{Qy*}
    Q_Y^*(y) = \frac{ P_Y^{1-\gamma}(y) \left(\displaystyle{\sum_x S_X(x) \ } W_{Y|X}^\frac{\gamma}{\beta}(y|x) \right)^\beta} {\displaystyle{\sum_{y'} P_Y^{1-\gamma}(y')} \left( \displaystyle{\sum_{x'} \ } S_X(x') \ W_{Y|X}^\frac{\gamma}{\beta}(y'|x') \right)^\beta}.
\end{equation}
Recall that we introduced $S_X$ merely to apply the variational form of mutual information \eqref{I=minSx}, so the minimizer $S_X^*$ in \eqref{minSx} is exactly the marginal $Q_X^*$ of the minimizer $Q_{XY}^*$ of \eqref{Gl-abQxy}. Hence, fixing $\beta$ and $\gamma$ (then $\alpha = \frac{\gamma}{\beta}$ is also fixed), if $J_{\alpha,\beta}\left(W_{Y|X}\|P_Y\right)$ in Definition \ref{def-Jab} yields an optimizer $Q_X^*$ (which must satisfy \eqref{Q_X-opt} in Appendix \ref{app-Jab}), then we can plug $S_X = Q_X^*$ in \eqref{bV*} and \eqref{Qy*} and obtain the corresponding optimizer $Q_{XY}^*$ of \eqref{Gl-abQxy} as
\begin{equation}\label{Qxy*}
    Q_{XY}^*(x,y) = \frac{ Q_X^*(x) \ W_{Y|X}^\frac{\gamma}{\beta}(y|x) \ P_Y^{1-\gamma}(y) \left(\displaystyle{\sum_{x'} Q_X^*(x') \ } W_{Y|X}^\frac{\gamma}{\beta}(y|x') \right)^{\beta-1} } {\displaystyle{\sum_{y'} P_Y^{1-\gamma}(y')} \left( \displaystyle{\sum_{x'} Q_X^*(x') \ }  W_{Y|X}^\frac{\gamma}{\beta}(y'|x') \right)^\beta}.
\end{equation}

Analogous to Lemma \ref{lem-Gl0}, one can observe the following properties of $\Gll(R)$ given its definition in \eqref{prop-Gl-dual-a}.

\begin{lemma}\label{lem-Gll0} We have the following properties of $\Gll(R)$: \  

    (i) $\Gll(R) = 0$ when $R \geq \minI$. \quad
    (ii) $\Gll(R) > 0$ when $R < \minI$.
\end{lemma}

\begin{proof}
For simplicity, write $\Gll(R) = \min_{Q_{XY}\in\mP(\mX\mY)} \Upsilon(R,Q_{XY})$ with
$$ \Upsilon(R,Q_{XY}) := \max_{\beta,\gamma\in[0,1],\beta\geq\gamma} 
    \big\{D(Q_Y\|P_Y) + \beta \left[I(Q_X;V)-R\right] + \gamma\left[R-\iQ\right] \big\}. $$
Given $R$, let $Q_{XY}^*\in\mP(\mY|\mX)$ optimize $\Upsilon(R,Q_{XY})$, i.e., $\Gll(R) = \Upsilon(R,Q_{XY}^*)$. Setting $\beta=\gamma=0$ gives $\Gll(R) = \Upsilon(R,Q_{XY}^*)\geq D(Q_Y^*\|P_Y)\geq0$. Hence, $\Gll(R)$ is non-negative for all $R$. \textit{(i)} follows immediately from Lemma \ref{lem-Gl0} and that $\Gll(R)\leq\Gl(R)$. 

For \textit{(ii)}, suppose not. Then $\Upsilon(R,Q_{XY}^*) = 0$ and hence plugging any combination of $(\beta,\gamma)\in[0,1]^2$ such that $\beta\geq\gamma$ into the objective function of $\Upsilon(R,Q_{XY}^*)$ will yield a non-positive value. First, plugging $\beta=\gamma=0$ gives that $D(Q_Y^*\|P_Y)=0$, so $Q_Y^* = P_Y$. Second, plugging $\beta=\gamma=1$ and combining it with Lemma \ref{lem-Deq} gives that $D(V^*\|W|Q_X^*) = 0$, so $V^* = W$ and hence $Q_X^*\in\mS$. Finally, plugging $\beta=1,\gamma=0$ gives that $I(Q_X^*;V^*)\leq R$. To sum up, we obtain that $R = I(Q_X^*;W)$ with $Q_X^*\in\mS$, contradicting the condition that $R < \minI$.
\end{proof}

\subsection{Proof of Proposition \ref{prop-Gu-dual}}\label{app-sec-Gu}

\begin{proof}
Given any $R>0$, define 
\begin{align}
    \Ga_1(R) 
    &:= \min_{Q_{XY}\in\mP(\mX\mY)} \left[ D(Q_Y\|P_Y) + \big|I(Q_X;V)-R\big|^+ + \big|R - \iQ \big|^+ \right] \label{Ga1}\\
    &=: \min_{Q_{XY}\in\mP(\mX\mY)} \Delta(Q_{XY}) = \Delta(Q_{XY}^\star). \label{Ga1-Qxy*}
\end{align}
That is, let $\Delta(\cdot)$ denote the objective function in $\Ga_1(R)$ and let $Q_{XY}^\star$ be the optimizer. We will claim that $\Gu(R) = \Ga_1(R)$. Note that the optimizer $Q_{XY}^\star$ may not be unique. Our strategy is therefore to claim the existence of some optimizer $Q_{XY}^\star$ such that $I(Q_X^\star;V^\star)\leq R$. 

We first show this is true when $\Ga_1(R)$ vanishes. Suppose $\Ga_1(R)=0$. Then $Q_Y^\star=P_Y$ and $I(Q_X^\star;V^\star)\leq R \leq\iota(Q_{XY}^\star)$. However, Lemma \ref{lem-Deq} gives that $I(Q_X^\star;V^\star) - \iota(Q_{XY}^\star) = D(V^\star\|W|Q_X^\star)\geq0$, so $V^\star = W$, and hence $Q_X^\star\in\mS$ and $R = I(Q_X^\star;W)$. Due to the continuity of $I(\cdot \ ;W)$, we have $\minI\leq R\leq\maxI$. By Lemma \ref{lem-Gu0}, we further have $\Gu(R) = \Ga_1(R) = 0$.

Next, we show that $\Gu(R) = \Ga_1(R)$ also holds in their positive regimes. In the rest of our discussions, we may assume that $\Ga_1(R)>0$. We further write $\Ga_1(R)$ as
\begin{align}\label{F-Qxy-ab}
    \Ga_1(R) &= \ \min_{Q_{XY}\in\mP(\mX\mY)} \ \max_{\beta,\gamma\in[0,1]} \big\{D(Q_Y\|P_Y) + \beta \left[I(Q_X;V)-R\right] + \gamma \left[R-\iQ\right] \big\} \notag \\
    &=: \min_{Q_{XY}\in\mP(\mX\mY)} \ \max_{\beta,\gamma\in[0,1]} F(Q_{XY},\beta,\gamma)
\end{align}
where we have used $F(Q_{XY},\beta,\gamma)$ to denote the objective function. It is shown in the proof of Proposition \ref{prop-Gl-dual} that $Q_{XY}\mapsto F(Q_{XY},\beta,\gamma)$ is convex and $(\beta,\gamma)\mapsto F(Q_{XY},\beta,\gamma)$ is linear (a trivial case of being concave). Hence, according to Lemma \ref{lem-sion} in Appendix \ref{app-opt}, the optimization in \eqref{F-Qxy-ab} can be achieved at its saddle points (there might exist some other non-saddle points that also produce the optimal value, but at least one saddle point exists). Now, suppose we have a saddle point, denoted by $(Q_{XY}',\beta',\gamma')$. It must satisfy \cite[Lem.~36.2]{rockafellar1970convex}
\begin{equation}\label{saddle}
    F(Q_{XY}',\beta,\gamma) \leq 
    F(Q_{XY}',\beta',\gamma') \leq 
    F(Q_{XY},\beta',\gamma'), \quad 
    \forall \ Q_{XY}\in\mP(\mX\mY), \ \beta,\gamma\in[0,1].
\end{equation}
It should be noted that the saddle points belong to both minimax ($\min_{Q_{XY}} \max_{\beta,\gamma} F(Q_{XY},\beta,\gamma)$) and maximin ($\max_{\beta,\gamma} \min_{Q_{XY}} F(Q_{XY},\beta,\gamma)$) solutions. Now we make the following claim:

\noindent\textbf{Claim 1.}
If $\beta'<1$, then $\Gu(R) = \Ga_1(R)$.

\begin{subproof}
If $\beta'<1$, then $Q_{XY}'$ must give $I(Q_X';V')\leq R$, otherwise the first inequality in \eqref{saddle} is violated as $F(Q_{XY}',\beta=1,\gamma=\gamma') > F(Q_{XY}',\beta',\gamma')$. Since this saddle point $(Q_{XY}',\beta',\gamma')$ is also the minimax solution, then in \eqref{Ga1-Qxy*} we can take $Q_{XY}^\star = Q_{XY}'$ and hence $\Ga_1(R) = \Delta(Q_{XY}')$: the minimization $\min_{Q_{XY}}$ in \eqref{Ga1} can always occur at a point where $I(Q_X';V')\leq R$, so $\min_{Q_{XY}}$ can reduce to a constrained optimization $\min_{Q_{XY}:I(Q_X;V)\leq R}$ and thus $\Gu(R) = \Ga_1(R)$. 
\end{subproof}

It remains to discuss the case in which $\beta'=1$. Comparing \eqref{F-Qxy-ab} and \eqref{prop-Gl-dual-a}, we can immediately write from \eqref{prop-Gl-dual-b} that
\begin{equation}\label{Ga1-dual}
    \Ga_1(R) = \max_{\beta,\gamma\in[0,1]} \left[ J_{\gamma/
    \beta,\beta}\left(W_{Y|X}\|P_Y\right) + (\gamma-\beta) R \right].
\end{equation}
Here we keep it in the form of \eqref{Jbc} with $\gamma = \alpha\beta$ for notational convenience later, and note that the condition $\beta\geq\gamma$ is not used in the proof of \eqref{Jbc}.
Since the saddle point $(Q_{XY}',\beta'=1,\gamma')$ belongs to the maximin solutions, $(\beta'=1,\gamma')$ must be the maximizers of \eqref{Ga1-dual}, and correspondingly the marginal $Q_X'$ of $Q_{XY}'$ belongs to the set of optimizers $Q_X^*$ associated with the evaluation of $J_{\gamma',1}$ by Definition \ref{def-Jab}. Specifically, $Q_X'$ is in the set
\begin{equation}\label{QX*=argmax}
    \mP(\mX_{\gamma'}) = \argmax_{Q_X\in\mP(\mX)} \ \left[ \sum_{x,y} Q_X(x) \  W_{Y|X}^{\gamma'}(y|x) \ P_Y^{1-\gamma'}(y) \right],
\end{equation}
which is a linear optimization over $Q_X$. Here $\mX_{\gamma'}:= \argmax_x \left[\sum_y W_{Y|X}^{\gamma'}(y|x) \ P_Y^{1-\gamma'}(y) \right]$ is set of optimal symbols and $\mP(\mX_{\gamma'})$ is the set of input distributions supported on $\mX_{\gamma'}$. 

Since $\beta'=1$ and we are working with $J_{\gamma',1}$, we can first exclude two trivial cases: $\gamma'=0$ or 1. If $\gamma'=0$, then \eqref{Ga1-dual} gives $\Ga_1(R) = -R$, while \eqref{Ga1} implies that $\Ga_1(R)\geq0$. Thus, this case can never happen. Similarly, if $\gamma'=1$, then Lemma \postvt \ gives $\Ga_1(R) = 0$, while we have assumed the strict positivity of $\Ga_1(R)$. 
In conclusion, as long as $\Ga_1(R)>0$, we have $\gamma'\in(0,1)$ and can further make the following claim.

\noindent\textbf{Claim 2.}
A saddle point $(Q_{XY}',\beta'=1,\gamma'\in(0,1))$ must satisfy that $I(Q_X';V')\geq R = \iota(Q_{XY}')$.

\begin{subproof}
    This follows directly from the first inequality in \eqref{saddle}.
\end{subproof}

In the rest of our discussions, we may assume that $\gamma'\in(0,1)$ (while still assuming $\beta'=1$). In most circumstances, $\mX_{\gamma'}$ may contain only one symbol, and the corresponding maximizer of \eqref{QX*=argmax} is a deterministic distribution. More generally, however, $\mX_{\gamma'}$ may contain multiple symbols, and all distributions supported on these symbols are maximizers of \eqref{QX*=argmax}; the marginal $Q_X'$ of the saddle point $Q_{XY}'$ is only one of those distributions. Let us consider this general situation: if $\mX_{\gamma'}$ contains multiple symbols, we take any two symbols, say,  $x_1,x_2\in\mX_{\gamma}$. Denote
$$ G_{x}(\gamma) 
    := 2^{(\gamma-1) D_\gamma(W_{Y|x}\|P_Y)} 
    = \sum_y W_{Y|X}^{\gamma}(y|x) \ P_Y^{1-\gamma}(y), $$
where $D_\gamma$ is the $\gamma$-R\'enyi divergence in Definition \ref{def-Da}. Then we have $G_{x_1}(\gamma') = G_{x_2}(\gamma')$, meaning that these two functions of $\gamma$ intersect at $\gamma'$. Since they are both analytic, in the neighborhood of $\gamma'$, they either overlap or intersect only at $\gamma'$. The former indicates a certain symmetry in $W_{Y|X}$ and $P_Y$ (for example, $W_{Y|X}$ is a binary symmetric channel and $P_Y$ is uniform). We define that $x_1$ and $x_2$ belong to the same \textit{class}, if $G_{x_1}(\gamma) = G_{x_2}(\gamma)$ for all $\gamma$ in the neighborhood of $\gamma'$. We now conduct our discussions based on how many classes $\mX_{\gamma'}$ contains, which can be classified into the following three cases. 

\noindent\textbf{Case 1.}
$\mX_{\gamma'}$ contains only one class, and there is only one symbol in that class.

In this case, $\mX_{\gamma'}$ basically contains a single symbol, say, $\mX_{\gamma'} = \{x_0\}$ for some $x_0\in\mX$. Then $\mP(\mX_{\gamma'}) = \{\delta_{x,x_0}\}$, so \eqref{QX*=argmax} has a unique solution which is a deterministic distribution. Since the saddle marginal $Q_X'\in\mP(\mX_{\gamma'})$ is in this solution set, we must have $Q_X'=\delta_{x,x_0}$; however, this gives $I(Q_X';V') = 0<R$, contradicting $I(Q_X';V')\geq R$ in Claim 2. Ergo, this case is impossible. In essence, this suggests that there exists no saddle point in the form of $(Q_{XY}',\beta'=1,\gamma'\in(0,1))$ when $\mP(\mX_{\gamma'})$ contains only a delta distribution. The saddle point must have $\beta'<1$. We are back to Claim 1 and hence $\Gu(R) = \Ga_1(R)$.

\noindent\textbf{Case 2.} $\mX_{\gamma'}$ contains only one class, and there is more than one symbol in that class.

In this case, take any two symbols, say, $x_1,x_2\in\mX_{\gamma'}$. Then $G_{x_1}(\gamma)$ and $ G_{x_2}(\gamma)$ are locally the same function. We can further take their derivatives at $\gamma'$ and write $G_{x_1}'(\gamma') = G_{x_2}'(\gamma')$. As a result, the expression
\begin{equation}\label{Gb-deri}
    \sum_y W_{Y|X}^{\gamma'}(y|x) \ P_Y^{1-\gamma'}(y) \log\frac{W_{Y|X}(y|x)}{P_Y(y)} 
\end{equation}
will be independent of symbol choice in that class, i.e., \eqref{Gb-deri} is identical for all $x\in\mX_{\gamma'}$. Then, take any $Q_X^*\in\mP(\mX_{\gamma'})$ (so $Q_X^*$ is an optimizer in $J_{\gamma',1}$) and its corresponding joint optimizer $Q_{XY}^*$ for $\min_{Q_{XY}} F(Q_{XY},1,\gamma')$ is given by \eqref{Qxy*} (under $\beta=1$ and $\gamma=\gamma'$). The expression
$$ \iota(Q_{XY}^*) 
    = \frac{\displaystyle{ \sum_x Q_X^*(x) \sum_y W_{Y|X}^{\gamma'}(y|x) \ P_Y^{1-\gamma'}(y) \log\frac{W_{Y|X}(y|x)}{P_Y(y)}}} {\displaystyle{ \sum_x Q_X^*(x) \sum_y W_{Y|X}^{\gamma'}(y|x) \ P_Y^{1-\gamma'}(y)} } $$
will be independent of $Q_X^*$. Since the saddle point $Q_{XY}'$ also has marginal $Q_X'\in\mP(\mX_{\gamma'})$ and satisfies $\iota(Q_{XY}') = R$ according to Claim 2, we must have $\iota(Q_{XY}^*) = R$ for all $Q_X^*\in\mP(\mX_{\gamma'})$. Note that $\mP(\mX_{\gamma'})$ contains the deterministic distribution (which yields a zero mutual information) and the saddle point (which yields $I(Q_X';V')\geq R$ by Claim 2). By continuity of distributions supported on $\mX_{\gamma'}$, there must exist some $Q_X^\star\in\mP(\mX_{\gamma'})$ such that $I(Q_X^\star;V^\star) = R = \iota(Q_{XY}^\star)$. Plugging this $Q_{XY}^\star$ in \eqref{Ga1-Qxy*} gives
\begin{align*}
    \Delta(Q_{XY}^\star)
    &= D(Q_Y^\star\|P_Y) + \big|I(Q_X^\star;V^\star)-R\big|^+ + \big|R - \iota(Q_{XY}^\star) \big|^+ \\
    &= D(Q_Y^\star\|P_Y) + \left[ I(Q_X^\star;V^\star)-R\right] + \gamma'\left[R - \iota(Q_{XY}^\star) \right] \\
    &= J_{\gamma',1}\left(W_{Y|X}\|P_Y\right) + (\gamma'-1) R 
    = \Ga_1(R).
\end{align*}
This means the minimization $\min_{Q_{XY}}$ in \eqref{Ga1} can occur at a point where $I(Q_X^\star;V^\star) = R$, so $\Gu(R) = \Ga_1(R)$.

\noindent\textbf{Case 3.} $\mX_{\gamma'}$ contains more than one class. 

In this case, we cannot deduce much about $\gamma'$, and hence about the corresponding $R$. However, let us perturb $\gamma'$ in its neighborhood, say, to some $\gamma''\in(0,1)$. Due to the existence of only finitely many symbols, in the neighborhood of $\gamma'$, there are only finite intersections of $G_x(\gamma)$'s for $x$'s from different classes, so $\mX_{\gamma''}$ must contain only one class. Take any symbol, say, $x_1$ from this class. Then we can write $\gamma''$ as
\begin{align*}
    \gamma'' 
    &= \argmax_{\gamma\in[0,1]} \left\{ J_{\gamma,1}\left(W_{Y|X}\|P_Y\right) + (\gamma-1) (R+\Delta R) \right\} \\
    &= \argmax_{\gamma\in[0,1]} \left\{ (1-\gamma) \left[ D_\gamma\left(W_{Y|x_1}\|P_Y\right) - (R+\Delta R) \right] \right\}
\end{align*}
for some $R+\Delta R$ in the neighborhood of $R$. Note that this is a strictly concave maximization due to the strict concavity of $\gamma\mapsto (1-\gamma)D_\gamma$ (if $(1-\gamma)D_\gamma$ is linear in $\gamma$, then the maximizer $\gamma''$ must be 0 or 1). Thus, this new $R+\Delta R$ is uniquely determined by $\gamma''$. If the corresponding saddle point at $R+\Delta R$ has $\beta''<1$, then we are back to Claim 1. If it has $\beta''=1$, then that saddle point must be $(Q_{XY}'',\beta''=1,\gamma'')$, and we are back to Case 1 or Case 2 and can obtain that $\Gu(R+\Delta R) = \Ga_1(R+\Delta R)$. In summary, for all $R+\Delta R$ (where $\Delta R\neq0$) in the neighborhood of $R$, we have $\Gu(R+\Delta R) = \Ga_1(R+\Delta R)$. Ergo, $\Gu(\cdot) = \Ga_1(\cdot)$ holds almost everywhere in the neighborhood of $R$. On the other hand, $\Gu(\cdot)$ and $\Ga_1(\cdot)$ are both continuous due to Lemma \ref{lem-berge} in Appendix \ref{app-opt} (note that in $\Gu(\cdot)$, $R\twoheadrightarrow\{Q_{XY}\in\mP(\mX\mY): I(Q_X;V)\leq R\}$ is a continuous and compact correspondence), so $\Gu(R) = \Ga_1(R)$ also holds at $R$.

To sum up, $\Gu(R) = \Ga_1(R)$ holds everywhere, in both vanishing and positive regimes. The dual form of $\Gu(R)$ follows immediately from \eqref{Ga1-dual}. Finally, we set $\alpha := \frac{\gamma}{\beta}$. Note that here we no longer impose the condition $\beta\geq\gamma$, so the range of $\alpha$ is $\alpha\geq0$. This completes the proof.
\end{proof}

\subsection{Proof of Proposition \ref{prop-Gu=Gll}}\label{app-sec-Gu=Gll}
\begin{proof}
We first show that both $\Gu(R)$ and $\Gll(R)$ are convex in $R$. Collecting Proposition \ref{prop-Gl-dual} and \ref{prop-Gu-dual}, both $\Gu(R)$ and $\Gll(R)$ take the dual form
\begin{equation}\label{dual}
    \max_{\alpha,\beta} \left[ J_{\alpha,\beta}\left(W_{Y|X}\|P_Y\right) + \beta(\alpha-1) R \right].
\end{equation}
Their only difference is the optimization range in $(\alpha,\beta)$. For a fixed combination of $(\alpha,\beta)$, the objective function in \eqref{dual} is linear in $R$ (a trivial case of being convex). Then after maximizing over $(\alpha,\beta)$, \eqref{dual} is convex in $R$.

Next, we show that both $\Gu(R)$ and $\Gll(R)$ are monotone decreasing when $R<\minI$. Combining convexity and Lemma \ref{lem-Gu0}, $\Gu(R)$ is convex in $R$ and positive when $R<\minI$. As $R$ grows, $\Gu(R)$ remains convex and must vanish when $R$ reaches $\minI$. Ergo, $\Gu(R)$ must be monotone decreasing in $R$ when $R<\minI$. Similar arguments can be established for $\Gll(R)$ by combining its convexity and Lemma \ref{lem-Gll0}.

Finally, to prove our conclusion, it suffices to show that when $R<\minI$, the optimization over $\alpha\geq0$ in $\Gu(R)$ occurs at some $\alpha\in[0,1]$. Let $R<\minI$ and the optimizers in $\Gu(R)$ be $\alpha',\beta'$, i.e., $\Gu(R) = J_{\alpha',\beta'}\left(W_{Y|X}\|P_Y\right) + \beta'(\alpha'-1') R$.  
Suppose $\alpha'>1$. Take $\Delta R\to 0^+$ and we have
\begin{align*}
    \Gu(R+\Delta R) 
    &= \max_{\alpha\geq0,\beta\in[0,1]} \left[ J_{\alpha,\beta}\left(W_{Y|X}\|P_Y\right) + \beta(\alpha-1) (R+\Delta R) \right] \\
    &\geq J_{\alpha',\beta'}\left(W_{Y|X}\|P_Y\right) + \beta'(\alpha'-1) (R+\Delta R) \\
    &= \Gu(R) + \beta'(\alpha'-1) \Delta R > \Gu(R),
\end{align*}
meaning that $\Gu(R)$ is locally increasing in $R$, which contradicts the decreasing property that we just showed. Hence, when $R<\minI$, the optimizer $\alpha'$ for $\Gu(R)$ must lie in $[0,1]$. 
\end{proof}

\section{A Different Approach to Proving the Converse Part of Theorem \ref{thm-sc}}\label{app-arimoto} 

Based on Arimoto's techniques in \cite{arimoto1973converse}, we provide a different proof of the converse part of Theorem \ref{thm-sc} for the uniform formulation, i.e., we show $\Ga_\su(R) \geq \Ga(R)$.

\begin{proof}
For simplicity, we first narrow down our discussions to the 1-shot case ($n=1$). Let $\mC^* = \{X^*(1),\ldots,X^*(M)\}$ be the optimal code that achieves the minimal total variation using $M$ codewords, i.e.,
$$ \frac12 \left\| \tP_{Y^n|\mC^*} - P_Y^n \right\|_1 
= \min_{\mC:|\mC|= M} \frac12 \left\| \tPy - P_Y^n \right\|_1 . $$ 
Then there exists a probabilistic distribution on the set of all codes, say, $\tQ_\mC := \tQ_\mC(X(1),\ldots,X(M))$ that satisfies the following two properties.
\begin{itemize}
    \item [\textit{(i)}] The expectation of total variation under $\tQ_\mC$ equals to the minimal total variation:
        $$ \mathbb{E}_{\tQ_\mC} \left[ \frac12 \left\|\tP_{Y|\mC} - P_Y \right\|_1 \right] 
        = \frac12 \left\|\tP_{Y|\mC^*} - P_Y^n \right\|_1. $$
    \item [\textit{(ii)}] The marginal distribution of each codeword is identical, i.e.,
        $$ \tQ_X(X(i)) 
        := \sum_{X(1)} \cdots \sum_{X(i-1)} \sum_{X(i+1)} \cdots \sum_{X(M)} \tQ_\mC(X(1),\ldots,X(M)) $$
        is the same for each $i=1,\ldots,M$. 
\end{itemize}
One example of such a $\tQ_\mC$ is \cite{arimoto1973converse}:
\[
  \tQ_\mC = \begin{cases}
              \dfrac{1}{M!} & \text{if } \mC = \mC^* \text{ up to permutations of codewords}, \\
              0 & \text{otherwise}.
            \end{cases}
\]

Taking any $\alpha,\beta\in[0,1]$, we have the following:
\begin{align*}
    1 - \frac12 \left\|\tP_{Y|\mC} - P_Y \right\|_1 
    &\leq 1 - \frac12 \left\|\tP_{Y|\mC^*} - P_Y \right\|_1 
    = \mathbb{E}_{\tQ_\mC} \left[ 1 - \frac12 \left\|\tP_{Y|\mC^*} - P_Y \right\|_1 \right] \\
    &\overset{a}{=} \mathbb{E}_{\tQ_\mC} \sum_y P_Y(y) \min\left\{ \frac{1}{M} \sum_{i=1}^M \frac{W_{Y|X}(y|X(i))}{P_Y(y)},1 \right\} \\
    &\overset{b}{\leq} \mathbb{E}_{\tQ_\mC} \sum_y P_Y(y) \left( \frac{1}{M} \sum_{i=1}^M \frac{W_{Y|X}(y|X(i))}{P_Y(y)} \right)^{\alpha\beta} \\
    &= M^{-\alpha\beta} \sum_y P_Y^{1-\alpha\beta}(y) \ \mathbb{E}_{\tQ_\mC} \left( \sum_{i=1}^M W_{Y|X}(y|X(i)) \right)^{\alpha\beta} \\
    &\overset{c}{\leq} M^{-\alpha\beta} \sum_y P_Y^{1-\alpha\beta}(y) \left( \mathbb{E}_{\tQ_\mC} \left[  \sum_{i=1}^M W_{Y|X}(y|X(i)) \right]^\alpha\right)^\beta \\
    &\overset{d}{\leq} M^{-\alpha\beta} \sum_y P_Y^{1-\alpha\beta}(y) \left( \mathbb{E}_{\tQ_\mC} \left[ \sum_{i=1}^M W_{Y|X}^\alpha(y|X(i)) \right] \right)^\beta  \\
    &\overset{e}{=} M^{-\alpha\beta} \sum_y P_Y^{1-\alpha\beta}(y) \left( M \ \mathbb{E}_{\tQ_X(X(1))} \left[W_{Y|X}^\alpha(y|X(1))\right] \right)^\beta \\
    &= M^{\beta-\alpha\beta} \sum_y P_Y^{1-\alpha\beta}(y) \left( \sum_x \tQ_X(x) W_{Y|X}^\alpha(y|x) \right)^\beta \\
    &\leq M^{-\beta(\alpha-1)} \max_{Q_X\in\mP(\mX)} \sum_y P_Y^{1-\alpha\beta}(y) \left( \sum_x Q_X(x) W_{Y|X}^\alpha(y|x) \right)^\beta \\
    &= 2^{-\left[ J_{\alpha,\beta}\left(W_{Y|X}\|P_Y\right) + \beta(\alpha-1) R \right]}.
\end{align*}
Here $(a)$ follows from \eqref{1-L1=min}. $(b)$ is due to that $\min\{x,1\}\leq x^\gamma$ when $x\geq0$ and $\gamma=\alpha\beta\in[0,1]$. $(c)$ applies Jensen's inequality to the concave function $x^\beta$. $(d)$ follows from the relation $(x+y)^\alpha \leq x^\alpha + y^\alpha$ for $x,y\geq0$ and $\alpha\in[0,1]$. $(e)$ follows from the second property of $\tQ_\mC$. 
Extending this expression to the $n$-shot case, our conclusion immediately follows from the additivity of $J_{\alpha,\beta}$ in Lemma \addtvt.
\end{proof}

\section{A Random Coding Achievability for the Strong Converse Exponent} \label{app-ran}
In this appendix, we prove Theorem \ref{thm-rc}. We start with the following lemma.

\begin{lemma}\label{lemma-binomial}
    Let $K\sim \text{Binomial}(M,p)$ be a binomial random variable and $M\geq2$. We have
    $$ \frac12 \mathbb{E} \left|\frac{K}{M} - p\right| \leq p - \frac12 p \min\{Mp, 1\}. $$
\end{lemma}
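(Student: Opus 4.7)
The plan is to first rewrite the mean absolute deviation in a one-sided form. Since $\mathbb{E}[K/M] = p$, we have $\mathbb{E}[(K/M - p)_+] = \mathbb{E}[(p - K/M)_+]$, and combined with $|x| = (x)_+ + (-x)_+$ this yields
\[
\frac{1}{2}\mathbb{E}\left|\frac{K}{M} - p\right| = \mathbb{E}\left[\left(p - \frac{K}{M}\right)_+\right].
\]
I would then split the analysis into two cases matching the two branches of $\min\{Mp, 1\}$ on the right-hand side.

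In the regime $Mp \leq 1$: here $p \leq 1/M$, so $(p - K/M)_+ = 0$ whenever $K \geq 1$, leaving only the $K = 0$ contribution. Hence $\mathbb{E}[(p - K/M)_+] = p(1 - p)^M$, and the target reduces to $(1 - p)^M \leq 1 - Mp/2$. I would prove this via the second Bonferroni inequality applied to $K = \sum_{i=1}^M X_i$ with $X_i$ i.i.d.\ Bernoulli$(p)$:
\[
1 - (1 - p)^M = \pr(K \geq 1) \geq Mp - \binom{M}{2}p^2 = Mp\left(1 - \tfrac{M-1}{2}p\right) \geq Mp\left(1 - \tfrac{Mp}{2}\right) \geq \tfrac{Mp}{2},
\]
where the last step uses $Mp \leq 1$, and the $\binom{M}{2}$ step is enabled by the hypothesis $M \geq 2$.

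In the regime $Mp > 1$: the right-hand side simplifies to $p/2$, so it suffices to show $\mathbb{E}|K/M - p| \leq p$. Jensen's inequality gives
\[
\mathbb{E}\left|\frac{K}{M} - p\right| \leq \sqrt{\mathrm{Var}(K/M)} = \sqrt{p(1-p)/M} \leq \sqrt{p/M} \leq p,
\]
where the final inequality uses $Mp \geq 1$. The main obstacle is really the $Mp \leq 1$ regime: the second-moment bound $\sqrt{p/M}$ is far too loose there (of order $p/\sqrt{Mp} \gg p$ when $Mp \ll 1$), so one cannot avoid exploiting the discreteness of $K$---namely that only the atom at $\{K = 0\}$ contributes to $(p - K/M)_+$---to reduce the claim to the Bernoulli-type estimate $(1 - p)^M \leq 1 - Mp/2$.
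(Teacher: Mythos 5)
Your proof is correct and follows essentially the same strategy as the paper's: case-split on $Mp\lessgtr 1$, exploit the atom at $K=0$ and bound $(1-p)^M\le 1-Mp/2$ in the small-$Mp$ regime, and use the variance/Jensen bound $\tfrac12\sqrt{\mathrm{Var}(K)}/M\le p/2$ in the large-$Mp$ regime. The only cosmetic differences are that your one-sided reduction $\tfrac12\mathbb{E}|K/M-p|=\mathbb{E}[(p-K/M)_+]$ gets to $p(1-p)^M$ a bit more cleanly than the paper's two-sided sum manipulation, and you invoke the second Bonferroni inequality where the paper writes the same bound $(1-p)^M\le 1-Mp+\binom{M}{2}p^2$ as a truncation of the binomial expansion --- these are the same inequality under two names. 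You also cover the boundary $Mp=1$, which the paper's strict-inequality case split technically omits.
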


\begin{proof}
Write 
\begin{equation}\label{bi}
    \frac12 \mathbb{E} \left|\frac{K}{M} - p\right|
    = \frac12 \sum_{k=0}^{\lfloor Mp \rfloor} \Pr\{K=k\} \left(p - \frac{k}{M}\right) + \frac12 \sum_{k = \lceil Mp \rceil}^M \Pr\{K=k\} \left(\frac{k}{M} - p\right).
\end{equation}
and consider the following two cases.

\begin{itemize}[leftmargin=12.5pt]
    \item [1)] $Mp < 1$:
        In this case $\lfloor Mp \rfloor = 0$ and $\lceil Mp \rceil = 1$. \eqref{bi} reduces to
    \begin{align*}
        \frac12 \mathbb{E} \left|\frac{K}{M} - p\right| 
        &= \frac12 \Pr\{K=0\} p + \frac12 \sum_{k=1}^M \Pr\{K=k\} \left(\frac{k}{M} - p\right) \\
        &= \frac12 (1-p)^M p + \frac{\mathbb{E}K}{2M} - \frac12 [1-(1-p)^M]p 
        = (1-p)^M p \\
        &\overset{a}{\leq} p - \frac12 Mp^2,
    \end{align*}
    where $(a)$ follows from, when $M\geq 2$,
    $$ (1-p)^M \leq 1 - Mp + \frac12 M(M-1)p^2 \leq 1 - Mp + \frac12 M^2 p^2 \leq 1 - Mp + \frac12 Mp = 1 - \frac12Mp. $$
    
\item [2)] $Mp > 1$:
    In this case $\lfloor Mp \rfloor\geq 1$, analyzing \eqref{bi} becomes challenging. Instead, we can apply Jensen's inequality for the square root, which is a commonly used technique in the soft covering problem:
    \[
      \frac12 \mathbb{E} \left|\frac{K}{M} - p\right| 
        = \frac{\mathbb{E} \left|K - \mathbb{E}K \right|}{2M} 
        \leq \frac{\sqrt{\text{Var}(K)}}{2M} 
        = \frac12 \sqrt{\frac{p(1-p)}{M}}
        \leq \frac12 \sqrt{\frac{p}{M}}
        \leq \frac12 p.
    \]
\end{itemize}
Combining these two cases gives the result.
\end{proof}

Now we prove Theorem \ref{thm-rc} using the random coding strategy. 

\begin{proof}[Proof of Theorem \ref{thm-rc}]
We first prove the variational form \eqref{thm-rc-a}. Take any $P_X\in\mathcal{S}$ and generate a random code $\mC = \{X^n(1),\dots,X^n(M)\}$ with $M=2^{nR}$, where each codeword $X^n(i)$ for any $i=1,\dots,M$ is drawn from i.i.d. $P_X$.
Consider any joint type $Q_{XY}\in\mP_n(\mX\mY)$. Similar to the proof of Proposition \ref{prop-sc-achi}, introduce the backward conditional type $\bV_{X|Y} = Q_{XY}/Q_Y\in\mP_n(\mX|Q_Y)$. We define
\begin{align}
    \omega(Q_{XY}) &:= W_{Y|X}^n(y^n|x^n), \quad \qquad \forall (x^n,y^n)\in\mT_{Q_{XY}}. \label{wQXY} \\
    p_{\bV}(Q_Y) &:= \Pr\{X^n\in \mT_{\bV}(y^n)\}, \ \quad \forall y^n\in\mT_{Q_Y}. \label{pV(y)}
\end{align}
The values of $\omega(Q_{XY})$ and $p_{\bV}(Q_Y)$ are uniquely determined by the joint type $Q_{XY}$, independently of the particular sequence $x^n,y^n$. Take any $y^n\in \mT_{Q_Y}$. Then we can write
\begin{align}
    \tP_{Y^n|\mC}(y^n) 
    &= \frac{1}{M} \sum_{i=1}^M W_{Y|X}^n(y^n|X^n(i)) 
    = \sum_{\bV\in\mP_n(\mX|Q_Y)} \frac{k_{\bV}(y^n)}{M} \omega(Q_{XY}), \notag\\
    P_Y^n(y^n)
    &= \sum_{x^n} P_X^n(x^n) W_{Y|X}^n(y^n|x^n) 
    = \sum_{\bV\in\mP_n(\mX|Q_Y)} p_{\bV}(Q_Y) \omega(Q_{XY}),   \label{Py-type}
\end{align}
where $k_{\bV}(y^n)$ is defined in \eqref{kv}. Under random coding,
\begin{align}\label{EcTV}
    \EcTVinEq &= \frac12 \sum_{y^n} \mathbb{E}_{\mC} \left|\tP_{Y^n|\mC}(y^n) - P_Y^n(y^n)\right| \notag\\
    &= \frac12 \sum_{Q_Y\in\mP_n(\mY)} \ \sum_{y^n\in\mT_{Q_Y}} \mathbb{E}_{\mC} \left|\sum_{\bV\in\mP_n(\mX|Q_Y)} \omega(Q_{XY}) \left( \frac{k_{\bV}(y^n)}{M} - p_{\bV}(Q_Y) \right) \right| \notag\\
    &\leq \frac12 \sum_{Q_Y\in\mP_n(\mY)} \ \sum_{y^n\in\mT_{Q_Y}} \ \sum_{\bV\in\mP_n(\mX|Q_Y)} 
        \omega(Q_{XY}) \ \mathbb{E}_{\mC} \left|\frac{k_{\bV}(y^n)}{M} - p_{\bV}(Q_Y)\right| \notag\\
    &\overset{a}{\leq} \sum_{Q_Y\in\mP_n(\mY)} \ \sum_{y^n\in\mT_{Q_Y}} \ \sum_{\bV\in\mP_n(\mX|Q_Y)} 
        \omega(Q_{XY}) \left( p_{\bV}(Q_Y) - \frac12 p_{\bV}(Q_Y) \min\{Mp_{\bV}(Q_Y), 1\} \right) \notag\\
    &\overset{b}{=} \sum_{Q_Y\in\mP_n(\mY)} \ \sum_{y^n\in\mT_{Q_Y}} 
        \left( P_Y^n(y^n) - \frac12 \sum_{\bV\in\mP_n(\mX|Q_Y)} \omega(Q_{XY}) \ p_{\bV}(Q_Y) \min\{Mp_{\bV}(Q_Y), 1\} \right) \notag\\
    &= 1 - \frac12 \sum_{Q_Y\in\mP_n(\mY)} \ \sum_{\bV\in\mP_n(\mX|Q_Y)} |\mT_{Q_Y}|\ \omega(Q_{XY}) \ p_{\bV}(Q_Y) \min\{Mp_{\bV}(Q_Y), 1\}.
\end{align}
where $(a)$ follows from Lemma \ref{lemma-binomial} because $k_{\bV}(y^n) \sim \text{Binomial}(M,p_{\bV}(Q_Y))$ according to its definition \eqref{kv}; and we identify $P_Y(y^n)$ in $(b)$ from \eqref{Py-type}. Furthermore, from \eqref{wQXY} and \eqref{pV(y)} we have
\begin{align*}
    |\mT_{Q_Y}| \ \omega(Q_{XY}) 
    &\overset{a}{\geq} (n+1)^{-|\mathcal{Y}|} \ 2^{-n[D(V\|W|Q_X) + H(V|Q_X) - H(Q_Y)]} \\
    &= (n+1)^{-|\mathcal{Y}|} \ 2^{-n[D(Q_{XY}\|P_{XY}) - D(Q_{XY}\|P_X Q_Y)]}, \\
    p_{\bV}(Q_Y) 
    &\overset{b}{\geq} (n+1)^{-|\mathcal{X}||\mathcal{Y}|} \ 2^{-n[D(Q_Y\bV\|P_X) + H(Q_Y\bV) - H(\bV|Q_Y)]} \\
    &= (n+1)^{-|\mathcal{X}||\mathcal{Y}|} \ 2^{-nD(Q_{XY}\|P_X Q_Y)},
\end{align*}
where $(a)$ and $(b)$ follow from general properties of types.
Then \eqref{EcTV} simplifies to
\begin{align*}
\EcTVinEq
    &\leq 1 - \frac12 (n+1)^{-\left(|\mathcal{X}|+1\right)|\mathcal{Y}|} \sum_{Q_{XY}\in\mP_n(\mX\mY)} 2^{-n\left[D(Q_{XY}\|P_{XY}) + |D(Q_{XY}\|P_X Q_Y) - R|^+ \right]} \\
    &\!\!\!\!\!\!\!\!\!\!\!\!\!\!\!\!\!\!
     \leq    1 - \frac12 (n+1)^{-\left(|\mathcal{X}|+1\right)|\mathcal{Y}|} \exp_2 \left\{ -n \min_{Q_{XY}\in\mP_n(\mX\mY)} \left[D(Q_{XY}\|P_{XY}) + \big|D(Q_{XY}\|P_X Q_Y) - R\big|^+ \right] \right\}.
\end{align*}
Taking $n\to\infty$ completes the proof of \eqref{thm-rc-a}.

Next, we prove the dual form \eqref{thm-rc-b}. Pick any $P_X\in\mS$ and introduce the corresponding joint distribution $P_{XY} = P_X W_{Y|X}$. In line with our notation convention, we also define the associated backward channel $\bW_{X|Y} := P_{XY}/P_Y \in \mP(\mX|\mY)$. For any other distribution $Q_{XY}$, it is straightforward to verify the following identities.
\begin{align}
    D(Q_{XY}\|P_{XY}) &= D(\bV\|\bW|Q_Y) + D(Q_Y\|P_Y), \label{D1} \\
    D(Q_{XY}\|P_X Q_Y) &= D(\bV\|\bW|Q_Y) + \iQ, \label{D2}
\end{align}
where $\iQ$ is defined in Definition \ref{def-iQ}. It is noteworthy that $\iQ$ satisfies the following property \cite[Cor.~1]{yagli2019exact}, which can be viewed as a corollary of \eqref{cuff-minQy}:
\begin{equation}\label{cuff-minV}
    \min_{\bV\in\mP(\mX|\mY)} \left\{D(\bV\|\bW|Q_Y) + \lambda\iQ\right\} 
    = - \mathbb{E}_{Q_Y} \left[ \log \left(\mathbb{E}_{\bW_{X|Y}}[2^{-\lambda \iota_{X;Y}}|Y] \right) \right]
\end{equation}
for any $\lambda\in\mathbb{R}$. Now consider the following chain of transformations:
\begin{align*}
\min_{Q_{XY}\in\mP(\mX\mY)} &\left\{D(Q_{XY}\|P_{XY}) + \big|D(Q_{XY}\|P_X Q_Y) - R\big|^+ \right\} \\
    &= \min_{Q_Y\in\mP(\mY)} \ \min_{\bV\in\mP(\mX|\mY)} \ \max_{\lambda\in[0,1]} 
        \big\{D(Q_{XY}\|P_{XY}) + \lambda\left[D(Q_{XY}\|P_X Q_Y) - R\right] \big\} \\
    &\overset{a}{=} \min_{Q_Y\in\mP(\mY)} \ \min_{\bV\in\mP(\mX|\mY)} \ \max_{\lambda\in[0,1]}
        \left\{ D(Q_Y\|P_Y) + (1+\lambda)D(\bV\|\bW|Q_Y) + \lambda \left[ \iQ - R \right] \right\} \\
    &\overset{b}{=} \min_{Q_Y\in\mP(\mY)} \ \max_{\lambda\in[0,1]} \ \min_{\bV\in\mP(\mX|\mY)} 
        \left\{ D(Q_Y\|P_Y) + (1+\lambda)D(\bV\|\bW|Q_Y) + \lambda \left[ \iQ - R \right] \right\} \\
    &\overset{c}{=} \min_{Q_Y\in\mP(\mY)} \ \max_{\lambda\in[0,1]}
        \left\{ D(Q_Y\|P_Y) - (1+\lambda) \mathbb{E}_{Q_Y} \left[ \log \left(\mathbb{E}_{\bW_{X|Y}} \big[2^{-\frac{\lambda}{1+\lambda} \iota_{X;Y}}|Y\big] \right) \right] - \lambda R \right\} \\
    &\overset{d}{=} \max_{\lambda\in[0,1]} \ \min_{Q_Y\in\mP(\mY)}
        \left\{ D(Q_Y\|P_Y) - (1+\lambda) \mathbb{E}_{Q_Y} \left[ \log \left(\mathbb{E}_{\bW_{X|Y}} \big[2^{-\frac{\lambda}{1+\lambda} \iota_{X;Y}}|Y\big] \right) \right] - \lambda R \right\} \\
    &\overset{e}{=} \max_{\lambda\in[0,1]} 
        \left\{-\log\left( \mathbb{E}_{P_Y} \left[ \mathbb{E}_{\bW_{X|Y}}^{1+\lambda} \big[2^{-\frac{\lambda}{1+\lambda} \iota_{X;Y}}|Y\big]\right] \right) - \lambda R \right\} \\
    &\overset{f}{=} \max_{\lambda\in[0,1]}
        \left\{ \lambda \left[ I_{\frac{1}{1+\lambda}}(P_X; W_{Y|X}) - R \right] \right\} \\
    &\overset{g}{=} \max_{\alpha\in[\frac12,1]}
        \left\{ \frac{1-\alpha}{\alpha} \left[ I_\alpha(P_X; W_{Y|X}) - R\right] \right\},
\end{align*}
where $(a)$ follows from \eqref{D1} and \eqref{D2}; $(b)$ follows from Lemma \ref{lem-sion} in Appendix \ref{app-opt}: the expression is continuous, and is convex in $\bV$ and linear in $\lambda$ so we can swap $\min_{\bV}$ and $\max_\lambda$; and $(c)$ follows from \eqref{cuff-minV}. In $(d)$, Lemma \ref{lem-sion} is invoked again: it is evident to observe that the expression is convex in $Q_Y$; however, its concavity in $\lambda$ is not obvious. We need to return to the right-hand side of $(b)$: fix $\bV$, the objective expression is linear in $\lambda$ (a trivial case of being concave), so the minimization over $\bV$ yields a concave function of $\lambda$. $(e)$ is a direct result of \eqref{cuff-minQy}; $(f)$ follows from Lemma \ref{lem-Ia-yagli}; and $(g)$ is obtained by setting $\alpha:= \frac{1}{1+\lambda}$. This completes the proof of \eqref{thm-rc-b}.
\end{proof}

\section{Proofs of Some Lemmas and Propositions}

\subsection{Properties of $\Ga(s,Q_X,R)$}\label{app-fs}

\begin{lemma}
\label{lem-fs}
Fix any $Q_X\in\mP(\mX)$ and $R\geq0$. For simplicity, write $f(s) := \Ga(s,Q_X,R)$, which is defined in \eqref{Ga-s-Qx-R}.
Then $f(s)$ is a continuous, non-increasing function of $s$. It is strictly decreasing on the interval $[0,s_0]$ for some $s_0\geq0$ and constant on $[s_0,\infty)$.
\end{lemma}

\begin{proof}
For convenience, define the objective function in $f(s)$ to be
$g(V) := D(Q_X V\|P_Y) + \big|I(Q_X;V) - R\big|^+ 
    = \max \big\{ D(Q_X V\|P_Y), D(V\|P_Y|Q_X) - R \big\}.$
Then $f(s) = \min_{V\in\mP(\mY|\mX): D(V\|W|Q_X)\leq s} g(V)$.
    
We first show continuity. Clearly, $g(\cdot)$ being the maximum of two continuous functions, is continuous. Also, $s\twoheadrightarrow\{D(V\|W|Q_X)\leq s\}$ a continuous, compact, and non-empty correspondence. Thanks to Lemma \ref{lem-berge} in Appendix \ref{app-opt}, $f(s)$ is continuous. 

Next, the non-increasing property of $f(s)$ is straightforward: the feasible set $D(V\|W|Q_X)\leq s$ expands as $s$ grows, and thus the minimization value can never increase. When $s=0$, the only feasible $V$ is $V=W$ and then $f(0) = D(Q_X W\|P_Y) + \big|I(Q_X;W) - R\big|^+ \geq 0$. On the other hand, note that $D(V\|W|Q_X)$ is bounded when $V\ll W$, so when $s\geq s_0 := \max_{V\in\mP(\mY|\mX): V\ll W} D(V\|W|Q_X)$, we have
$$ f(s) = \min_{V\in\mP(\mY|\mX): V\ll W} \left[ D(Q_X V\|P_Y) + \big|I(Q_X;V) - R\big|^+ \right], $$
which is a constant. Therefore, $f(s) \equiv f(s_0)$ on $[s_0,\infty)$. 

Now, before moving to the decreasing property on $[0,s_0]$, we prove the convexity of $f(s)$. Take $s_1, s_2\geq0$. Suppose that the optimizers for $f(s_1)$ and $f(s_2)$ are $V_1$ and $V_2$, respectively; that is,
$$ f(s_i) = g(V_i), \quad D(V_i\|W|Q_X)\leq s_i, \quad i = 1,2. $$
For any $t\in[0,1]$, take $s_t = t s_1 + (1-t) s_2$ and $V_t = t V_1 + (1-t) V_2$. By the convexity of $D(\cdot\|W|Q_X)$, we have 
$ D(V_t\|W|Q_X) = D \big([t V_1 + (1-t) V_2] \big\| W \big|Q_X \big) 
    \leq tD(V_1\|W|Q_X) + (1-t) D(V_2\|W|Q_X) 
    \leq t s_1 + (1-t) s_2 = s_t. $
Now since $D(V_t\|W|Q_X)\leq s_t$, we further have
\begin{align}
    f(s_t) &= \min_{V\in\mP(\mY|\mX): D(V\|W|Q_X)\leq s_t} g(V) \notag
    \leq g(V_t) = g\big([t V_1 + (1-t) V_2]\big) \notag\\
    &\overset{a}{\leq} tg(V_1) + (1-t)g(V_2) 
    = t f(s_1) + (1-t) f(s_2), \label{es-convex}
\end{align}
where $(a)$ follows from the convexity of $g(\cdot)$. To see this, observe that both $D(Q_X V\|P_Y)$ and $D(V\|P_Y|Q_X)$ are convex in $V$. Thus, $g(\cdot)$ being the maximum of two convex functions, is also convex. \eqref{es-convex} thereby indicates the convexity of $f(\cdot)$.

Equipped with convexity and the non-increasing monotonicity, $f(s)$ must be strictly decreasing on the interval $[0,s_0]$. It is also noteworthy that on this interval, the optimization of $V$ is achieved at the boundary where $D(V\|W|Q_X) = s$. Similar arguments can be found in the proof of \cite[Cor.~10.4]{csiszar2011information}. 
\end{proof}

\subsection{Proof of Proposition \ref{prop-nl-exact}}\label{app-nl-exact}
\begin{proof}
We only show the expression for $\El^\nl(R)$, while that for $E_\rc^\nl(R)$ follows from analogous reasoning. We apply the method in \cite[Cor.~10.4]{csiszar2011information} and first show the convexity of $\Eu^\nl(\cdot)$.
Let $Q_{Y1},Q_{Y2}$ be the optimizer for $\Eu^\nl(R_1)$ and $\Eu^\nl(R_2)$, respectively, i.e.,
$$ \Eu^\nl(R_i) = D(Q_{Yi}\|P_Y), \quad
D(Q_{Yi}\|P_Y) + H(Q_{Yi}) \geq R_i, \quad i = 1,2. $$
Take any $t\in[0,1]$. Note that $D(Q_Y\|P_Y) + H(Q_Y)$ is linear in $Q_Y$. Then $D(Q_Y\|P_Y) + H(Q_Y)\geq R$ implies that $D(tQ_Y\|P_Y) + H(tQ_Y)\geq tR$ for the unnormalized distribution $tQ_Y$. Thus,
\begin{align*}
    \El^\nl \left(tR_1 + (1-t)R_2\right)
    &= \min_{Q_Y\in\mP(\mY): D(Q_Y\|P_Y) + H(Q_Y) > tR_1 + (1-t)R_2 } D(Q_Y\|P_Y) \\
    &\leq D\left( \left[tQ_{Y1} + (1-t)Q_{Y2}\right] \|P_Y\right) \\
    &\overset{a}{\leq} t D(Q_{Y1}\|P_Y) + (1-t) D(Q_{Y2}\|P_Y) \\
    &= t \Eu^\nl(R_1) + (1-t) \Eu^\nl(R_2),
\end{align*}
where $(a)$ follows from the convexity of $D(\cdot\|P_Y)$. So far, we have shown that $\Eu^\nl(R)$ is convex in $R$.

Moreover, observe that $\Eu^\nl(R)$ is non-decreasing in $R$ and non-negative. Due to its convexity, $\Eu^\nl(R)$ is strictly increasing in $R$ in the interval where it is finite and positive. Then the optimization of $Q_Y$ must be achieved at the boundary:
\begin{equation}\label{Eu-nl-boundary}
    \Eu^\nl(R) = \min_{Q_Y\in\mP(\mY): D(Q_Y\|P_Y) + H(Q_Y) = R} D(Q_Y\|P_Y),
    \quad \text{when } R\leq H_{-\infty}(P_Y).
\end{equation}
Hence,
\begin{align*}
    \El^\nl(R) 
    &= \min_{Q_Y\in\mP(\mY)} \left\{ D(Q_Y\|P_Y) + \big| R - D(Q_Y\|P_Y) - H(Q_Y) \big|^+ \right\} \\
    &= \min_{R'\leq H_{-\infty}(P_Y)} \ \min_{Q_Y\in\mP(\mY): D(Q_Y\|P_Y) + H(Q_Y) = R'} \left\{ D(Q_Y\|P_Y) + \big| R - D(Q_Y\|P_Y) - H(Q_Y) \big|^+ \right\} \\
    &= \min_{R'\leq H_{-\infty}(P_Y)} \left\{ \Eu^\nl(R') + \big| R - R' \big|^+ \right\}
    \overset{a}{=} \min_{R'\leq R} \left\{ \Eu^\nl(R') + \big| R - R' \big|^+ \right\},
\end{align*}
where $(a)$ follows from the fact $\Eu^\nl(\cdot)$ is monotone increasing. 
\end{proof}

\subsection{Proof of Lemma \ref{lem-Gl0}}
\label{app-lem-Gl0}
\begin{proof}
Let $P_X^*$ be the minimizer of $\minI$, i.e., $\minI = I(P_X^*;W)$.

For \textit{(i)}, clearly $\Gl(R)$ is non-negative. Recall the expression of $\Ga(s,Q_X,R)$ in \eqref{Ga-s-Qx-R}. Take $Q_X=P_X^*$. Then $\Ga(s,P_X^*,R) \equiv 0$ for all $s\geq0$, because $V=W$ is the optimizer. Then $\Gl(R) \leq \min_{s\geq0} \max\big\{s, \Ga(s,P_X^*,R)\big\} = 0$, and hence $\Gl(R) = 0$.

For \textit{(ii)}, suppose not. Owing to \eqref{maxs=mins}, there exists some $Q_X^*\in\mP(\mX)$ such that $\Ga(s,Q_X^*,R) \equiv 0$ for all $s\geq0$. Take $s=0$, the only optimizer is $V^*=W$ and thus $0 = \Ga(0,Q_X^*,R) = D(Q_X^* W\|P_Y) + \big|I(Q_X^*;W) - R\big|^+$, indicating that $Q_X^*\in\mS$ and $I(Q_X^*;W)\leq R$. This contradicts the condition $R < \minI$.
\end{proof}

\subsection{Proof of Lemma \ref{lem-Gu0}}\label{app-lem-Gu0}
\begin{proof}
When $\minI \leq R \leq \maxI$, we can write $R = I(P_X;W)$ for some $P_X\in\mS$, since $I(\cdot \ ;W)$ is continuous. Taking $Q_{XY} = P_X W_{Y|X}$, we have $\Gl(R) \leq D(P_Y\|P_Y) + \big|R - \iota(P_X W_{Y|X})\big|^+ = 0$ because $\iota(P_X W_{Y|X}) = I(P_X;W) = R$. On the other hand, $\Gu(R)$ is non-negative, so $\Gu(R)=0$.
\end{proof}

\subsection{Proof of Lemma \ref{lem-nl-Euinf}}\label{app-lem-nl-Euinf}
\begin{proof}
Observe that 
$$ \max_{Q_Y\in\mP(\mY)} \big[ D(Q_Y\|P_Y) + H(Q_Y) \big] 
= \max_{Q_Y\in\mP(\mY)} \left[ -\sum_y Q_Y(y)\log P_Y(y) \right]
= - \min_y \log P_Y(y) 
= H_{-\infty}(P_Y), $$
where the last equality follows from Remark \ref{rmk-Hinf}. 
Therefore, when $R>H_{-\infty}(P_Y)$, the exponent $\Eu^\nl(R) = \infty$ is not well-defined. In fact, the bad set $\mB$ in \eqref{badset} becomes empty and hence we can only apply a trivial bound $\TVinText\geq0$, yielding an infinite exponent.
\end{proof}

\subsection{Proof of Lemma \ref{lem-N1N2}}\label{app-lem-N1N2}

\begin{proof}
Let $Q_Y^*:= \text{argmax}_{Q_Y\in\mP(\mY): D(Q_Y\|P_Y) + H(Q_Y) = R} H(Q_Y)$. Note that the optimization here has an equality constraint $D(Q_Y\|P_Y) + H(Q_Y) = R$. 

Observe that $R-D(Q_Y\|P_Y)$ is concave in $Q_Y$, with its maximum attained at $Q_Y = P_Y$. However, the feasible set of $N_2(R)$ does not contain $P_Y$. Then the maximization of $R-D(Q_Y\|P_Y)$ occurs at the boundary of the linear equation $D(Q_Y\|P_Y) + H(Q_Y) = R$, yielding that
\begin{align*}
    N_2(R) 
    &= \max_{Q_Y\in\mP(\mY): D(Q_Y\|P_Y) + H(Q_Y) = R} \left[R - D(Q_Y\|P_Y)\right] \\
    &= \max_{Q_Y\in\mP(\mY): D(Q_Y\|P_Y) + H(Q_Y) = R} H(Q_Y) 
    = H(Q_Y^*).
\end{align*}

Similarly, $H(Q_Y)$ is also concave in $Q_Y$, but the feasible set for $N_1(R)$ expands with $R$. This implies that the maximizer for $N_1(R)$ lies either on the boundary $D(Q_Y\|P_Y) + H(Q_Y) = R$ or at the uniform distribution $Q_Y = 1/|\mY|$. In summary,
$N_1(R) = H(Q_Y^*)$ or $N_1(R) = \log|\mY|$, and therefore $N_1(R) \geq N_2(R)$.
\end{proof}

\subsection{Proof of Lemma \ref{lem-equiv}}\label{app-lem-equiv}

\begin{proof}
For $(a)$, let $P_{Y^n,\hY^n,I}$ denote the joint distribution of the source sequence $Y^n$, the reconstruction sequence $\hY^n$, and the random index $I$ taking values in the set $\{1,\dots,M\}$; namely,
$$ P_{Y^n,\hY^n,I}(y^n,\hy^n,i) 
:= P_Y^n(y^n) \mathbbm{1}\{i = \mE(y^n)\} \ \mathbbm{1}\{\hy^n = \mD(i)\}. $$
Its marginal on $I$ determines a message distribution 
\begin{equation}\label{qi-source}
    q(i) := P_I(i) = \sum_{y^n} P_Y^n(y^n) \mathbbm{1}\{i = \mE(y^n)\}.
\end{equation}
We choose $q$ as the message distribution, and $w^{-1}\circ\mD$ as the covering code $\mC$. Since $w$ is a bijection, $w^{-1}$ is well-defined. The induced distribution in \eqref{tPyQ} then becomes
\begin{align*}
    \tPyQ(y^n) 
    &= \sum_{i=1}^M q(i) \sum_{x^n} \mathbbm{1}\{y^n = w(x^n)\} \mathbbm{1}\{x^n = w^{-1}(\mD(i))\} \\
    &= \sum_{i=1}^M \sum_{y^n} P_Y^n(y^n) \mathbbm{1}\{i = \mE(y^n)\} \sum_{x^n} \mathbbm{1}\{y^n = w(x^n)\} \mathbbm{1}\{w(x^n) = \mD(i)\} \\
    &= \sum_{\hy^n} P_{\hY^n}(\hy^n) \mathbbm1\{y^n = \hy^n\}.
\end{align*}
Here $P_{\hY^n}$ is the marginal of $P_{Y^n,\hY^n,I}$. Also, note that $P_Y^n$ is naturally the marginal of $P_{Y^n,\hY^n,I}$. We obtain that
\begin{align*}
    \left\| \tPyQ - P_Y^n \right\|_1 
    &= \left\| \sum_{\hy^n} P_{\hY^n}(\hy^n) \mathbbm1\{Y^n = \hy^n\} - P_Y^n \right\|_1 \\
    &\leq \left\|P_{\hY^n} \mathbbm1\{Y^n = \hY^n\} - P_{Y^n,\hY^n} \right\|_1 \\
    &= \sum_{y^n,\hy^n} \left| P_{\hY^n}(y^n) \mathbbm1\{y^n = \hy^n\} - P_{Y^n,\hY^n}(y^n,\hy^n) \right| \\
    &= \sum_{y^n = \hy^n} \left[ P_{\hY^n}(y^n) - P_{Y^n,\hY^n}(y^n,\hy^n) \right] + \sum_{y^n\neq\hy^n} P_{Y^n,\hY^n}(y^n,\hy^n) \\
    &= 1 - \Pr\{Y^n=\hY^n\} + \Pr\{Y^n\neq\hY^n\} 
    = 2\Pe,
\end{align*}
where the inequality follows from the monotonicity of the total variation.

For $(b)$, let $\mC:\mM\to\mX^n$ be the covering code. Define 
\begin{align*}
    \mJ &= \{y^n\in\mY^n: 
    \text{there exists } i\in\mM \text{ such that } y^n = w(\mC(i)) \} 
    = \{y^n\in\mY^n: \tPyQ(y^n) > 0 \}
\end{align*}
We choose our encoder and decoder in the lossless source coding scheme to be 
\begin{align*}
    \mE(y^n) &= 
        \begin{cases}
            i \text{ for some } i\in\mM \text{ such that } y^n = w(\mC(i)) & \text{if } y^n\in\mJ \\
            0 & \text{if } y^n\notin\mJ
        \end{cases}, \\
    \mD(i) &= 
        \begin{cases}
            w(\mC(i)) & \text{if } i\in \mM\cap\mE(\mY^n) \\
            \text{declare error} & \text{if } i = 0 
        \end{cases}.
\end{align*}
It is noteworthy that the given soft covering code $\mC$ may contain repeated codewords. As a result, for some $y^n\in\mJ$, there exists more than one index $i$ such that $y^n = w(\mC(i))$, and we just select one representative index among them as our source coding encoder $\mE(y^n)$. Consequently, only $y^n\in\mJ$ be reconstructed correctly. We obtain that
\begin{align*}
    \left\| \tPyQ - P_Y^n \right\|_1 
    \geq \sum_{y^n\notin\mJ} P_Y^n(y^n) 
    = \Pe.
\end{align*}
Observe that the lossless source coding scheme contains $M+1$ indices (some of them may be unused if $\mC$ contains repeated codewords), and the rate is $\frac{1}{n}\log(M+1)\to R$ for sufficiently large $n$. 
\end{proof}

\subsection{Proof of Lemma \ref{lem-Eu0}}\label{app-lem-Eu0}
\begin{proof}
Fix $P_X\in\mS$ and define
$$ \Eu(R, P_X) := \min_{V\in\mP(\mY|\mX): \iota(P_X V_{Y|X})\geq R} D(V\|W|P_X). $$
Let $V^*$ be its optimizer, i.e., $\Eu(R, P_X) = D(V^*\|W|P_X)$ with $\iota(P_X V^*_{Y|X})\geq R$.
When $R\leq I(P_X;W)$, we have $V^*= W$ and $\Eu(R, P_X) = 0$.
When $R> I(P_X;W)$, note that $\iota(P_X W_{Y|X}) = I(P_X;W)$, so $V^*\neq W$ and hence $\Eu(R, P_X) > 0$.
When $R > \max_{V\in\mP(\mY|\mX)} \iota(P_X V_{Y|X})$, $V^*$ does not exist; we can only apply a trivial bound $\frac12 \big\| \tPyQ - P_Y^n \big\|_1 \geq 0$ and the exponent is $\Eu(R, P_X) = \infty$.

In view that $\Eu(R) = \max_{P_X\in\mS} \Eu(R, P_X) $, the above turning points, including the one from zero to nonzero value, and the one from finite to infinite value, of $\Eu(R)$ are obtained via minimizing over $P_X\in\mS$.
\end{proof}

\section{Optimization Theorems}\label{app-opt} 

In this appendix, we list two optimization theorems used in this paper for the reader’s reference.

\begin{lemma}[Sion minimax theorem\cite{komiya1988elementary}]\label{lem-sion}
    Let $\mX$ be a compact and convex subset of a linear vector space, and let $\mY$ be a convex subset of a linear vector space. Let $F \colon \mX\times\mY \to \mathbb{R}$ be an real-valued function such that
    \begin{itemize}
        \item [1)] The function $F(\cdot,y) \colon \mX\to \mathbb{R} $ is lower semi-continuous and quasi-convex
        on $\mX$ for every $y\in\mY$;
        \item [2)] The function $F(x,\cdot) \colon \mY\to \mathbb{R} $ is upper semi-continuous and quasi-concave
        on $\mY$ for every $x\in\mX$.
    \end{itemize}
    Then
    $$ \min_{x\in\mX} \sup_{y\in\mY} F(x,y) 
    = \sup_{y\in\mY} \min_{x\in\mX} F(x,y). $$
\end{lemma}

\begin{lemma}[Berge maximum theorem\text{\cite[Item~17.31]{aliprantis2006infinite}}]\label{lem-berge}
    Let $\mX$ and $\mY$ be two topological spaces and $\varphi\colon\mX\twoheadrightarrow\mY$ be a continuous and compact correspondence. Suppose $f\colon\mathrm{Graph}(\varphi)\to\mathbb{R}$ is continuous. Then the value function $m(x) = \max_{y\in\varphi(x)}f(x,y)$ is continuous.
\end{lemma}


\bibliographystyle{IEEEtran}
\bibliography{ref}

\end{document}